\begin{document}

% paper title
\title{ Buffer-Aided Relaying with Adaptive Link Selection | Fixed and Mixed Rate Transmission}
\author{Nikola~Zlatanov   and Robert Schober 
\thanks{N. Zlatanov and R. Schober are with the Department of Electrical and Computer Engineering, University of British Columbia (UBC), Vancouver, BC, V6T 1Z4,
Canada, E-mail: zlatanov@ece.ubc.ca, rschober@ece.ubc.ca}
%\thanks{Digital Object Identif\-ier 10}
} 
%\markboth{Revision submitted for possible publication in the
%IEEE Transactions on Information Theory}{Shell
%\MakeLowercase{\textit{et al.}}: Bare Demo of IEEEtran.cls for
%Journals}

\maketitle

\begin{abstract}
We consider a  simple network consisting of a source, a half-duplex decode-and-forward relay with a buffer, and a destination. We assume that the direct source-destination link is not available and all links undergo fading.
We propose two new buffer-aided relaying schemes with different requirements regarding the availability of channel state information at the transmitter (CSIT). In the first scheme, neither the source nor the relay have full CSIT,
and consequently, both nodes are forced to transmit with fixed rates. In contrast, in the second scheme,  the source does not have full CSIT and transmits with fixed rate but the relay has full CSIT and adapts its transmission rate 
accordingly. In the absence of delay constraints, for both fixed rate and mixed rate transmission, we  derive the throughput-optimal buffer-aided relaying protocols which select either the source or the relay for transmission based on the 
instantaneous signal-to-noise ratios (SNRs) of the source-relay and relay-destination links. In addition, for the delay constrained case, we develop buffer-aided relaying protocols that achieve a predefined average delay. 
Compared to conventional relaying protocols, which select the transmitting node according to a predefined schedule independent of the instantaneous link SNRs, the proposed buffer-aided protocols with adaptive link selection
achieve large performance gains. In particular, for fixed rate transmission, we show that the proposed protocol achieves a diversity gain of two as long as an average delay of more than three time slots can be afforded. Furthermore, 
for mixed rate transmission with an average delay of $E\{T\}$ time slots, a multiplexing gain  of $r=1-1/(2E\{T\})$ is achieved. As a by-product of the considered link adaptive protocols, we also develop a novel conventional relaying protocol
for mixed rate transmission, which yields the same multiplexing gain as the protocol with adaptive link selection. Hence, for mixed rate transmission, for sufficiently large average delays, buffer-aided half-duplex relaying
with and without adaptive link selection does not suffer from a multiplexing gain loss compared to full-duplex relaying. 
\end{abstract}

%\nonumfootnote{This work was supported in part by the NSERC Special
%Research Opportunity Program (SROPJ 305821-05).}

\IEEEpeerreviewmaketitle
 
\newtheorem{theorem}{Theorem}
\newtheorem{lemma}{Lemma}
\newtheorem{corollary}{Corollary}
\newtheorem{remark}{Remark}
\newtheorem{definition}{Definition}
\newtheorem{proposition}{Proposition}
%%%%%%%%%%%%%%%%%%%%%%%%%%%%%%%%%%%%%%%%%%%%%%%%%%%%%%%%%%%%%%%%%
\section{Introduction}
%%%%%%%%%%%%%%%%%%%%%%%%%%%%%%%%%%%%%%%%%%%%%%%%%%%%%%%%%%%%%%%%%
Node cooperation can introduce significant throughput and diversity gains in wireless networks. The relay channel was first investigated by van der Meulen \cite{meulen}. Later Cover and El Gamal \cite{cover} investigated the memoryless three-node relay channel 
consisting of a source, a destination, and a single full-duplex relay and  proved that cooperative systems offer throughput gains compared to non-cooperative systems. This work was later extended to systems employing a half-duplex relay in fading environments 
for the case when the relay has a predetermined schedule for reception and transmission \cite{1435648}. For the case of fixed rate transmission, the outage probability of the three-node relay network was shown to be  superior to non-relay aided transmission in \cite{erkip1,erkip2}. Subsequently, in \cite{laneman1}, a simple protocol for the three-node relay network, which requires feedback from the receiver, was shown to achieve a diversity order of two in Rayleigh fading if the direct source-destination link is available for transmission. 
These early contributions have sparked a significant interest in cooperative communication techniques which resulted in many new discoveries, e.g.,~\nocite{laneman2,nostrina,hunter2006diversity,1542409,eth,Bletsas06,4305421,1499041,4305399}\cite{laneman2}-\cite{4305399}. 
%%%%%%%%%%%%%%%%%%%%%%%%%%%%%
\subsection{Background and Related Work}
%%%%%%%%%%%%%%%%%%%%%%%%%
 In practice, half-duplex relays may be preferred as they are easier to implement than full-duplex relays. However, half-duplexing suffers from a multiplexing gain loss compared to full duplexing. To compensate for this loss, existing protocols for the wireless three-node 
 network with a half-duplex relay exploit  the direct source-destination link to achieve a  throughput gain or a diversity gain over non-relay aided transmission, e.g.,~\cite{1435648}-\cite{eth}. In practice, because of the typically large distance between source and destination, the
 direct source-destination link may be very  weak and the gains may manifest themselves only at very high signal-to-noise ratios (SNRs). However, if a source-destination link is not available, much of the gains  obtained by half-duplex relaying disappear. 
There are two reasons for this.  First, in most of the existing literature, e.g.,~\cite{1435648}-\cite{eth}, the schedule of when the source transmits and when the relay transmits is a priori fixed. Typically, the relay receives a codeword from the source in one time slot and forwards some 
information about the received codeword to the destination in the next time slot. We refer to this approach in the following as ``conventional relaying". Second, even if the relay has channel state information at the transmitter (CSIT),  it does not exploit this information for
rate adaptation, see, e.g.,~\cite{Bletsas06}. In this paper, we propose relaying protocols that select the transmitting node based on the quality of the source-relay and the relay-destination links, i.e., the schedule of transmission is not a priori fixed. For this to be possible, the
relays have to be equipped with buffers for data storage, the node performing the selection of the transmitting node requires some channel state information (CSI) of both involved links, and feedback of a few bits of information from the node performing the selection
to the transmitting node is necessary. Furthermore, we assume that if the relay has CSIT, it exploits this knowledge to adapt the transmission rate over the relay-destination channel. 

Relays with buffers have been considered in the literature before \nocite{XFTP08,Aissa11_J,5205645,5934676,5199121} \cite{XFTP08}-\cite{5199121}. In \cite{XFTP08}, the buffer at the relay is used to enable the relay to receive for a fixed number of time slots before 
retransmitting the received information in a fixed number of time slots. In \cite{Aissa11_J}, relay selection is considered and buffers enable the selection of the relay with the best source-relay channel for reception and the best relay-destination channel for transmission. 
However, in both \cite{XFTP08} and \cite{Aissa11_J}, the schedule of when the source transmits and when the relays transmit is a priori fixed. Thus, these schemes do not achieve a diversity gain compared to conventional relaying. 
Buffer-aided relaying schemes, where the schedule of when the source transmits and when the relay transmits is not a priori fixed, are considered in \cite{5205645}-\cite{5199121}. In \cite{5205645}, the authors propose a protocol for relay selection in a network 
employing multiple mobile relays with buffers. The protocol  operates in one of the following three modes: 1) If there are relay-destination links whose SNR is sufficiently high for successful transmission and the corresponding relays have packets in their buffers, a single relay is chosen to 
transmit to the destination;  else 2) if there are source-relay links with sufficiently high SNR, the source is selected for transmission; else 3) none of the nodes transmits.  Furthermore, \cite{5934676} considers a diamond cooperative network with two relays and buffering at the
relays is used only when:  1) The instantaneous SNRs of both source-relay links are smaller than some predefined threshold while the instantaneous SNR of at least one of the relay-destination links is larger than the threshold, or 2) the instantaneous SNRs of both relay-destination  
links are smaller than the threshold while the instantaneous SNR of at least one of the source-relay links  is  larger than the threshold. Moreover, the authors in \cite{5199121} introduce a relay selection scheme for a network employing multiple relays with buffers. 
In this scheme, the schedule of when a relay receives and transmits depends on the number of packets in the relay's buffer and the instantaneous SNRs of the source-relay and relay-destination links. Although the protocols proposed in \cite{5205645}-\cite{5199121} yield a throughput 
gain over conventional relaying, they were derived based on heuristics, and are thus generally not optimal as far as throughput maximization and/or outage probability minimization are concerned. Consequently, these protocols do not fully exploit the degrees 
of freedom offered by relays with buffers.

For the case of adaptive rate transmission, the maximum achievable throughput of the simple three-node relay network employing a half-duplex decode-and-forward relay with a buffer was recently derived in \cite{BA-relaying-adaptive-rate,globe11}. 
Thereby, both the source and the relay were assumed to adjust their transmission rate such that outages are avoided. However, adjusting the rate of transmission is not possible if CSIT  is not available and/or only one modulation/coding scheme is 
implemented. In these cases, the protocol proposed in \cite{BA-relaying-adaptive-rate,globe11} is not applicable. Some preliminary results on buffer-aided relaying for fixed rate transmission have been presented in \cite{globe11} and independently in \cite{globe11a}. However, although \cite{globe11,globe11a} demonstrate that the simple three-node network with one buffer-aided relay and without direct source-destination link can achieve a diversity order of two in Rayleigh fading, the protocols 
adopted in \cite{globe11,globe11a} are suboptimal. Specifically, the protocol in  \cite{globe11} employs a suboptimal decision function for link selection, and the protocol in \cite{globe11a} only considers the instantaneous link SNRs for link selection
but does not take into account the average link SNRs, which may lead to low throughputs for non-identical average link SNRs. The idea of adaptive link selection in \cite{globe11} was extended to relay selection in \cite{Krikidis}, where  a suboptimal decision function 
exploiting the instantaneous link SNRs only was employed for link selection. We note that for the case of one relay  and identical average link SNRs, the fixed rate schemes in  \cite{globe11, globe11a, Krikidis} are all identical.
Furthermore, for mixed rate transmission, where the source transmits with fixed rate but the relay can adjust its rate to the channel conditions, some preliminary results have been reported for buffer-aided relaying in \cite{iwcmc_paper}. 
Here, we extend the protocol in \cite{iwcmc_paper} to the case of power allocation and propose a new protocol for conventional mixed rate relaying with delay constraints.
%%%%%%%%%%%%%%%%%%%%%%%%%%%%
\subsection{Contributions}
%%%%%%%%%%%%%%%%%%%%%%%%%%%%
In this paper, we consider the simple three-node relay network with a half-duplex decode-and-forward relay, which is equipped with a buffer, and assume that the direct source-destination link is not available for transmission. We assume that both the source-relay and the relay-destination
links are affected by fading. Depending on  the availability of CSIT at the transmitting nodes (and their capability of using more than one modulation/coding scheme),  we consider two different modes of transmission for the relay network: \textit{Fixed rate transmission} and 
\textit{mixed rate transmission}.  In both modes of transmission, each codeword spans one time slot . In fixed rate transmission, the node selected for transmission (source or relay) does not have CSIT and transmits with fixed rate. In contrast, in mixed rate transmission, the relay has CSIT knowledge and exploits it to transmit with variable rate so that outages are avoided. However, the source still transmits with fixed rate to avoid the need for CSIT acquisition. 

To explore the performance limits of the proposed fixed rate and mixed rate transmission schemes, we consider first transmission without delay constraints and derive the corresponding optimal buffer-aided relaying protocols. Since in practice it is desirable to limit the transmission delay, 
we also introduce modified buffer-aided relaying protocols for delay constrained transmission. In particular, we make the following main contributions:
\begin{itemize}
\item For fixed rate and mixed rate transmission without delay constraints, we derive the optimal buffer-aided relaying protocols which maximize the achievable throughput of the considered three-node relay network employing a half-duplex relay with a buffer of infinite size.
\item For fixed rate transmission, we show that in Rayleigh fading the optimal buffer-aided relaying protocol with adaptive link selection achieves a diversity gain of two and a diversity-multiplexing tradeoff of $DM(r)=2(1-2r)$,  where $r$ denotes the multiplexing gain.
\item For mixed rate transmission, we show that a multiplexing gain of one can be achieved with buffer-aided relaying with and without adaptive link selection implying that there is no multiplexing gain loss compared to ideal full-duplex relaying. 
\item For fixed rate and mixed rate transmission with delay constraints, in order to control the average delay, we introduce appropriate modifications to the buffer-aided relaying protocols for the delay unconstrained case. Surprisingly, for fixed rate transmission, the
full diversity gain is preserved as long as the tolerable average delay exceeds three time slots.  For mixed rate transmission with an average delay of $E\{T\}$ time slots, a multiplexing gain of $r=1-1/(2E\{T\})$ is achieved. 
\end{itemize} 
%%%%%%%%%%%%%%%%%%%%%%%%
\subsection{Organization}
%%%%%%%%%%%%%%%%%%%%%%%%%
The remainder of this paper is organized as follows. In Section \ref{sys-mod}, the system model of the considered three-node relay network is presented. In Sections \ref{sec-3} and \ref{sec-delay}, we introduce the proposed buffer-aided relaying protocols 
for delay unconstrained and delay constrained fixed rate transmission, respectively. Protocols for delay unconstrained and delay constrained mixed rate transmission are proposed and analyzed in Section \ref{sec-hybrid}. The derived analytical results and relay 
protocols are verified and illustrated with numerical examples in Section \ref{numerics}, and some conclusions are drawn in Section \ref{conclude}.
%%%%%%%%%%%%%%%%%%%%%%%%%%%%%%%%%%%%%%%%%%%%%%%%%%%%%%%%%%%%%%%%%
\section{System Model and Benchmark Schemes}\label{sys-mod}
%%%%%%%%%%%%%%%%%%%%%%%%%%%%%%%%%%%%%%%%%%%%%%%%%%%%%%%%%%%%%%%%%
%%%%%%%%%%%%%%%%%%%%%%%%%%%%%%%%%%%%%%%%%%%%%%%%%%%%%%%%%%%%%%%%%
We consider a three-node wireless network comprising a source $\mathcal{S}$, a half-duplex decode-and-forward relay $\mathcal{R}$, and a destination $\mathcal{D}$, cf.~Fig.~\ref{fig0}. The source can communicate with the destination only through the relay, i.e., 
there is no direct $\mathcal{S}$-$\mathcal{D}$ link. The source sends codewords  to the relay, which decodes these codewords, possibly stores the decoded information in its buffer, and eventually sends it to the destination.
We assume that time is divided into slots of equal lengths and every codeword spans one time slot. Throughout this paper, we assume that the source node has  always data to transmit. Hence, the total number of time slots, denoted by $N$, satisfies $N\to\infty$. 
Furthermore, unless specified otherwise, we assume that the buffer at the relay is not limited in size. The case of limited buffer size will be investigated in Sections \ref{sec-delay} and \ref{mixed-delay}.
\begin{figure}
\includegraphics[width=3.5in]{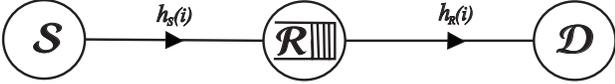}
\vspace*{-7mm}
\caption{System model for three node relay network employing a half-duplex decode-and-forward relay. The relay is equipped with a buffer to store the packets received from the source.} \label{fig0}
%\vspace*{-3mm}
\end{figure}
%%%%%%%%%%%%%%%%%%%%%%%%%%%%%%%%%%%%%%%%%%%%%%%%%%%%%%%%%%%%%%%%%%%%%
\subsection{Channel Model}\label{chan-mod}
%%%%%%%%%%%%%%%%%%%%%%%%%%%%%%%%%%%%%%%%%%%%%%%%%%%%%%%%%%%%%%%%%%%%
 In the $i$th time slot, the transmit powers of source and relay are denoted by $\mathcal{P_S}(i)$ and  $\mathcal{P_R}(i)$, respectively, and the instantaneous squared channel gains of the 
$\mathcal{S}$-$\mathcal{R}$ and $\mathcal{R}$-$\mathcal{D}$ links are denoted by $h_S(i)$ and $h_R(i)$, respectively. $h_S(i)$ and $h_R(i)$ are modeled as mutually independent, non-negative, stationary, and ergodic random processes 
with expected values $ \bar\Omega_S\triangleq E\{h_S(i)\}$ and $\bar\Omega_R\triangleq E\{h_R(i)\}$, where $E\{\cdot\}$ denotes expectation. We assume that the channel gains are constant 
during one time slot but change from one time slot to the next due to, e.g.,~the mobility of the involved nodes and/or frequency hopping. We note that for most results derived in this paper, we only require $h_S(i)$ and $h_R(i)$ to be not
fully temporally correlated, respectively. However, in some cases, we will assume that $h_S(i)$ and $h_R(i)$ are temporally uncorrelated, respectively, to facilitate the analysis.

The instantaneous SNRs of the $\mathcal{S}$-$\mathcal{R}$ and $\mathcal{R}$-$\mathcal{D}$ channels in the $i$th time slot are given by $s(i)\triangleq \gamma_S(i)  h_S(i)$ and $r(i)\triangleq \gamma_R(i)  h_R(i)$, respectively. Here, 
$\gamma_S(i)\triangleq \mathcal{P_S}(i)/\sigma_{n_R}^2$ and $\gamma_R(i)\triangleq \mathcal{P_R}(i)/\sigma_{n_D}^2$ denote the average transmit SNRs of the source and the relay, respectively, and $\sigma_{n_R}^2$ and $\sigma_{n_D}^2$ are the variances 
of the additive white Gaussian noise (AWGN) at the relay and  the destination, respectively. The average link SNRs are denoted by $\Omega_S\triangleq E\{s(i)\}$ and  $\Omega_R\triangleq E\{r(i)\}$. 

Furthermore, for concreteness, we specialize some of the derived results to Rayleigh fading. In this case, the probability density functions (pdfs) of $s(i)$ and $r(i)$ are given by $f_{s}(s)=e^{-s/\Omega_S}/\Omega_S$ 
and $f_{r}(r)=e^{-r/\Omega_R}/\Omega_R$, respectively. Similarly, the pdfs of $h_S(i)$ and $h_R(i)$ are given by $f_{h_S}(h_S)=e^{-h_S/\bar\Omega_S}/\bar\Omega_S$ and $f_{h_R}(h_R)=e^{-h_R/\bar\Omega_R}/\bar\Omega_R$, respectively.
%%%%%%%%%%%%%%%%%%%%%%%%%%%%%%%%%%%%%%%%%%%%%%%%%%%%%%%%%%%%%%%%%%%%%%%%%%%%% 
\subsection{Link Adaptive Transmission Protocol\label{s22}}
%%%%%%%%%%%%%%%%%%%%%%%%%%%%%%%%%%%%%%%%%%%%%%%%%%%%%%%%%%%%%%%%%%%%%%%%%%%%%
For the proposed link adaptive transmission protocol, we assume that the relay selects which node (source or relay) transmits in a given time slot. To this end, the relay is assumed to know  the statistics of the $\mathcal{S}$-$\mathcal{R}$ and
$\mathcal{R}$-$\mathcal{D}$ channels. Since the statistics change much more slowly than the instantaneous channel gains, the overhead necessary to acquire them  is low. Furthermore, to be able to perform coherent detection, relay and destination have to acquire $h_S(i)$ and $h_R(i)$, respectively, based on pilot symbols emitted by the source and relay, respectively.  Whether or not the relay is assumed 
to have knowledge of $h_R(i)$ for adaptive link selection depends on the mode of transmission. Furthermore, depending on the mode of transmission, relay and/or destination may require knowledge of the (fixed) transmission rates, (fixed) transmit powers, 
and noise variances $\sigma_{n_R}^2$ and $\sigma_{n_D}^2$.
%
%Since the noise variances $\sigma_{n_R}^2$ and $\sigma_{n_D}^2$ are constant, we assume that they are known at both the relay and the destination. 
%%%%%%%%%%%%%%%%%%%%%
\subsubsection{Fixed Rate Transmission} 
For fixed rate transmission, neither the source nor the relay have full CSIT, i.e., source and relay do not know $h_S(i)$ and $h_R(i)$, respectively. 
Therefore, both nodes can transmit only with predetermined fixed rates $S_0$ and $R_0$, respectively, and cannot perform power allocation, i.e., the transmit powers are 
a priori fixed as $\mathcal{P_S}(i)=\mathcal{P_S}$ and $\mathcal{P_R}(i)=\mathcal{P_R}$, $\forall i$. For the relay to be able to decide which node should transmit, it requires knowledge 
of the outage states of the $\mathcal{S}$-$\mathcal{R}$ and $\mathcal{R}$-$\mathcal{D}$ links. The relay can determine whether or not the $\mathcal{S}$-$\mathcal{R}$ link is in outage 
based on $S_0$, $\mathcal{P_S}$, $\sigma_{n_R}^2$, and $h_S(i)$. The destination can do the same for the $\mathcal{R}$-$\mathcal{D}$ link based on $R_0$, $\mathcal{P_R}$, $\sigma_{n_D}^2$, 
and $h_R(i)$, and inform the relay whether or not the $\mathcal{R}$-$\mathcal{D}$ link is in outage using one bit of feedback. Based on the outage states of the $\mathcal{S}$-$\mathcal{R}$ 
and $\mathcal{R}$-$\mathcal{D}$ links in a given time slot $i$ and the statistics of both links, the relay selects the transmitting node according to the adaptive link selection protocols 
introduced in Sections \ref{sec-3} and \ref{sec-delay}, and informs the source and destination about its decision.
%%%%%%%%%%%%%%%%%%%%%
\subsubsection{Mixed Rate Transmission} For this mode of transmission,  we assume that the relay has full CSIT, i.e., it knows $h_R(i)$, and can therefore adjust its transmission rate and transmit power 
$\mathcal{P_R}(i)$ to avoid outages on the $\mathcal{R}$-$\mathcal{D}$ link. However, the source still  does not have CSIT and therefore has to transmit with fixed rate $S_0$ and fixed power $\mathcal{P_S}$ 
as it does not know $h_S(i)$. Similar to the fixed rate case, the relay can determine the outage state of the $\mathcal{S}$-$\mathcal{R}$ link  based on $S_0$, $\mathcal{P_S}$, $\sigma_{n_R}^2$, and $h_S(i)$. 
However, different from the fixed rate case, in the mixed rate transmission mode, the relay also has to estimate $h_R(i)$, e.g.,~based on pilot symbols emitted by the destination. Based on  the outage state of the $\mathcal{S}$-$\mathcal{R}$ link and $h_R(i)$, and on   the 
  statistics of both links, the relay selects the transmitting node according to the adaptive link selection protocols proposed in Section \ref{sec-hybrid}, and informs the source and 
destination about its decision.
 
 For both modes of transmission, the relay knows the outage state of the $\mathcal{S}$-$\mathcal{R}$ and the $\mathcal{R}$-$\mathcal{D}$ links. Hence, if the relay is selected for transmission but the 
$\mathcal{R}$-$\mathcal{D}$ link is in outage, the relay remains silent and an outage event occurs. Whereas, if the source is selected for transmission and the  $\mathcal{S}$-$\mathcal{R}$  link is in outage, 
the relay informs the source accordingly and the source remains silent, i.e., again an outage event occurs.
Once the decision regarding the transmitting node has been made, and the relay has informed the source and the destination accordingly, transmission in time slot $i$ begins.
%The optimization of the buffer-aided link selection protocol for fixed and mixed rate transmission will be discussed in Sections \ref{sec-3} and \ref{sec-hybrid}, respectively. 

\begin{remark}
We note that fixed rate transmission requires only two emissions of pilot symbols (by source and relay).  In contrast, mixed rate transmission requires three emissions of pilot symbols (by source, relay, and destination).
Thus, the CSI requirements and feedback overhead of the buffer-aided link selection protocols proposed in this paper are similar to those of existing relaying protocols, such as the opportunistic protocol  proposed in 
\cite{Bletsas06}. Namely, the protocol proposed in \cite{Bletsas06} requires  the relays to acquire the instantaneous CSI  of the $\mathcal{S}$-$\mathcal{R}$ and  $\mathcal{R}$-$\mathcal{D}$ links. Furthermore, a few bits of 
information are fed back from the relays to both the source and the destination.
\end{remark} 

%\begin{remark} this would be a waste of resources ...
%We note that the proposed protocol will also work and give identical results if the transmitting nodes are not silenced during outages, instead we have a scheme in which, for every received  undecodable codeword, the receiving node informs the transmitting node that it received a undecodable codeword (e.g., using existing protocols with ACK/NACK packets).
%\end{remark}
%%%%%%%%%%%%%%%%%%%%%%%%%%%%%%%%%%%%%%%%%%%%%%%%%%%%%%%%%%%%%%%%%%%%%%%%%%%%% 
\subsection{Queue at the Relay}
%%%%%%%%%%%%%%%%%%%%%%%%%%%%%%%%%%%%%%%%%%%%%%%%%%%%%%%%%%%%%%%%%%%%%%%%%%%%% 
Crucial for derivation of the proposed link selection protocols is a clear understanding of the dynamics of the queue in the buffer of the relay.
In the following, for convenience, we normalize the number of bits transmitted in one time slot to the number of symbols per time slot. 
Thus, throughout the remainder of this paper,  when we refer to the number of bits, we mean the  number of bits normalized by the number of symbols in a codeword. 
%%%%%%%%%%%%%%%%%%%%%%%%%%%%%%%%%%%%%%%%

If the source is selected for transmission in time slot $i$ and an outage does not occur, i.e., $\log_2\big(1+s(i)\big)\geq S_0$, it transmits with rate $S_{\mathcal{S}\mathcal{R}}(i)=S_0$.
 Hence, the relay 
receives $S_0$ data bits from the source and appends them to the queue in its buffer. The number of bits in the buffer of the relay at the end of the $i$-th time slot is denoted by $Q(i)$ and given by 
\begin{eqnarray}\label{eq2}
    Q(i)=Q(i-1)+ S_0.
\end{eqnarray}
If the source is selected for transmission but the $\mathcal{S}$-$\mathcal{R}$ link is in outage, i.e., $\log_2\big(1+s(i)\big)<S_0$, the source remains silent, i.e., $S_{\mathcal{S}\mathcal{R}}(i)=0$, and the queue in the buffer remains unchanged, i.e., $Q(i)=Q(i-1)$.

For fixed rate transmission, if the relay is selected for transmission in time slot $i$ and transmits with rate $R_0$, an outage does not occur if $\log_2\big(1+r(i)\big)\geq R_0$. In this case, the number of bits transmitted by the relay is given by
\begin{eqnarray}\label{3-lita}
    R_{\mathcal{R}\mathcal{D}}(i)=\min\{R_0,Q(i-1)\},
\end{eqnarray}
where we take into account that the maximum number of bits that can be send by the relay is limited by the number of bits in the buffer. The number of data bits remaining in the buffer at the end of time slot $i$ is given by
\begin{eqnarray}\label{eq4}
    Q(i)=Q(i-1)-R_{\mathcal{R}\mathcal{D}}(i),
\end{eqnarray}
which is always non-negative because of (\ref{3-lita}). 
If the relay is selected for transmission in  time slot $i$ but an outage occurs, i.e., $\log_2\big(1+r(i)\big)<R_0$, the relay remains silent, i.e., $R_{\mathcal{R}\mathcal{D}}(i)=0$, while the queue in the buffer remains unchanged, i.e.,  $Q(i)=Q(i-1)$.

For mixed rate transmission, the relay is able to adapt its rate to the capacity of the $\mathcal{R}$-$\mathcal{D}$ channel, $\log_2(1+r(i))$, and outages are avoided. If the relay is selected for transmission in time slot $i$, the number of bits transmitted by  
the relay is given by
\begin{eqnarray}\label{3-lit}
    R_{\mathcal{R}\mathcal{D}}(i)=\min\{\log_2(1+r(i)),Q(i-1)\}.
\end{eqnarray}
The number of data bits remaining in the buffer at the end of time slot $i$ is still given by (\ref{eq4}) where $R_{\mathcal{R}\mathcal{D}}(i)$ is now given by (\ref{3-lit}).

Furthermore, because  of the half-duplex constraint, for both fixed and mixed rate transmission, we have $R_{\mathcal{R}\mathcal{D}}(i) = 0$ and $S_{\mathcal{S}\mathcal{R}}(i)= 0$ 
if source and relay are selected for transmission in time slot $i$, respectively. 
%%%%%%%%%%%%%%%%%%%%%%%%%%%%%%%%%%%%%%%%%%%%%%%%%%%%%%%%%%%%%%%%%%%%%%%%%%%%% 
\subsection{Link Outages and Indicator Variables}
%%%%%%%%%%%%%%%%%%%%%%%%%%%%%%%%%%%%%%%%%%%%%%%%%%%%%%%%%%%%%%%%%%%%%%%%%%%%% 
For future reference, we introduce the binary link outage indicator variables $O_{S}(i)\in\{0,1\}$ and $O_{R}(i)\in\{0,1\}$ defined as
\begin{eqnarray}\label{eq-O_S}
O_{S}(i)\triangleq\left\{
\begin{array}{cl}
0 & \textrm{if }   s(i)<2^{S_0}-1 \\
1 & \textrm{if }   s(i)\geq 2^{S_0}-1 
\end{array} 
\right.\;
\end{eqnarray} 
and
\begin{eqnarray}\label{eq-O_R}
O_{R}(i)\triangleq\left\{
\begin{array}{cl}
0 & \textrm{if }   r(i)<2^{R_0}-1 \\
1 & \textrm{if }   r(i)\geq 2^{R_0}-1 
\end{array} 
\right.\;,
\end{eqnarray} 
respectively. In other words, $O_{S}(i)=0$ indicates that for transmission with rate $S_0$, the $\mathcal{S}$-$\mathcal{R}$ link is in outage, i.e., $\log_2(1+s(i))<S_0$, and $O_{S}(i)=1$ indicates that the transmission over the $\mathcal{S}$-$\mathcal{R}$ channel will be successful. 
Similarly, $O_{R}(i)=0$ indicates that for transmission with rate $R_0$, the $\mathcal{R}$-$\mathcal{D}$ link is in outage, i.e., $\log_2(1+r(i))<R_0$, and $O_{R}(i)=1$ means that an outage will not occur. Furthermore,
we denote the outage probabilities of the $\mathcal{S}$-$\mathcal{R}$ and $\mathcal{R}$-$\mathcal{D}$ channels as $P_S$ and $P_R$, respectively. These probabilities are defined as 
\begin{equation}
P_S\triangleq \lim_{N\to\infty}\frac{1}{N}\sum_{i=1}^N\big(1- O_S(i)\big)= {\rm Pr}\big \{s(i)<2^{S_0}-1\big\}\label{PS}
\end{equation}
and
\begin{equation}
P_R\triangleq \lim_{N\to\infty}\frac{1}{N}\sum_{i=1}^N\big(1- O_R(i)\big) = {\rm Pr}\big\{r(i)<2^{R_0}-1\big\}\label{PR},
\end{equation}
respectively.
%%%%%%%%%%%%%%%%%%%%%%%%%%%%%%%%%%%%%%%%%%%%%%%%%%%%%%%%%%%%%%%%%%%%%%%%%%%%% 
\subsection{Performance Metrics}
%%%%%%%%%%%%%%%%%%%%%%%%%%%%%%%%%%%%%%%%%%%%%%%%%%%%%%%%%%%%%%%%%%%%%%%%%%%%% 
In this paper, we adopt the throughput and the outage probability as performance metrics. 

Assuming the source has always data to transmit, for both fixed and mixed rate transmission, the average number of bits that arrive at the destination per time slot is given by 
\begin{eqnarray}\label{eq1}
    \tau=\lim_{N\to\infty}\frac{1}{N}\sum_{i=1}^N R_{\mathcal{R}\mathcal{D}}(i),
\end{eqnarray}
i.e., $\tau$ is the throughput of the considered communication system.

The outage probability is defined as the probability that the instantaneous channel capacity is unable to support some predetermined fixed transmission rate. In the considered system, an outage does not cause information loss since
the relay knows in advance whether or not the selected link can support the chosen transmission rate and data is only transmitted if the corresponding link is not in outage. Nevertheless, outages  still affect the achievable throughput
negatively. In fact, the outage probability can be interpreted as the  \textit{fraction of the throughput lost due to outages}. Thus, denoting the maximum throughput of a system in the absence of outages by $\tau_0$ and the throughput in the 
presence of outages by $\tau$, the outage probability, $F_{\rm out}$, can be expressed as
\begin{eqnarray}\label{OP-main}
    F_{\rm out}=1-\frac{\tau}{\tau_o}\;.
\end{eqnarray}
Note that maximizing the throughput is equivalent to minimizing the outage probability.

%R{\bf For a discussion about $F_{\rm out}$ please refer to  Appendix~\ref{app_f_out}.}
 %%%%%%%%%%%%%%%%%
\subsection{Performance Benchmarks for Fixed Rate Transmission}
%%%%%%%%%%%%%%%%%%%%%
For fixed rate transmission, two conventional relaying schemes serve as performance benchmarks for the proposed buffer-aided relaying scheme with adaptive link selection. In contrast to the proposed scheme, the benchmark 
schemes employ a predetermined schedule for when source and relay transmit which is independent of the instantaneous link SNRs. 

In the first scheme, referred to as Conventional Relaying 1 (see also \cite{XFTP08}), the source transmits in the first $\xi N$ time slots, where $0<\xi<1$ and each codeword spans one time slot. The relay tries to decode these codewords and, if the decoding is successful, 
it stores the corresponding information bits in its buffer. In the following $(1-\xi)N$ time slots, the relay transmits the stored information bits  to the destination, transmitting one codeword per time slot. Assuming that for the benchmark schemes source and relay 
transmit codewords having the same rate, i.e., $S_0=R_0$, the throughput of Conventional Relaying 1 is obtained as  
\begin{eqnarray}\label{eq_qq-conv_1aa}
    \tau_{\rm{conv,1}}^{\rm fixed}=&&\hspace{-6mm} \lim_{N\to \infty} \frac{1}{N}  \min\left\{ \sum_{i=1}^{\xi N} R_0 O_S(i)\;\;, \sum_{i=\xi N+1}^{N}  R_0 O_R(i)
\right\}\nonumber\\
=&&\hspace{-6mm}R_0  \min\left\{ \xi(1-P_S)  \;,\; (1-\xi)(1-P_R)   \right\}  .
\end{eqnarray}
The throughput is maximized if $\xi (1-P_S) = (1-\xi )(1-P_R)$ holds or equivalently if  $\xi=(1-P_R)/(2-P_S-P_R)$. Inserting $\xi$ into (\ref{eq_qq-conv_1aa}) 
we obtain the maximized throughput as 
\begin{eqnarray}\label{eq_qq-conv_1a}
    \tau_{\rm{conv,1}}^{\rm fixed}=R_0\frac{(1-P_S)(1-P_R)}{2-P_S-P_R}   .
\end{eqnarray}
The maximum throughput in the absence of outages is  $\tau_0=R_0/2$, hence using  (\ref{OP-main}), the corresponding outage probability is obtained as 
\begin{eqnarray}\label{op_conv_1}
    F_{{\rm out, conv,1}}^{\rm fixed}=1- 2\frac{(1-P_S)(1-P_R)}{2-P_S-P_R}.
\end{eqnarray} 

In the second scheme, referred to as Conventional Relaying 2, in the first time slot, the source transmits one codeword and the relay receives and tries to
decode the codeword. If the decoding is successful, in the second time slot, the relay retransmits the information to the destination, otherwise it remains silent.  
The throughput of Conventional Relaying 2 is obtained as
\begin{eqnarray}\label{trup_conv_2}
    \tau_{{\rm conv,2}}^{\rm fixed}=&&\hspace{-6mm}\lim_{N\to \infty} \frac{1}{N} \sum_{i=1}^{N/2} R_0  O_S(2i-1) O_R(2i)\nonumber\\
=&&\hspace{-6mm}\frac{R_0}{2}(1-P_S)(1-P_R).
\end{eqnarray}  
Based on (\ref{OP-main}) the corresponding outage probability is given by 
\begin{eqnarray}\label{op_conv_2}
    F_{{\rm out, conv,2}}^{\rm fixed}= 1-(1-P_S)(1-P_R).
\end{eqnarray} 
We note that $\tau_{\rm{conv,1}}^{\rm fixed}\geq \tau_{\rm{conv,2}}^{\rm fixed}$ ($F_{{\rm out, conv,1}}^{\rm fixed}\leq F_{{\rm out, conv,2}}^{\rm fixed}$) always holds. However, in order for Conventional Relaying 1 to realize this gain, an infinite delay is required, whereas Conventional Relaying 2 requires a delay of only one time slot.

For the special case of Rayleigh fading, we obtain from (\ref{PS}) and (\ref{PR}) $P_S=1-e^{-\frac{2^{R_0}-1}{\Omega_S}}$ and $P_R=1-e^{-\frac{2^{R_0}-1}{\Omega_R}}$, respectively. The corresponding throughputs and outage probabilities for 
Conventional Relaying 1 and 2 can be obtained by applying these results in (\ref{eq_qq-conv_1a})-(\ref{op_conv_2}). In particular, in the high SNR regime, when $\gamma_S=\gamma_R=\gamma \to \infty$,  we obtain $ \tau_{{\rm conv,1}}^{\rm fixed}\to R_0/2$,
$ \tau_{{\rm conv,2}}^{\rm fixed}\to R_0/2$,  and
\begin{eqnarray}
     F_{{\rm out, conv,1}}^{\rm fixed}&\to&\frac{2^{R_0}-1}{2}\frac{\bar\Omega_S+\bar\Omega_R}{\bar\Omega_S \bar\Omega_R} \frac{1}{\gamma},\quad\textrm{as }\gamma\to \infty,\quad
     \label{eq16}\\
     F_{{\rm out, conv,2}}^{\rm fixed} &\to&(2^{R_0}-1)\frac{\bar\Omega_S+\bar\Omega_R}{\bar\Omega_S \bar\Omega_R} \frac{1}{\gamma} ,\quad\textrm{as }\gamma\to \infty.\quad
     \label{eq17}
\end{eqnarray}
Hence, for fixed rate transmission, the diversity gain of Conventional Relaying 1 and 2 is one as expected.
%%%%%%%%%%%%%%%%%%%%%%%%%%%%%%%%%%%%%%%%%%
\subsection{Performance Benchmarks for Mixed Rate Transmission}\label{ggas}
%%%%%%%%%%%%%%%%%%%%%%%%%%%%%%%%%%
We also provide two performance benchmarks with a priori fixed link selection schedule for mixed rate transmission. The two benchmark protocols are analogous to the corresponding protocols in the fixed rate case. Thus, for Conventional Relaying 1,
the source transmits in the first $\xi N$ time slots with fixed rate $S_0$ and the relay transmits in the remaining $(1-\xi) N$ time slots with rate $R(i) = \log_2(1+r(i))$. Thus, the throughput is given by
\begin{eqnarray}\label{eq_qq-conv_1mix}
    \tau_{\rm{conv,1}}^{\rm mixed}&&\hspace{-7mm}= \hspace{-1mm}\lim_{N\to \infty} \frac{1}{N}  \min\left\{\hspace{-1mm} \sum_{i=1}^{\xi N} S_0 O_S(i), \hspace{-1mm}\sum_{i=\xi N+1}^{ N} \hspace{-2mm} \log_2(1+r(i))
\hspace{-1mm}\right\}\;\;\nonumber\\
&&\hspace{-7mm}=  \min\left\{ \xi(1\hspace{-0.5mm}-\hspace{-0.5mm}P_S)S_0  , (1-\xi) E\{\log_2(1+r(i))\}   \right\}\hspace{-1mm}.
\end{eqnarray}
The throughput is maximized if $\xi$ satisfies
\begin{equation}\label{cond-1-mixed_conv}
    \xi{S_0 (1-P_S)} = (1-\xi) E\{\log_2(1+r(i))\}\;.
\end{equation}
From (\ref{cond-1-mixed_conv}), we obtain $\xi$ as
\begin{eqnarray} 
    \xi= \frac{  E\{\log_2(1+r(i))\} }{S_0 (1-P_S)+ E\{\log_2(1+r(i))\} }.
\end{eqnarray}
Inserting $\xi$ into (\ref{eq_qq-conv_1mix}) leads to the throughput of mixed rate transmission under the Conventional Relaying 1 protocol
\begin{eqnarray}\label{eq_qq-1}
    \tau_{\rm{conv,1}}^{\rm mixed}= \frac{ S_0 (1-P_S) E\{\log_2(1+r(i))\}}{S_0 (1-P_S)+ E\{\log_2(1+r(i))\} }.
\end{eqnarray}
Assuming Rayleigh fading links $E\{\log_2(1+r(i))\}$ is obtained as
\begin{eqnarray}\label{t-conv-mixed-2}
    E\{\log_2(1+r(i))\}=
  \frac{e^{1/\Omega_R}}{\ln(2)}E_1\left(\frac{1}{\Omega_R}\right) 
\end{eqnarray}
for fixed transmit powers, where $E_1(x)=\int_x^\infty e^{-t}/t\,dt$, $x>0$, denotes the exponential integral function. If adaptive power allocation is employed, $E\{\log_2(1+r(i))\}$ becomes
\begin{eqnarray}\label{t-conv-mixed-2pa}
      E\{\log_2(1+r(i))\}=\frac{1}{\ln(2)}E_1\left(\frac{\lambda_c}{\bar \Omega_R}\right) ,
\end{eqnarray}
where $\lambda_c$ is found from the power constraint
\begin{eqnarray}
    (1-P_S)\gamma_S+\int_{\lambda_c}^\infty \left(\frac{1}{\lambda_c}-\frac{1}{h_R}\right) f_{h_R}(h_R) dh_R=2\Gamma.
\end{eqnarray}
Here, $\Gamma$ denotes the average transmit power in one time slot. In the high SNR regime, where $\gamma_S=\gamma_R=\gamma\to\infty$,  $E\{\log_2(1+r(i))\}\gg S_0(1-P_S)$ holds. Thus, the throughput in (\ref{eq_qq-1}) converges to
\begin{eqnarray}\label{eq_qq-1_high}
    \tau_{\rm{conv,1}}^{\rm mixed}\to  S_0\;, \quad\textrm{as }\gamma\to\infty\;,
\end{eqnarray}
which leads to the interesting conclusion that mixed rate transmission achieves a multiplexing rate of one even if suboptimal conventional relaying is used.

For Conventional Relaying 2, the performance of mixed rate transmission is identical to that of fixed rate transmission. Since the relay does not employ a buffer for Conventional Relaying 2, even with mixed rate transmission, the relay can only transmit
successfully all of the received information if $S_0\leq\log_2(1+r(i))$ and has to remain silent otherwise.
  %%%%%%%%%%%%%%%%%%%%%%%%%%%%%%%%%%%%%%%%%%%%%%%%%%%%%%%%%%%%%%%%%%%%%%%%%%%%%%%%%%%%
\section{Fixed Rate Transmission Without Delay Constraints}\label{sec-3}
%%%%%%%%%%%%%%%%%%%%%%%%%%%%%%%%%%%%%%%%%%%%%%%%%%%%%%%%%%%%%%%%%%%%%%%%%%%%%%%%%%%%%
In this section, we investigate buffer-aided relaying with adaptive link selection for fixed rate transmission without delay constraints, i.e., the transmission rates of the source and the relay are fixed. We derive the optimal link selection protocol and
analyze the corresponding throughput and outage probability. The obtained results constitute performance upper bounds for fixed rate transmission with delay constraints, which will be considered in Section \ref{sec-delay}.
%%%%%%%%%%%%%%%%%%%%%%%%%%%%%%%%%%%%%%%%%%%%%%%%%%%%%%%%%%%%%%%%%%%%%%%%%%%%%%%%%%%%
\subsection{Problem Formulation}\label{sec_pf}
%%%%%%%%%%%%%%%%%%%%%%%%%%%%%%%%%%%%%%%%%%%%%%%%%%%%%%%%%%%%%%%%%%%%%%%%%%%%%%%%%%%%
First, we introduce the binary link selection variable $d_i\in\{0,1\}$. Here, $d_i=1$ indicates that the $\mathcal{R}$-$\mathcal{D}$ link is selected for transmission in time slot $i$, i.e., the relay transmits and the destination receives. 
Similarly, if $d_i=0$, the $\mathcal{S}$-$\mathcal{R}$ link is selected for transmission in time slot $i$, i.e., the source transmits and the relay receives. 

Based on the definitions of $O_{S}(i)$, $O_{R}(i)$, and $d_i$, the number of bits sent from the source to the relay and from the relay to the destination in time slot $i$ can be written in compact form as
\begin{eqnarray}\label{s_2}
    S_{\mathcal{S}\mathcal{R}}(i)= (1-d_i) O_{S}(i) S_0
\end{eqnarray}
and
\begin{eqnarray}\label{s_1}
    R_{\mathcal{R}\mathcal{D}}(i)= d_i O_{R}(i) \min\{R_0,Q(i-1)\}, 
\end{eqnarray}
respectively. Consequently, the throughput in (\ref{eq1}) can be rewritten as
\begin{eqnarray}\label{trup}
    \tau=\lim_{N\to\infty}\frac{1}{N}\sum_{i=1}^N d_i O_{R}(i)  \min\{R_0,Q(i-1)\}.
\end{eqnarray}
In the following, we maximize the throughput by optimizing the link selection variable $d_i$, which represents the only degree of freedom in the considered problem. In particular, as already mentioned in Section \ref{s22}, since both transmitting nodes do not 
have the full CSI of their respective transmit channels, power allocation is not possible and we assume fixed transmit powers $\mathcal{P_S}(i)=\mathcal{P_S}$ and $\mathcal{P_R}(i)=\mathcal{P_R}$, $\forall i$.
%%%%%%%%%%%%%%%%%%%%%%%%%%%%%%%%%%%%%%%%%%%%%%%%%%%%%
\subsection{Throughput Maximization}
%%%%%%%%%%%%%%%%%%%%%%%%%%%%%%%%%%%%%%%%%%%%%%%%%%%%%
Let us first define the average arrival rate of bits per slot into the queue of the buffer, $A$, and the average departure rate of bits per slot out of the queue of the buffer, $D$, as \cite{2}
\begin{eqnarray}\label{arr-rate}
    A\triangleq\lim_{N\to\infty}\frac{1}{N}\sum_{i=1}^N (1-d_i) O_{S}(i)  S_0 \label{AD}
\end{eqnarray}
and
\begin{eqnarray}
 D\triangleq\lim_{N\to\infty}\frac{1}{N}\sum_{i=1}^N d_i  O_{R}(i) \min\{R_0,Q(i-1)\},\label{D}
\end{eqnarray}
respectively. We note that the departure rate of the queue is equal to the throughput. The queue is said to be an absorbing queue if $A>D=\tau$, in which case a fraction of the information sent by the source is trapped in the buffer and can never be extracted from it. The following theorem provides a useful condition for the optimal policy which  maximizes the throughput.

\begin{theorem}\label{theorem1}
The link selection  policy that maximizes the throughput of the considered buffer-aided relaying system can be found in the set of link selection policies that satisfy  
\begin{eqnarray}\label{cond-1}
    \lim_{N\to\infty}\frac{1}{N}\sum_{i=1}^N (1-d_i) O_S(i)  S_0 \hspace{-0.5mm} =\hspace{-0.8mm} \lim_{N\to\infty}\hspace{-1mm}\frac{1}{N}\sum_{i=1}^N d_i O_R(i)  R_0,
\end{eqnarray}
and the throughput is given by the right (and left) hand side of (\ref{cond-1}). If (\ref{cond-1}) holds, the queue is non-absorbing but is at the edge of absorption, i.e., a small increase of the arrival rate will lead to an absorbing queue. 
\end{theorem}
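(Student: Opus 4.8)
The plan is to reduce the statement to a max--min balancing argument by comparing the throughput $\tau=D$ with the average arrival rate $A$ in (\ref{AD}) and with the auxiliary quantity $\bar D\triangleq\lim_{N\to\infty}\frac{1}{N}\sum_{i=1}^N d_i O_R(i) R_0$, i.e., the right-hand side of (\ref{cond-1}), which is the departure rate the relay would achieve if its buffer were never empty. Two elementary bounds hold for every link selection policy. Since $\min\{R_0,Q(i-1)\}\le R_0$ in (\ref{s_1}), summing and letting $N\to\infty$ gives $D\le\bar D$. Since $Q(i)\ge0$, the buffer balance forces $\sum_{i=1}^N R_{\mathcal{R}\mathcal{D}}(i)\le Q(0)+\sum_{i=1}^N S_{\mathcal{S}\mathcal{R}}(i)$, and dividing by $N$ yields $D\le A$. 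Hence $\tau=D\le\min\{A,\bar D\}$ always.

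Next, I would show that this bound is tight by analysing the buffer in three regimes. If $A>\bar D$, the queue has positive drift and is absorbing, so $Q(i)\to\infty$ and the cap $\min\{R_0,Q(i-1)\}=R_0$ is active for all but finitely many relay slots, giving $D=\bar D$. If $A<\bar D$, the drift is negative and the queue is stable, so $Q(i)/N\to0$; combined with the buffer balance $Q(N)=Q(0)+\sum_{i=1}^N\big(S_{\mathcal{S}\mathcal{R}}(i)-R_{\mathcal{R}\mathcal{D}}(i)\big)$, dividing by $N$ and letting $N\to\infty$ gives $D=A$. The delicate case is the boundary $A=\bar D$: here the buffer content behaves as a zero-drift random walk reflected at the origin and is therefore null recurrent, so the long-run fraction of slots with $Q(i-1)<R_0$ tends to zero. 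Consequently the service loss $\bar D-D=\lim_{N\to\infty}\frac{1}{N}\sum_{i=1}^N d_i O_R(i)\max\{0,R_0-Q(i-1)\}$ vanishes and $D=A=\bar D$. Taken together, the three regimes establish $\tau=\min\{A,\bar D\}$ and identify $A=\bar D$ (condition (\ref{cond-1})) as exactly the edge of absorption: the queue is non-absorbing, $A=D$, yet any further increase of the arrival rate pushes it into the $A>\bar D$ absorbing regime.

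It then remains to verify that a throughput-maximizing policy can always be chosen inside the set $A=\bar D$. Because $\tau=\min\{A,\bar D\}$ and $A$ and $\bar D$ trade off against one another, a policy with $A\ne\bar D$ is never strictly better than one satisfying (\ref{cond-1}). If $A>\bar D$, reassigning some slots in which the source was selected to the relay lowers $A$ towards $\bar D$ and does not decrease $\tau=\bar D$; if $A<\bar D$, reassigning a suitable fraction of slots in which both links are non-outage from the relay to the source raises the binding quantity $A$ towards $\bar D$ and strictly increases $\tau=A$. Either way the optimum is reached with $A=\bar D$, where $\tau$ equals the common value in (\ref{cond-1}).

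The step I expect to be the main obstacle is the borderline regime $A=\bar D$: unlike the other two cases it cannot be settled by the rate balance alone, and one must argue that the occasional near-emptiness of the buffer causes no asymptotic rate loss. This is where the detailed queue dynamics enter, through the null-recurrence property (equivalently, the vanishing long-run fraction of time spent at low buffer occupancy), and it is also the step that implicitly relies on the temporal independence assumption mentioned for the channel processes.
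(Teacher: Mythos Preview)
Your argument is correct and arrives at the same conclusion, but by a genuinely different route than the paper. The paper proceeds by a perturbation/continuity argument: starting from an absorbing policy ($A>\tau$), it shifts indices from the source set $I$ to the relay set $\bar I$ until $A=\tau$, and then argues that at this edge the identity $\tau=\frac{1}{N}\sum_i d_i O_R(i)R_0$ must hold because moving a vanishing fraction $|\epsilon|/N\to 0$ of indices cannot produce a discontinuity in the finite-rate time averages. Your approach instead establishes the clean closed-form characterization $\tau=\min\{A,\bar D\}$ via a three-regime queueing analysis, and handles the critical boundary $A=\bar D$ through the null recurrence of the zero-drift reflected walk. Your formulation is more structured and makes the max--min balancing explicit, and you are right to flag that the null-recurrence step is where temporal independence of the fading enters; the paper's continuity argument is more heuristic but nominally only needs ergodicity. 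One minor imprecision: in your $A<\bar D$ improvement step you assume slots with both links in non-outage are available to reassign from relay to source; when no such slots remain, $A$ is already maximal and you instead reach $A=\bar D$ by moving slots with $O_S(i)=0$, $O_R(i)=1$ from relay to source, which lowers $\bar D$ without changing $A$ or $\tau$ (the paper makes essentially the same observation and uses it in Lemma~1).
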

\begin{proof}
Please refer to Appendix \ref{app_for_t_1}.
\end{proof}

\begin{remark}\label{remark_new_0}
A queue that meets condition (\ref{cond-1}) is referred to as a critical queue \cite{loynes1962stability}. Critical queues may be stable, substable, or unstable.  For the optimal link selection policy in Theorem~\ref{theorem1}, the queue is non-absorbing  hence   leading to a stable queue.
\end{remark}

\begin{remark}\label{remark_new_1}
The $\min(\cdot)$ function in (\ref{trup}) is absent in the throughput in (\ref{cond-1}), which is crucial for finding a tractable analytical expression for the optimal link selection policy. In particular, as shown in Appendix A, condition 
(\ref{cond-1}) automatically ensures that for $N\to\infty$, 
$$\tau= \frac{1}{N} \sum_{i=1}^N d_i O_R(i) \min\{R_0,Q(i-1)\}=\frac{1}{N} \sum_{i=1}^N d_i O_R(i) R_0$$ is valid, i.e., the impact of event $R_0>Q(i-1)$, $i=1,\ldots,N$, 
is negligible. Hence, for the optimal link selection policy, the queue is non-absorbing but is almost always filled to such a level that the number of bits in the queue exceed the number of bits that can be transmitted 
over the $\mathcal{R}$-$\mathcal{D}$ channel, i.e., the buffer is practically always fully backlogged. This result is intuitively pleasing. Namely, if the queue would be unstable, it would absorb bits and the throughput could be
improved by having the relay transmit more frequently. On the other hand, if the queue was not (practically) fully backlogged, the effect of the event $R_0>Q(i-1)$ would not be negligible and the system would loose out on 
transmission opportunities because of an insufficient number of bits in the buffer.
\end{remark}

\begin{remark}\label{remark_new_2}
We note that Theorem 1 is only valid for $N\to\infty$ where transient effects resulting from filling the buffer at the beginning of transmission and emptying it at the end of transmission are negligible. For (small) finite $N$,
these effects are not negligible and the derivation of the optimal link selection policy is more complicated.
\end{remark}

According to Theorem~\ref{theorem1},  in order to maximize the throughput, we have to search for the optimal policy only in the set of policies that satisfy (\ref{cond-1}). 
Therefore, the search for the optimal policy can be formulated as an optimization problem,  which for $N\to\infty$ has the following form
\begin{eqnarray}\label{MPR1}
\begin{array}{ll}
 {\underset{d_i}{\rm{Maximize: }}}&\frac{1}{N}\sum_{i=1}^N d_i O_R(i) R_0\\
{\rm{Subject\;\; to: }} &{\rm C1:}\, \frac{1}{N}\sum_{i=1}^N (1-d_i) O_S(i) S_0\\
&\qquad=\frac{1}{N}\sum_{i=1}^N d_i O_R(i) R_0\\
  &{\rm C2:} \, d_i(1-d_i)=0,\quad \forall i\\
\end{array}
\end{eqnarray}
where constraint C1 ensures that the search for the optimal policy is conducted only among those policies that satisfy (\ref{cond-1}) and C2 ensures that $d_i\in\{0,1\}$. 
We note that C1 and C2 do not exclude the case that the relay is chosen for transmission if $R_0>Q(i-1)$. However, as explained in Remark \ref{remark_new_1}, C1 
ensures that the influence of event $R_0>Q(i-1)$ is negligible. Therefore, an additional constraint dealing with this event is not required.

Before we solve problem (\ref{MPR1}), we note that, as will be shown in the following, the optimal link selection policy may require a coin flip. For this purpose, 
we introduce the set of possible outcomes of the coin flip, $\mathcal{C}\in\{0,1\}$, and denote the probabilities of the outcomes by $P_C={\rm Pr}\{\mathcal{C}=1\}$  and ${\rm Pr}\{\mathcal{C}=0\}=1-P_C$, respectively. 
Now, we are ready to provide the solution of (\ref{MPR1}), which constitutes the optimal link selection policy maximizing the throughput. This is conveyed in the following theorem.

\begin{theorem}\label{theorem2} 
For the optimal link selection policy maximizing the throughput of  the considered buffer-aided relaying system for fixed rate transmission, three mutually exclusive cases can be distinguished depending on the values of $P_S$ and $P_R$:\\
\textbf{Case 1:}
\begin{eqnarray}\label{cond-PS-PR-1}
   P_S\leq \frac{S_0}{S_0+R_0(1-P_R)} &\textrm{AND}& P_R\leq \frac{R_0}{R_0+S_0(1-P_S)}.\nonumber\\
\end{eqnarray}
In this case, the optimal link selection policy is given by
\begin{eqnarray}\label{sol-d-1}
d_i=\left\{\hspace{-1.4mm}
\begin{array}{cl}
0 & \textrm{if }   O_S(i)=1 \textrm{ AND } O_R(i)=0 \\
1 & \textrm{if }   O_S(i)=0 \textrm{ AND } O_R(i)=1\\
0 & \textrm{if }   O_S(i)=1 \textrm{ AND } O_R(i)=1  \textrm{ AND } \mathcal{C}=0 \\
1 & \textrm{if }   O_S(i)=1 \textrm{ AND } O_R(i)=1  \textrm{ AND }\mathcal{C}=1 \\
\varepsilon & \textrm{if }   O_S(i)=0 \textrm{ AND } O_R(i)=0
\end{array} 
\right.
\end{eqnarray} 
where $\varepsilon$ can be set to $0$ or $1$ as neither the source nor the relay will transmit because both links are in outage. On the other hand, if both links are not in outage, i.e., $O_S(i)=1$ and $O_R(i)=1$, the coin flip decides which node transmits and
the probability of $\mathcal{C}=1$ is given by 
\begin{equation}\label{find-P_C-1}
  P_C =\frac{S_0 (1-P_S)-(1-P_R)P_S R_0 }{(1-P_S)(1-P_R)(S_0+R_0)}.
\end{equation}
Based on (\ref{sol-d-1}), the maximum throughput is obtained as
\begin{eqnarray}\label{max-tau-1}
 \tau=\frac{S_0 R_0}{S_0+R_0}(1-P_S P_R).
\end{eqnarray}
%%%%%%%%%%%%%%%%%%%%%%%%%%%%%%%%
\textbf{Case 2:}
\begin{eqnarray}\label{cond-PS-PR-2}
   P_R> \frac{R_0}{R_0+S_0(1-P_S)}
\end{eqnarray}
In this case, the optimal link selection policy is characterized by 
\begin{eqnarray}\label{sol-d-2}
d_i=\left\{\hspace{-1.4mm}
\begin{array}{cl}
0 & \textrm{if }  O_S(i)=1 \textrm{ AND } O_R(i)=0   \textrm{ AND } \mathcal{C}=0 \\
1 & \textrm{if } O_S(i)=1 \textrm{ AND } O_R(i)=0  \textrm{ AND } \mathcal{C}=1  \\
1 & \textrm{if }   O_S(i)=0 \textrm{ AND } O_R(i)=1\\
1 & \textrm{if }  O_S(i)=1 \textrm{ AND } O_R(i)=1 \\
\varepsilon & \textrm{if }   O_S(i)=0 \textrm{ AND } O_R(i)=0
\end{array} 
\right.
\end{eqnarray} 
The probability of outcome $\mathcal{C}=1$ of the coin flip is given by
\begin{equation}\label{find-P_C-2}
  P_C =\frac{S_0 (1-P_S)P_R-(1-P_R) R_0 }{(1-P_S)P_R S_0},
\end{equation}
and the maximum throughput can be obtained as
\begin{eqnarray}\label{max-tau-2}
 \tau=R_0(1-P_R).
\end{eqnarray}
%%%%%%%%%%%%%%%%%%%%%%%%%%%%%%%%%%%%%%
%%%%%%%%%%%%%%%%%%%%%%%%%%%%%%%%
\textbf{Case 3:} 
\begin{eqnarray}\label{cond-PS-PR-3}
   P_S>\frac{S_0}{S_0+R_0(1-P_R)}.
\end{eqnarray}
In this case, the link selection policy that maximizes the throughput is given by
\begin{eqnarray}\label{sol-d-3}
d_i=\left\{\hspace{-1.4mm}
\begin{array}{cl}
0 & \textrm{if }  O_S(i)=1 \textrm{ AND } O_R(i)=0  \\
0 & \textrm{if }   O_S(i)=0 \textrm{ AND } O_R(i)=1  \textrm{ AND }   \mathcal{C}=0\\
1 & \textrm{if }   O_S(i)=0 \textrm{ AND } O_R(i)=1  \textrm{ AND } \mathcal{C}=1 \\
0 & \textrm{if }   O_S(i)=1 \textrm{ AND } O_R(i)=1 \\
\varepsilon & \textrm{if }    O_S(i)=0 \textrm{ AND } O_R(i)=0
\end{array} 
\right.
\end{eqnarray} 
The probability of $\mathcal{C}=1$ is given by
\begin{equation}\label{find-P_C-3}
  P_C=\frac{S_0 (1-P_S) }{R_0 (1-P_R)P_S},
\end{equation}
and the maximum throughput is
\begin{eqnarray}\label{max-tau-3}
 \tau=S_0(1-P_S).
\end{eqnarray}
\end{theorem}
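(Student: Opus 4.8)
The plan is to recast the sequence optimization (\ref{MPR1}) as a small static linear program over long-run ``activity fractions'' and then solve it by an elementary case analysis. First I would justify reducing the optimization over causal link-selection sequences $\{d_i\}$ to an optimization over a stationary policy that randomizes only on the basis of the current outage pair $(O_S(i),O_R(i))$. By Theorem~\ref{theorem1} it suffices to search among policies obeying the balance constraint C1, and by Remark~\ref{remark_new_1} the buffer is almost surely backlogged, so $\min\{R_0,Q(i-1)\}$ may be replaced by $R_0$; this decouples each decision from the buffer state. Ergodicity and stationarity of $h_S(i),h_R(i)$, together with their mutual independence, then let me replace all time averages by expectations and conclude that an optimal policy may be taken to depend only on $(O_S(i),O_R(i))$, which takes the values $(1,1),(1,0),(0,1),(0,0)$ with probabilities $(1-P_S)(1-P_R)$, $(1-P_S)P_R$, $P_S(1-P_R)$, and $P_SP_R$.

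Next I would introduce four ``successful-transmission'' fractions: let $a_{11},a_{10}$ be the long-run fractions of slots in states $(1,1),(1,0)$ in which the source is selected, and $b_{11},b_{01}$ the fractions in states $(1,1),(0,1)$ in which the relay is selected. Then the arrival and departure rates of (\ref{AD})--(\ref{D}) become $A=S_0(a_{10}+a_{11})$ and $D=R_0(b_{01}+b_{11})$, subject to $0\le a_{10}\le(1-P_S)P_R$, $0\le b_{01}\le P_S(1-P_R)$, $a_{11},b_{11}\ge 0$, and the half-duplex coupling $a_{11}+b_{11}\le(1-P_S)(1-P_R)$ in the shared state $(1,1)$. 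Problem (\ref{MPR1}) then reads: maximize $D$ subject to $A=D$ and these box and coupling constraints, with throughput $\tau=D$.

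I would then argue that selecting the source whenever only the $\mathcal{S}$-$\mathcal{R}$ link is up, and the relay whenever only the $\mathcal{R}$-$\mathcal{D}$ link is up, is optimal whenever feasible, since choosing the ``wrong'' node in state $(1,0)$ or $(0,1)$ merely wastes a usable slot; hence I set $a_{10}=(1-P_S)P_R$, $b_{01}=P_S(1-P_R)$, and $a_{11}+b_{11}=(1-P_S)(1-P_R)$ (idling the shared state can only lower the balanced rate). This leaves the single free variable $b_{11}$, and the balance equation $S_0a=R_0b$ determines it uniquely as $b_{11}=[S_0(1-P_S)-R_0P_S(1-P_R)]/(S_0+R_0)$. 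Substituting gives the Case~1 throughput $\tau=\tfrac{S_0R_0}{S_0+R_0}(1-P_SP_R)$ and $P_C=b_{11}/[(1-P_S)(1-P_R)]$, matching (\ref{max-tau-1}) and (\ref{find-P_C-1}). The case split comes from checking feasibility $b_{11}\in[0,(1-P_S)(1-P_R)]$: a direct computation shows $b_{11}\ge 0$ is equivalent to $P_S\le S_0/[S_0+R_0(1-P_R)]$ and $b_{11}\le(1-P_S)(1-P_R)$ is equivalent to $P_R\le R_0/[R_0+S_0(1-P_S)]$, so both hold exactly in Case~1. If $b_{11}$ would exceed $(1-P_S)(1-P_R)$ (condition (\ref{cond-PS-PR-2})), the relay link is the bottleneck: I clip the relay to all of states $(0,1)$ and $(1,1)$, giving $D=R_0(1-P_R)$, and throttle the arrival by a coin flip in state $(1,0)$, solving $A=D$ for the $P_C$ in (\ref{find-P_C-2}). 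Symmetrically, if $b_{11}$ would be negative (condition (\ref{cond-PS-PR-3})), the source link is the bottleneck, $A=S_0(1-P_S)$, and I throttle the departure via a coin flip in state $(0,1)$ to obtain (\ref{find-P_C-3}); in each case one verifies $P_C\in[0,1]$.

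The main obstacle is the first step: rigorously justifying the passage from arbitrary history- and buffer-dependent causal policies to static, stationary, state-randomized policies, and the replacement of $\min\{R_0,Q(i-1)\}$ by $R_0$. This rests on Theorem~\ref{theorem1} and the backlog argument of Remark~\ref{remark_new_1} together with ergodicity, and is the only delicate part. Once the static linear program is in place, the remaining monotonicity and corner-point analysis is entirely elementary.
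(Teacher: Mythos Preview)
Your proposal is correct but follows a genuinely different route from the paper. The paper's proof (Appendix~\ref{app_A}) forms the Lagrangian of (\ref{MPR1}), differentiates with respect to $d_i$, and obtains a threshold rule of the form $(1-\mu)O_R(i)R_0 \gtrless \mu O_S(i)S_0$ parameterized by the multiplier $\mu$; the three cases then emerge from the interior value $0<\mu<1$ and the two boundary values $\mu=0$, $\mu=1$, with the coin flip introduced in each case to enforce C1. You instead collapse the sequence problem to a four-variable static linear program over state-conditional activity fractions and solve it by an elementary monotonicity/corner-point argument.

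Your approach is more elementary and arguably more transparent about the origin of the case split: it falls out directly as a feasibility check $0\le b_{11}\le(1-P_S)(1-P_R)$ on a single free variable, and this automatically makes the three cases mutually exclusive, whereas the paper needs a separate algebraic verification of exclusivity at the end of its proof. The Lagrangian route, on the other hand, has the advantage of generalizing cleanly: the paper reuses the same template essentially verbatim for the mixed-rate problems (Theorems~\ref{theorem7} and~\ref{theorem8}), where the state $r(i)$ is continuous and a finite LP reduction of your kind is not available. Both proofs lean on Theorem~\ref{theorem1} for the crucial reduction that removes the buffer state from the objective; you are right to flag this as the only delicate step, and your treatment of it is on the same footing as the paper's.
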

%%%%%%%%%%%%%%%%%%%%%%%%%%%%%
\begin{proof}
Please refer to  Appendix \ref{app_A}.
\end{proof}

\begin{remark}
We note that in the second line of (\ref{sol-d-2}), we set $d_i=1$ although the $\mathcal{R}$-$\mathcal{D}$ link is in outage ($O_R(i)=0$) while the $\mathcal{S}$-$\mathcal{R}$ link is not in outage ($O_S(i)=1$). In other words, in this case, neither node transmits although 
the source node could successfully transmit.  However, if the source node transmitted in this situation,  the  queue at the relay would become an absorbing queue.  Similarly, in the second line of (\ref{sol-d-3}), we set $d_i=0$ although the $\mathcal{S}$-$\mathcal{R}$ link 
is in outage. Again, neither node transmits in order to ensure that  condition (\ref{cond-1}) is met. However, in this case, the exact same throughput as in (\ref{max-tau-3}) can be achieved with a simpler and more practical link selection policy than that in (\ref{sol-d-3}). 
This is addressed in the following lemma.
\end{remark}

\begin{lemma}
The throughput achieved by the link selection policy in (\ref{sol-d-3}) can also be achieved with the following simpler link selection policy.\\
If 
\begin{eqnarray}\label{cond-PS-PR-3a}
   P_S>\frac{S_0}{S_0+R_0(1-P_R)}\;,
\end{eqnarray}
a link selection policy  maximizing the throughput is given by
\begin{eqnarray}\label{sol-d-3a}
d_i=\left\{
\begin{array}{cl}
0 & \textrm{if }  O_S(i)=1  \\
1 & \textrm{if }   O_S(i)=0
\end{array} 
\right.\;,
\end{eqnarray} 
and the maximum throughput is
\begin{eqnarray}\label{max-tau-3a}
 \tau=S_0(1-P_S).
\end{eqnarray}
\end{lemma}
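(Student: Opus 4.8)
The plan is to verify that the simple policy (\ref{sol-d-3a}) injects bits into the buffer at exactly the rate claimed in (\ref{max-tau-3a}), and that under condition (\ref{cond-PS-PR-3a}) the relay possesses strictly more service capacity than this injection rate; stability of the resulting queue then forces the throughput to equal the arrival rate, so no coin flip or artificial idling as in (\ref{sol-d-3}) is needed.

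First I would compute the arrival rate under (\ref{sol-d-3a}). Since the source is selected ($d_i=0$) exactly when $O_S(i)=1$, one checks that $(1-d_i)O_S(i)=O_S(i)$ for every $i$, so the arrival rate in (\ref{AD}) reduces to
\begin{equation}
A=\lim_{N\to\infty}\frac{1}{N}\sum_{i=1}^N O_S(i)\,S_0 = S_0(1-P_S),
\end{equation}
by the definition of $P_S$ in (\ref{PS}); this already equals the target throughput (\ref{max-tau-3a}). Next I would bound the service capacity. The relay is selected ($d_i=1$) exactly when $O_S(i)=0$ (probability $P_S$), and its transmission succeeds only when additionally $O_R(i)=1$ (probability $1-P_R$). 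Because $h_S(i)$ and $h_R(i)$ are mutually independent, $O_S(i)$ and $O_R(i)$ are independent, so the maximum average number of bits the relay can deliver is $D_{\max}=R_0\,P_S(1-P_R)$. The decisive algebraic step is that the Case~3 condition (\ref{cond-PS-PR-3a}) is \emph{equivalent} to $D_{\max}>A$: multiplying $P_S>S_0/[S_0+R_0(1-P_R)]$ out directly gives $R_0 P_S(1-P_R)>S_0(1-P_S)$.

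Since the arrival rate is strictly below the available service capacity, the queue is non-absorbing, and every bit entering the buffer is eventually delivered, so $\tau=A=S_0(1-P_S)$, matching (\ref{max-tau-3}). I expect the last step to be the main obstacle: because the relay actually transmits $\min\{R_0,Q(i-1)\}$ bits rather than always $R_0$, a careful argument is needed to conclude that the long-run departure rate is not strictly smaller than $A$. The strict inequality $D_{\max}>A$ is what rules this out---it guarantees the buffer does not grow without bound, so the event $R_0>Q(i-1)$ does not lead to lost throughput, in the same spirit as the stability and full-backlog reasoning of Remark~\ref{remark_new_1} and the critical-queue discussion of \cite{loynes1962stability}.
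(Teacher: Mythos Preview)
Your proof is correct and follows essentially the same logic as the paper: both argue that under policy (\ref{sol-d-3a}) the arrival rate is $S_0(1-P_S)$, the queue is non-absorbing, and hence by conservation of flow the throughput equals the arrival rate. The only difference is one of emphasis: the paper observes that (\ref{sol-d-3a}) and (\ref{sol-d-3}) share the same arrival rate (the source transmits exactly when $O_S(i)=1$ in both) and simply asserts non-absorption for both, whereas you verify non-absorption for (\ref{sol-d-3a}) directly by showing that condition (\ref{cond-PS-PR-3a}) is algebraically equivalent to the service capacity $R_0P_S(1-P_R)$ strictly exceeding the arrival rate $S_0(1-P_S)$. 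Your explicit verification is a welcome addition, since the paper's one-line assertion that the new policy is non-absorbing is precisely the point that deserves justification.
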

\begin{proof}
The policy given by (\ref{sol-d-3}) has the same average arrival rate as  policy (\ref{sol-d-3a}) since for both policies the source always transmits when $O_S(i)=1$. Therefore, since for both policies the queue is non-absorbing, by the law of conservation of flow, their throughputs 
are identical to their arrival rates. Thus, both policies achieve identical throughputs.
\end{proof}

\begin{remark}\label{remark_duplex}
Note that when $P_R> R_0/(R_0+S_0(1-P_S))$ ($P_S> S_0/(S_0+R_0(1-P_R))$) holds, the throughput is given by (\ref{max-tau-2}) ((\ref{max-tau-3})), which is identical to the maximal throughput that can be obtained in a point-to-point communication 
between relay and destination (source and relay).  Therefore, when $P_R> R_0/(R_0+S_0(1-P_S))$ ($P_S> S_0/(S_0+R_0(1-P_R))$) holds, as far as the achievable throughput is concerned, the three-node half-duplex  relay channel  is equivalent to 
the two-node $\mathcal{R}$-$\mathcal{D}$ ($\mathcal{S}$-$\mathcal{R}$) channel. 
%This is in contrast to Conventional Relaying 1 and 2 which cannot overcome the throughput loss of a factor of two caused by the half-duplex relay, cf.~(\ref{eq_qq-conv_1aa}), (\ref{trup_conv_2}).
\end{remark}

For comparison, we also provide the maximum throughput in the absence of outages $\tau_0$. The throughput in the absence of outages, $\tau_0$, can be obtained by setting $O_S(i)=O_R(i)=1$, $\forall i$, which is 
equivalent to setting $P_S=P_R=0$ in Theorem \ref{theorem2}. Then, Case 1 in Theorem \ref{theorem2} always holds  and the optimal link selection policy is
\begin{eqnarray}\label{sol-d_no_outage}
d_i=\left\{
\begin{array}{cl}
0 & \textrm{if }    \mathcal{C}=0 \\
1 & \textrm{if }   \mathcal{C}=1 \\
\end{array} 
\right.
\end{eqnarray} 
where the probability of $\mathcal{C}=1$ is given by 
\begin{equation}\label{find-P_C-1a}
  P_C =\frac{S_0}{S_0+R_0}.
\end{equation}
Based on (\ref{sol-d_no_outage}), the maximum throughput in the absence of outages is
\begin{eqnarray}\label{t_no_outage}
 \tau_0=\frac{S_0 R_0}{S_0+R_0} .
\end{eqnarray}
The throughput loss caused by outages can be observed by comparing (\ref{max-tau-1}), (\ref{max-tau-2}), and (\ref{max-tau-3}) with (\ref{t_no_outage}). 

We now provide the outage probability of the proposed buffer-aided relaying scheme with adaptive link selection.
%\begin{lemma}\label{lemma-OP-non-delay-OLD}
%The outage probability for the system described in Theorem \ref{theorem2}  is given %by
%\begin{eqnarray}\label{OP-non-delay}
%     F_{\rm out}=\left\{
%\begin{array}{cl}
%1-(1-P_R)R_0/S_0\;, & \textrm{if }  P_R> R_0/(R_0+S_0(1-P_S))\\
%1-(1-P_S)S_0/R_0\;, & \textrm{if }  P_S> S_0/(S_0+R_0(1-P_R)) \\
%P_S P_R\;, & \textrm{otherwise } 
%\end{array}
%\right. 
%\end{eqnarray}
%\end{lemma}

\begin{lemma}\label{lemma-OP-non-delay}
The outage probability of the system considered in Theorem \ref{theorem2}  is given by
\begin{eqnarray}
     F_{\rm out}=\left\{
\begin{array}{cl}
P_R-(1-P_R)R_0/S_0\;, & \textrm{if }  P_R> \frac{R_0}{R_0+S_0(1-P_S)}\\
P_S-(1-P_S)S_0/R_0\;, & \textrm{if }  P_S> \frac{S_0}{S_0+R_0(1-P_R)} \\
P_S P_R\;, & \textrm{otherwise} .
\end{array}
\right. \nonumber
\end{eqnarray}
\vspace{-7mm}
\begin{eqnarray}
    \label{OP-non-delay}
\end{eqnarray}
\end{lemma}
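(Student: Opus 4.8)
The plan is to invoke the definition of the outage probability in~(\ref{OP-main}), namely $F_{\rm out}=1-\tau/\tau_0$, and substitute the maximum outage-free throughput $\tau_0=S_0R_0/(S_0+R_0)$ from~(\ref{t_no_outage}) together with the three throughput expressions established in Theorem~\ref{theorem2}. Since the three conditions appearing in the lemma are precisely~(\ref{cond-PS-PR-2}), (\ref{cond-PS-PR-3}), and the complementary Case~1 region of Theorem~\ref{theorem2}, they are already known to be mutually exclusive and exhaustive; hence it suffices to evaluate $1-\tau/\tau_0$ separately in each of the three regimes.

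First, in the regime $P_R>R_0/(R_0+S_0(1-P_S))$ (Case~2), I would insert $\tau=R_0(1-P_R)$ from~(\ref{max-tau-2}) into~(\ref{OP-main}), which gives
\[
F_{\rm out}=1-\frac{R_0(1-P_R)(S_0+R_0)}{S_0R_0}=1-(1-P_R)-\frac{R_0}{S_0}(1-P_R)=P_R-(1-P_R)R_0/S_0.
\]
By the symmetric computation in the regime $P_S>S_0/(S_0+R_0(1-P_R))$ (Case~3), substituting $\tau=S_0(1-P_S)$ from~(\ref{max-tau-3}) yields $F_{\rm out}=P_S-(1-P_S)S_0/R_0$. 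Finally, in the remaining regime (Case~1), substituting $\tau=\frac{S_0R_0}{S_0+R_0}(1-P_SP_R)$ from~(\ref{max-tau-1}) produces an immediate cancellation of the prefactor against $\tau_0$, leaving $F_{\rm out}=1-(1-P_SP_R)=P_SP_R$. This establishes all three branches of~(\ref{OP-non-delay}).

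I would not expect any genuine obstacle here: the only content is the elementary algebra of dividing each throughput by $\tau_0$ and simplifying, while the fact that the three conditions partition the parameter space is inherited directly from the case analysis of Theorem~\ref{theorem2}. The one point worth handling carefully is the expansion of the factors $(S_0+R_0)/S_0$ and $(S_0+R_0)/R_0$ in Cases~2 and~3, so that the resulting cross terms recombine into the stated forms; everything else is a direct substitution.
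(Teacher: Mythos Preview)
Your proposal is correct and coincides exactly with the paper's first proof, which substitutes the three throughputs from Theorem~\ref{theorem2} and $\tau_0$ from~(\ref{t_no_outage}) into~(\ref{OP-main}). The paper additionally offers a second, more interpretive proof by computing directly the probability that neither node transmits in a time slot under each of the policies~(\ref{sol-d-1}),~(\ref{sol-d-2}),~(\ref{sol-d-3}), but your route is complete as stated.
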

\begin{proof}
Please refer to Appendix \ref{app_B}.
\end{proof}
\begin{remark}
In the proof of Lemma~\ref{lemma-OP-non-delay} given  in Appendix \ref{app_B}, it is  shown that an outage event happens when neither the source nor the relay transmit in a  time slot, i.e., the number of silent slots is identical to the number of outage events.
\end{remark}

In the high SNR regime, when the outage probabilities of both involved links are small, the expressions for the throughput and the outage probability can be simplified to obtain further insight 
into the performance of  buffer-aided relaying. This is addressed in the following lemma.
\begin{lemma}\label{lemma_1}
In the high SNR regime, $\gamma_S=\gamma_R=\gamma\to \infty$,  the throughput and the outage probability of the buffer-aided relaying system considered in Theorem \ref{theorem2} converge to 
\begin{eqnarray}
    \tau&\to&\tau_0= \frac{S_0 R_0}{S_0+R_0},\quad \textrm{as }\gamma\to\infty \;,\label{max-tau-1-high-snr}\\
     F_{\rm out}&=&  
P_S P_R . \label{prob-eq} 
\end{eqnarray}
\end{lemma}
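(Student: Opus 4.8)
The plan is to show that the high-SNR regime always falls under Case~1 of Theorem~\ref{theorem2}, after which both claims follow by direct substitution. First I would argue that $P_S\to 0$ and $P_R\to 0$ as $\gamma\to\infty$. Since $\Omega_S=\gamma_S\bar\Omega_S$ and $\Omega_R=\gamma_R\bar\Omega_R$ grow without bound when $\gamma_S=\gamma_R=\gamma\to\infty$, the probability masses $P_S={\rm Pr}\{s(i)<2^{S_0}-1\}$ and $P_R={\rm Pr}\{r(i)<2^{R_0}-1\}$ defined in (\ref{PS}) and (\ref{PR}) shrink to zero; for Rayleigh fading this is explicit, as $P_S=1-e^{-(2^{S_0}-1)/\Omega_S}\to 0$ and $P_R=1-e^{-(2^{R_0}-1)/\Omega_R}\to 0$.

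Next I would verify that the two inequalities (\ref{cond-PS-PR-1}) defining Case~1 hold for all sufficiently large $\gamma$. As $P_S,P_R\to 0$, the right-hand sides $S_0/(S_0+R_0(1-P_R))$ and $R_0/(R_0+S_0(1-P_S))$ converge to the strictly positive constants $S_0/(S_0+R_0)$ and $R_0/(R_0+S_0)$, whereas the left-hand sides $P_S$ and $P_R$ vanish. Hence both conditions are eventually satisfied, placing the system in Case~1 and excluding Cases~2 and~3 (equivalently, the strict inequalities (\ref{cond-PS-PR-2}) and (\ref{cond-PS-PR-3}) fail at high SNR).

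Given that Case~1 applies, the throughput claim follows from (\ref{max-tau-1}): since $\tau=\frac{S_0 R_0}{S_0+R_0}(1-P_S P_R)$ and $P_S P_R\to 0$, we obtain $\tau\to \frac{S_0 R_0}{S_0+R_0}=\tau_0$, where the last equality is (\ref{t_no_outage}). The outage-probability claim follows even more directly: since neither of the two threshold conditions in Lemma~\ref{lemma-OP-non-delay} is met at high SNR, the ``otherwise'' branch of (\ref{OP-non-delay}) applies, giving $F_{\rm out}=P_S P_R$ exactly. The only nontrivial step is the case identification in the second paragraph; once it is settled, both results are immediate substitutions, so I do not anticipate any real obstacle. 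If one wishes to avoid assuming Rayleigh fading, the convergence $P_S,P_R\to0$ only requires that the SNR distributions place vanishing mass below the fixed thresholds $2^{S_0}-1$ and $2^{R_0}-1$ as their scale grows with $\gamma$.
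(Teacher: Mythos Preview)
Your proposal is correct and follows essentially the same approach as the paper: the paper's proof simply notes that $P_S\to 0$ and $P_R\to 0$ at high SNR, so condition (\ref{cond-PS-PR-1}) holds and the outage probability is $P_SP_R$, while (\ref{max-tau-1}) reduces to (\ref{max-tau-1-high-snr}). Your argument is a more detailed version of exactly this reasoning, with the added observation that the right-hand sides of (\ref{cond-PS-PR-1}) converge to strictly positive constants.
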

\begin{proof}
In the high SNR regime, we have $P_S\to 0$ and  $P_R\to 0$. Thus, condition (\ref{cond-PS-PR-1}) always holds and therefore $F_{\rm out}$ is given by (\ref{prob-eq}).  Furthermore, as $P_S\to 0$ and  $P_R\to 0$, (\ref{max-tau-1}) simplifies to
(\ref{max-tau-1-high-snr}).
\end{proof}
%%%%%%%%%%%%%%%%%%%%%%%%%%%%%%%%%%%%%%%%%%%%%%%%%%%%%%%%%%%%%%%%%%%%%%
\subsection{Performance in Rayleigh Fading}\label{sec-R}
%%%%%%%%%%%%%%%%%%%%%%%%%%%%%%%%%%%%%%%%%%%%%%%%%%%%%%%%%%%%%%%%%%%%%%
For concreteness, we assume in this subsection that both links of the considered three-node relay system are Rayleigh fading. We examine the diversity order and the diversity-multiplexing trade-off.
\begin{lemma}\label{lemma_3}
For the special case of Rayleigh fading links,  the buffer-aided relaying system  considered in Theorem \ref{theorem2} achieves a diversity gain of two, i.e., in the high SNR regime, when $\gamma_S=\gamma_R=\gamma\to \infty$,  
the outage probability, $F_{\rm out}$,  decays on a log-log  scale with slope $-2$ as a function of the transmit SNR $\gamma$, and is given by
\begin{eqnarray}\label{prob-eq-1-high-snr}
      F_{\rm out}\to\frac{2^{S_0}-1}{\bar\Omega_S} \frac{2^{R_0}-1}{\bar\Omega_R}\frac{1}{\gamma^2}, \quad \textrm{as }\gamma\to\infty.
\end{eqnarray}
Furthermore,  the considered buffer-aided relaying system achieves a diversity-multiplexing trade-off, $DM(r)$, of
\begin{eqnarray}
    DM(r)=2(1-2 r),\quad 0<r<1/2.
\end{eqnarray}
\end{lemma}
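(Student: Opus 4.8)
The plan is to build on Lemma~\ref{lemma_1}, which already establishes that in the high SNR regime the outage probability collapses to the product form $F_{\rm out}=P_SP_R$. The only remaining work for the first claim is therefore to evaluate $P_S$ and $P_R$ explicitly under Rayleigh fading and to extract the SNR exponent. From (\ref{PS}) and (\ref{PR}) together with the Rayleigh pdfs $f_s$ and $f_r$, I would write $P_S=1-e^{-(2^{S_0}-1)/\Omega_S}$ and $P_R=1-e^{-(2^{R_0}-1)/\Omega_R}$, and substitute $\Omega_S=\gamma\bar\Omega_S$ and $\Omega_R=\gamma\bar\Omega_R$, which is valid since $\Omega_S=\gamma_S\bar\Omega_S$, $\Omega_R=\gamma_R\bar\Omega_R$, and $\gamma_S=\gamma_R=\gamma$. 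As $\gamma\to\infty$ the exponents vanish, so the first-order expansion $1-e^{-x}=x+o(x)$ gives $P_S\to(2^{S_0}-1)/(\gamma\bar\Omega_S)$ and $P_R\to(2^{R_0}-1)/(\gamma\bar\Omega_R)$. Multiplying these two limits yields (\ref{prob-eq-1-high-snr}) directly, and since $F_{\rm out}$ then scales as $\gamma^{-2}$, the diversity gain is two.

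For the diversity-multiplexing tradeoff, I would adopt the standard parameterization in which the rate grows logarithmically with SNR. Setting $S_0=R_0=R$, the throughput in the absence of outages is $\tau_0=R/2$ by (\ref{t_no_outage}), so the multiplexing gain is naturally defined as $r=\lim_{\gamma\to\infty}\tau_0/\log_2\gamma$, which fixes $R=2r\log_2\gamma$. I would then substitute this choice into the high-SNR outage expression (\ref{prob-eq-1-high-snr}), using $2^R-1\sim 2^R=\gamma^{2r}$, to obtain $F_{\rm out}\sim\gamma^{4r-2}/(\bar\Omega_S\bar\Omega_R)$. Applying the diversity-exponent definition $DM(r)=-\lim_{\gamma\to\infty}\log F_{\rm out}/\log\gamma$ then gives $DM(r)=2-4r=2(1-2r)$.

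Before applying the product form, I would verify that the scaling rates keep the system in Case~1 of Theorem~\ref{theorem2}, which is precisely where $F_{\rm out}=P_SP_R$ holds: when $S_0=R_0$ the thresholds in (\ref{cond-PS-PR-1}) tend to $1/2$, while $P_S,P_R\to0$, so Case~1 is indeed in force throughout the regime of interest. The one genuinely delicate point, which I expect to be the crux of the argument rather than any heavy computation, is the joint validity of the linearization and the stated range $0<r<1/2$. The expansion $P_S\approx(2^{S_0}-1)/\Omega_S$ is legitimate only if its argument $(2^R-1)/(\gamma\bar\Omega_S)\sim\gamma^{2r-1}/\bar\Omega_S$ tends to zero, which requires exactly $r<1/2$; the very same condition is what makes $DM(r)$ positive. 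Thus the admissible multiplexing range is not an extra assumption but is forced by the half-duplex factor of $1/2$ appearing in $\tau_0$, and I would emphasize this internal consistency as the key structural insight of the proof.
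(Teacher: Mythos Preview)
Your proof is correct and follows essentially the same route as the paper's: it invokes Lemma~\ref{lemma_1} to reduce to $F_{\rm out}=P_SP_R$, computes $P_S$ and $P_R$ from the Rayleigh cdfs, substitutes $\Omega_S=\gamma\bar\Omega_S$, $\Omega_R=\gamma\bar\Omega_R$, and Taylor-expands to obtain (\ref{prob-eq-1-high-snr}); for the DMT it likewise sets $S_0=R_0$, parameterizes the rate as $2r\log_2(1+\gamma)$ via $\tau_0=R_0/2$, and extracts the exponent $2-4r$. Your added checks that Case~1 of Theorem~\ref{theorem2} is in force and that the linearization requires $r<1/2$ are not in the paper's proof but are welcome refinements.
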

\begin{proof}
Please refer to Appendix \ref{app_F}.
\end{proof}
\begin{remark}
We recall that, for fixed rate transmission, both considered conventional relaying schemes without adaptive link selection achieved only a diversity gain of one, cf.~(\ref{eq16}), (\ref{eq17}), despite the fact that Conventional Relaying 1 also entails an infinite delay. Thus,
we expect large gains in terms of outage probability of the proposed buffer-aided relaying protocol with adaptive link selection compared to conventional relaying.
\end{remark}

The performance of the considered system can be further improved by optimizing the transmission rates $R_0$ and $S_0$ based on the channel statistics. For Rayleigh fading with given $\bar\Omega_S$ and $\bar\Omega_R$, we can optimize $R_0$ and $S_0$
for minimization of the outage probability. This is addressed in the following lemma.

\begin{lemma}\label{lemma4}
Assuming Rayleigh fading, the optimal transmission rates $S_0$ and $R_0$ that minimize the outage probability in the high SNR regime, while maintaining a throughput of $\tau_0$, are given by $R_0=S_0=2\tau_0$.
\end{lemma}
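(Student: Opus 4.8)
The plan is to cast the statement as a constrained optimization problem and solve it by Lagrange multipliers, exploiting the symmetry of both the objective and the constraint. In the high SNR regime, Lemma~\ref{lemma_3} (cf.~(\ref{prob-eq-1-high-snr})) gives the outage probability as $F_{\rm out}\to \frac{2^{S_0}-1}{\bar\Omega_S}\frac{2^{R_0}-1}{\bar\Omega_R}\frac{1}{\gamma^2}$. Since $\bar\Omega_S$, $\bar\Omega_R$, and $\gamma$ do not depend on the transmission rates, minimizing $F_{\rm out}$ over $S_0$ and $R_0$ is equivalent to minimizing $g(S_0,R_0)\triangleq (2^{S_0}-1)(2^{R_0}-1)$. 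The requirement of maintaining a throughput of $\tau_0$ fixes $\tau_0=S_0R_0/(S_0+R_0)$ by (\ref{t_no_outage}), which I would rewrite in the more convenient form $1/S_0+1/R_0=1/\tau_0$.

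First I would form the Lagrangian $\mathcal{L}=g(S_0,R_0)-\lambda\big(1/S_0+1/R_0-1/\tau_0\big)$ and set its partial derivatives to zero. Eliminating $\lambda$ from the two stationarity conditions yields the optimality condition
\[
\frac{S_0^2}{1-2^{-S_0}}=\frac{R_0^2}{1-2^{-R_0}}.
\]
Because $g$ and the constraint are both symmetric under the interchange $S_0\leftrightarrow R_0$, the symmetric point $S_0=R_0$ clearly satisfies this condition, and combined with the constraint $1/S_0+1/R_0=1/\tau_0$ it gives $S_0=R_0=2\tau_0$, exactly the claimed rates.

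The main obstacle is to establish that this symmetric stationary point is the \emph{unique} interior critical point and is indeed a global minimum. To rule out asymmetric solutions, I would show that $\psi(x)\triangleq x^2/(1-2^{-x})$ is strictly increasing on $(0,\infty)$, so that $\psi(S_0)=\psi(R_0)$ forces $S_0=R_0$. This reduces to the elementary inequality $2^{x+1}-2>x\ln 2$ for all $x>0$, which follows by noting equality at $x=0$ together with the convexity of the left-hand side and the comparison of derivatives at the origin. To confirm that $S_0=R_0=2\tau_0$ is a minimum rather than a maximum, I would parametrize the feasible curve (on which $S_0,R_0>\tau_0$) and observe that $g\to\infty$ as either $S_0\to\tau_0^+$ (whence $R_0\to\infty$) or $R_0\to\tau_0^+$; since $g$ is continuous and blows up at both ends of the curve while possessing a single interior critical point, that point must be the global minimizer. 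This completes the argument.
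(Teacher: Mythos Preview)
Your argument is correct. The paper takes a more direct route: it uses the throughput constraint $\tau_0=S_0R_0/(S_0+R_0)$ to eliminate $R_0=S_0\tau_0/(S_0-\tau_0)$, substitutes into the asymptotic outage expression (\ref{prob-eq-1-high-snr}), and minimizes the resulting single-variable function over $S_0$, obtaining $S_0=R_0=2\tau_0$ without further justification. Your Lagrange-multiplier approach is longer but more transparent: by writing the constraint symmetrically as $1/S_0+1/R_0=1/\tau_0$ and showing that $\psi(x)=x^2/(1-2^{-x})$ is strictly increasing, you establish that the symmetric point is the \emph{unique} interior stationary point, and the boundary blow-up argument then certifies it as the global minimum. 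The paper's substitution is quicker but leaves the uniqueness and global-minimum verification implicit; your method makes both explicit and highlights the underlying symmetry.
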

\begin{proof}
The throughput in the high SNR regime is given by (\ref{max-tau-1-high-snr}), which can be rewritten as $R_0=S_0 \tau_0 /(S_0-\tau_0)$. Inserting this into the asymptotic expression for $F_{\rm out}$ in (\ref{prob-eq-1-high-snr}) and minimizing it with respect 
to $S_0$ yields $S_0=R_0=2\tau_0$.
\end{proof}
\begin{remark}
For Rayleigh fading, although in the low SNR regime, the optimal $S_0$ and $R_0$ can be nonidentical, in the high SNR regime, independent of the values of $\bar\Omega_S$ and $\bar\Omega_R$, the minimum $F_{\rm out}$ is obtained for identical transmission
rates for both links. Furthermore, in the high SNR regime,  when $\gamma_S=\gamma_R\to\infty$,  for $S_0=R_0$, the coin flip probability $P_C$ converges to $P_C={\rm Pr}\{\mathcal{C}=1\}={\rm Pr}\{\mathcal{C}=0\}\to 1/2$.
\end{remark}
%%%%%%%%%%%%%%%%%%%%%%%%%%%%%%%%%%%%%%%%%%%%%%%%%%%%%%%%%%%%%%%%%%%%%%%%%%%%%%%%%%%%%
\section{Fixed Rate Transmission With Delay Constraints}\label{sec-delay}
%%%%%%%%%%%%%%%%%%%%%%%%%%%%%%%%%%%%%%%%%%%%%%%%%%%%%%%%%%%%%%%%%%%%%%%%%%%%%%%%%%%%%
The protocol proposed in Section~\ref{sec-3} does not impose any constraint on the delay that a transmitted bit experiences. However, in practice, most communication services require delay constraints. Therefore, in this section, we modify the buffer-aided
relaying protocol derived in the previous section to account for constraints on the average delay. Furthermore, we analyze the effect of the applied modification on the throughput and the outage probability.   For simplicity, throughout this section, we assume $S_0=R_0$. We note that the link selection protocols proposed in Section \ref{s4B} are also applicable to the case of $S_0\ne R_0$. However, since for $S_0\ne R_0$ the packets transmitted by the source do not contain the same number of bits as 
the packets transmitted by the relay, the Markov chain based throughput and delay analyses in Sections \ref{s4C} and \ref{s4D} would be more complicated. Since we found in the previous section that, for high SNR, identical source and relay transmission rates minimize the outage probability, we avoid these additional complications here and concentrate on the case $S_0=R_0$. Furthermore,  to facilitate our analysis, throughout this section, we assume temporally uncorrelated fading.  
\subsection{Preliminaries}
We define the delay of a bit as the time interval from its transmission by the source to its reception at the destination. Thus, assuming that the propagation delays in the $\mathcal{S}$-$\mathcal{R}$  and $\mathcal{R}$-$\mathcal{D}$  links are negligible,
the delay of a bit is identical to the time that the bit is held in the buffer. As a consequence, we can use Little's law \cite{little} and express the average delay as
\begin{eqnarray}\label{delay-main}
    E\{T\}=\frac{E\{Q\}}{A},
\end{eqnarray}
where $E\{Q\}=\lim_{N\to \infty}\sum_{i=1}^N Q(i)/N$ is the average length of the queue in the buffer of the relay and $A$ is the arrival rate in bits/slot into the queue as defined in (\ref{arr-rate}).
Since $E\{Q\}$ is given in bits and $A$ is given in bits/slot, the average delay $E\{T\}$ is given in time slots. From (\ref{delay-main}), we observe that the delay can be controlled via the queue size. 
%
%
%We  note that in the considered three-node network,  the source sends a new packet with $R_0$ bits every $t_S=1/E\{(1-d_i)\}$ slots on average, and the relay  receives a new packet with $R_0$ bits every $t_R=1/E\{d_i\}$ slots on average.
%
%Different applications have different  requirements for $t_S$ and $t_R$. Furthermore, a more practical model for  delays will have to take into consideration that the source is also equipped with a buffer and    include the rate with which  bits arrive in source's buffer  from higher level applications.  
%
%In the following, we will first propose a link selection policy for delay limited transmission. Then, based on this new policy, we will derive  the probability of the buffer having $k$ packets, the throughput and the average delay and propose a method for satisfying an average delay through adjusting the value of the coin flip probability $P_C$. Finally, we will prove that the proposed protocol achieves a diversity order of two in slot-by-slot uncorrelated  Rayleigh fading if the required average delay is larger than three time slots.
%%%%%%%%%%%%%%%
\subsection{Link Selection Protocol for Delay Limited Transmission\label{s4B}}
%%%%%%%%%%%%%%%%
As mentioned before, we modify the optimal link selection protocol derived in Section~\ref{sec-3} in order to limit the average delay. However, depending on the targeted average delay, somewhat different modifications are necessary, since
it is not possible to achieve any desired delay with one protocol. Hence, three different link selection protocols are introduced in the following proposition. 
\begin{proposition}\label{proposition1}
For fixed rate transmission with delay constraint, depending on the targeted average delay $E\{T\}$ and the outage probabilities $P_S$ and $P_R$, we propose the following policies:\\
\textbf{Case 1:}
If  $P_R<1/(2-P_S)$ and the required delay $E\{T\}$ satisfies
\begin{eqnarray}\label{rr-1}
    E\{T\}>  \frac{1}{1-P_R \left(2-P_S\right)} +\frac{2 \left(1-P_S\right)}{1-P_S P_R \left(2-P_S\right)},
\end{eqnarray}
we propose the following link selection variable $d_i$ to be used:
\begin{eqnarray}\label{sol-d-delay-1}
&&\hspace{-6mm}\textit{If $Q(i-1)\leq R_0$ and $O_S(i)=1$, then $d_i=0$,}\nonumber\\
 &&\hspace{-6mm}\textit{otherwise $d_i$ is given by (\ref{sol-d-1}).}
\end{eqnarray} 
\textbf{Case 2:}
 If $P_R<1/(2-P_S)$ and the required delay $E\{T\}$ satisfies
\begin{eqnarray}\label{rr-2}
   \frac{1}{1-P_R \left(2-P_S\right)} < E\{T\}  &&\hspace{-6mm} \leq \frac{1}{1-P_R \left(2-P_S\right)} \nonumber\\
&&\hspace{-6mm}+\frac{2 \left(1-P_S\right)}{1-P_S P_R \left(2-P_S\right)},
\end{eqnarray}
we propose the following link selection variable $d_i$ to be used:
\begin{eqnarray}\label{sol-d-delay-2}
 &&\hspace{-6mm}\textit{If $Q(i-1)=0$ and $O_S(i)=1$, then $d_i=0$,} \nonumber\\ 
&&\hspace{-6mm}\textit{otherwise $d_i$ is given by (\ref{sol-d-1}).}
\end{eqnarray} 
\textbf{Case 3:}  
If the required delay $E\{T\}$ satisfies
\begin{eqnarray}\label{rr-3}
  \frac{1}{1-P_R } <  E\{T\} \leq  \frac{1}{1-P_R \left(2-P_S\right)} ,
\end{eqnarray}
we propose the following link selection variable $d_i$ to be used:
\begin{eqnarray}\label{sol-d-delay-3}
&&\hspace{-6mm}\textit{If $Q(i-1)=0$ and $O_S(i)=1$, then $d_i=0$,} \nonumber\\
&&\hspace{-6mm}\textit{otherwise $d_i$ is given by (\ref{sol-d-2}).}
\end{eqnarray} 
For each of the proposed link selection variables $d_i$, the required delay can be met by adjusting the value of $P_C={\rm Pr}\{\mathcal{C}=1\}$, where the minimum and maximum delays are achieved with $P_C=1$ and $P_C=0$, respectively. 
\end{proposition}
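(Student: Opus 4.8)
The plan is to analyze the queue at the relay as a discrete-time Markov chain and to invoke Little's law (\ref{delay-main}) to express the average delay $E\{T\}$ in closed form as a function of the coin-flip probability $P_C$. Since $S_0=R_0$ and the fading is temporally uncorrelated, I would measure the buffer content in units of $R_0$, so that the queue state $Q(i-1)/R_0\in\{0,1,2,\ldots\}$ is a non-negative integer that increases by one whenever the source transmits successfully, decreases by one whenever the relay transmits successfully, and is otherwise unchanged. The four outage configurations $(O_S(i),O_R(i))$ occur with probabilities $(1-P_S)(1-P_R)$, $(1-P_S)P_R$, $P_S(1-P_R)$, and $P_S P_R$, and combining these with the link-selection rule of the relevant case yields the one-step transition probabilities of the chain.

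First I would treat the homogeneous ``bulk'' states, i.e., those large enough that the boundary modification ($Q(i-1)\le R_0$ in Case 1, or $Q(i-1)=0$ in Cases 2 and 3) does not apply and $\min\{R_0,Q(i-1)\}=R_0$ always holds. For these states the chain is a birth--death walk: from (\ref{sol-d-1}) the up-probability is $p(P_C)=(1-P_S)P_R+(1-P_S)(1-P_R)(1-P_C)$ and the down-probability is $q(P_C)=P_S(1-P_R)+(1-P_S)(1-P_R)P_C$ in Cases 1 and 2, with the analogous expressions following from (\ref{sol-d-2}) in Case 3. I would then solve for the stationary distribution, which by detailed balance is geometric with ratio $\rho=p(P_C)/q(P_C)$ in the bulk, after matching the few boundary states where the forced-source-transmission rule alters the transition probabilities. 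From the stationary distribution I obtain the arrival rate $A$ of (\ref{arr-rate}) and the mean queue length $E\{Q\}=\lim_{N\to\infty}\frac1N\sum_i Q(i)$, and Little's law gives $E\{T\}=E\{Q\}/A$ as an explicit rational function of $P_C$.

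The heart of the argument is then to show that this function is continuous and strictly monotone in $P_C$ on $[0,1]$: as $P_C$ increases, the relay is selected more often in the doubly-non-outage state, the geometric ratio $\rho$ decreases, the queue concentrates nearer the empty state, and $E\{T\}$ shrinks. Hence the set of achievable delays for each protocol is exactly the interval bounded by the endpoint values $E\{T\}|_{P_C=1}$ (minimum) and $E\{T\}|_{P_C=0}$ (maximum), and by the intermediate value theorem every delay in that interval is realizable by a suitable $P_C$. Evaluating the two endpoints and simplifying, I would verify that they coincide with the bounds in (\ref{rr-1}), (\ref{rr-2}), and (\ref{rr-3}); in particular the common quantity $1/(1-P_R(2-P_S))$ emerges as an endpoint shared between adjacent cases, which is what makes the three delay ranges abut and jointly cover all $E\{T\}>1/(1-P_R)$.

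I expect the main obstacle to be the correct treatment of the boundary states near the empty buffer, where the three protocols genuinely differ: the modification $Q(i-1)\le R_0$ in Case 1 reflects the walk two states above the origin while $Q(i-1)=0$ in Cases 2 and 3 reflects it one state above, and Case 3 additionally uses the base policy (\ref{sol-d-2}) rather than (\ref{sol-d-1}). Getting the stationary probabilities of these boundary states right---and ensuring that the event $R_0>Q(i-1)$ is now accounted for exactly rather than neglected as in the delay-unconstrained analysis of Remark \ref{remark_new_1}---is the delicate bookkeeping on which the exact matching of the delay bounds rests.
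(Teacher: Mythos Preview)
Your proposal is correct and follows essentially the same route as the paper: model the queue as a birth--death Markov chain on the packet count, compute the stationary distribution (geometric in the bulk with special boundary states), obtain $E\{Q\}$ and $A$, apply Little's law, and then evaluate $E\{T\}$ at the extreme values $P_C=0,1$ to read off the delay limits (\ref{rr-1})--(\ref{rr-3}). The paper carries this out via Theorem~\ref{theorem5} (finite buffer $L$), Lemma~\ref{lemma5} ($L\to\infty$), and Lemma~\ref{lemma5a} (endpoint evaluation); the only notable differences are that the paper starts from finite $L$ and then passes to the limit, and that it simply asserts the extremal delays occur at $P_C=1$ and $P_C=0$ rather than arguing monotonicity and invoking the intermediate value theorem as you do.
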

\begin{remark}\label{remark_limits} 
The delay limits given by (\ref{rr-1}), (\ref{rr-2}), and (\ref{rr-3}) arise from the analysis of the proposed protocols with link selection variables (\ref{sol-d-delay-1}), (\ref{sol-d-delay-2}), and (\ref{sol-d-delay-3}), respectively. We will investigate these
delay limits in Lemma~\ref{lemma5a} in Section \ref{s4C} and the corresponding proof is provided in Appendix \ref{proof-lemma5a}.
\end{remark}
\begin{remark}\label{remark_small_delay} 
We have not proposed a buffer-aided relaying protocol with adaptive link selection that can satisfy a required delay smaller than $1/(1-P_R)$. For such small delays, Conventional Relaying 2 without adaptive link selection can be used. 
\end{remark}
%%%%%%%%%%%%%%%%%%%%%%%%%
\subsection{Throughput and Delay\label{s4C}}
%%%%%%%%%%%%%%%%%%%%%%%%%
In the following, we analyze the throughput, the average delay, and the probability of having $k$ packets in the queue for the modified link selection protocols proposed in Proposition~\ref{proposition1} in the previous subsection. The results are summarized in the
following theorem.

\begin{theorem}\label{theorem5}
Consider a buffer-aided relaying system operating in temporally uncorrelated block fading. Let source and relay transmit with rate $R_0$, respectively, and let the buffer size at the relay be limited to $L$ packets each comprised of  $R_0$ bits. 
Assume that the relay drops newly received packets if the buffer is full. Then, depending on the adopted link selection protocol, the following cases can be distinguished:
\\
\textbf{Case 1:}
If the link selection variable $d_i$ is given by (\ref{sol-d-delay-1}), the  probability of the buffer having $k$ packets in its queue, ${\rm Pr}\{Q=k R_0 \}$, is obtained as
\begin{eqnarray}
 &&\hspace{-6mm}{\rm Pr}\{Q=kR_0\}=\nonumber\\
&&\hspace{-8.5mm}\left\{\hspace{-2mm}\begin{array}{ll}
 \frac{ p^{L-1}(2 p+q-1)(P_S-q)} {p^{L-1}(2p (1-q)+q(2-q)-P_S(2-P_S))-(1-p-q)^{L-1}(1-P_S)^2}, &\hspace{-3mm} k=0\\
 \frac{ p^{L-1}(2 p+q-1)(1-P_S)}{p^{L-1}(2p (1-q)+q(2-q)-P_S(2-P_S))-(1-p-q)^{L-1}(1-P_S)^2}, &\hspace{-3mm} k = 1\\
\frac{ p^{L-k}(2 p+q-1)(1-P_S)^2(1-p-q)^{k-2}}{p^{L-1}(2p (1-q)+q(2-q)-P_S(2-P_S))-(1-p-q)^{L-1}(1-P_S)^2}, &\hspace{-3mm} k\hspace{-0.5mm}=\hspace{-0.5mm}2..L
\end{array}\right.\nonumber
\end{eqnarray}
\vspace{-7mm}
\begin{eqnarray}\label{eq-Q-L-1}
\end{eqnarray}
where $p$ and $q$ are given by
\begin{eqnarray}\label{p&q-1}
p=(1-P_S)(1-P_R)P_C+P_S(1-P_R)\;;\quad q=P_SP_R \label{eq-q} .
\end{eqnarray}
Furthermore, the average queue length, $E\{Q\}$, the average delay, $E\{T\}$, and throughput, $\tau$, are given by
\begin{eqnarray}
 E\{Q\}=&&\hspace{-6mm} R_0\frac{1-P_S}{2 p+q-1} \nonumber\\
 \times &&\hspace{-6mm}
\big[ p^{L-1} \left((2 p+q)^2-p-q-P_S (3 p+q-1)\right)\nonumber\\
&&\hspace{-6mm}-(1-P_S) (1-p-q)^{L-1} (L (2 p+q-1)+p)\big]\nonumber\\
/
&&\hspace{-6mm}\big[p^{L-1} (2 p (1-q)+(2-q) q-(2-P_S) P_S)\nonumber\\
&&\hspace{-6mm}-(1-P_S)^2 (1-p-q)^{L-1}\big]
\label{mean-Q-1}
\end{eqnarray}
\begin{eqnarray}
 E\{T\}= &&\hspace{-6mm}\frac{1}{2 p+q-1} \nonumber\\
 \times &&\hspace{-6mm}
\big[p^{L-1} \left((2 p+q)^2-P_S (3 p+q-1)-p-q\right)\nonumber\\
&&\hspace{-6mm}-(1-P_S) (1-p-q)^{L-1} (L (2 p+q-1)+p)\big]\nonumber\\
/&&\hspace{-6mm}
\big[p^{L-1} (P_S (p+q-1)-q (2 p+q)+p+q)\nonumber\\
&&\hspace{-6mm}-(1-P_S) p (1-p-q)^{L-1}\big]
\label{delay-1}
\end{eqnarray}
\begin{eqnarray}
 \tau=&&\hspace{-6mm}(1-P_S)
\big[(1-P_S) p (1-p-q)^{L-1}\nonumber\\
&&\hspace{-6mm}+p^{L-1} (P_S (1-p-q)+q (2 p+q)-p-q)\big]\nonumber\\
/&&\hspace{-6mm}
\big[p^{L-1} ((2-P_S) P_S-2 p (1-q)-(2-q) q)\nonumber\\
&&\hspace{-6mm}+(1-P_S)^2 (1-p-q)^{L-1}\big].\label{trup-delay-1}
\end{eqnarray}
\textbf{Case 2:}
If link selection variable $d_i$ is given by either (\ref{sol-d-delay-2}) or (\ref{sol-d-delay-3}), the  probability of the buffer having $k$ packets in its queue, ${\rm Pr}\{Q=k R_0 \}$, is given by
\begin{eqnarray}
 {\rm Pr}\{Q=kR_0\}=\hspace{-0.5mm}\left\{\hspace{-2.5mm}
\begin{array}{ll}
 \frac{p^L (2 p+q-1)}{p^L (2 p+q-P_S)-(1-P_S) (1-p-q)^L}, & \hspace{-3mm} k=0\\
\frac{(1-P_S) (2 p+q-1) p^{L-k} (1-p-q)^{k-1}}{p^L (2 p+q-P_S)-(1-P_S) (1-p-q)^L}, & \hspace{-3mm} k=1... L
\end{array}\right.\nonumber
\end{eqnarray}
\vspace{-7mm}
\begin{eqnarray}\label{eq-Q-L-2}
\end{eqnarray}
where, if link selection variable $d_i$ is given by (\ref{sol-d-delay-2}), $p$ and $q$ are given by (\ref{p&q-1}), while if link selection variable $d_i$ is given by (\ref{sol-d-delay-3}), $p$ and $q$ are given by 
\begin{eqnarray}\label{p&q-2}
    p=1-P_R \quad {\rm and} \quad q=P_S P_R +(1-P_S) P_R P_C.
\end{eqnarray}
Furthermore, the average queue length, $E\{Q\}$, the average delay, $E\{T\}$, and  throughput, $\tau$,  are given by
\begin{eqnarray}
       E\{Q\}=&&\hspace{-6mm}R_0\frac{1-P_S}{2 p+q-1}\nonumber\\
\times&&\hspace{-6mm}\frac{ p^{L+1}-(1-p-q)^L (L (2 p+q-1)+p)}{ p^L (2 p+q-P_S)-(1-P_S) (1-p-q)^L},\quad
\label{mean-Q-2}
\end{eqnarray}
\begin{eqnarray}
E\{T\}=&&\hspace{-6mm}\frac{1}{2p+q-1}\frac{1}{p}\nonumber\\
\times&&\hspace{-6mm}
\frac{p^{L+1}-(1-p-q)^L (L (2 p+q-1)+p)}{p^L-(1-p-q)^L},\quad
\label{delay-2}
\end{eqnarray}
\begin{equation}
\tau=R_0(1-P_S) p \frac{p^L-(1-p-q)^L}{p^L (2 p+q-P_S)-(1-P_S) (1-p-q)^L}.\label{trup-delay-2}
\end{equation}
\end{theorem}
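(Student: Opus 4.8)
The plan is to recognize that, under temporally uncorrelated fading with $S_0=R_0$, the number of packets stored in the buffer, $Q(i)/R_0\in\{0,1,\dots,L\}$, forms a homogeneous finite-state Markov chain, and to extract all three quantities from its stationary distribution. First I would establish the Markov property: since the fading, and hence the outage indicators $O_S(i)$, $O_R(i)$ and the coin flip $\mathcal{C}$, are independent across time slots, and since each link selection rule in Proposition~\ref{proposition1} chooses $d_i$ as a function only of the current state $Q(i-1)$ together with $O_S(i)$, $O_R(i)$, and $\mathcal{C}$, the next state $Q(i)$ depends on the past only through $Q(i-1)$. Moreover, because of the half-duplex constraint only one node transmits per slot, so from (\ref{s_1}) and (\ref{s_2}) the queue changes by at most one packet per slot; the chain is therefore a birth--death (skip-free) chain on $\{0,\dots,L\}$.

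Next I would compute the transition probabilities. In the interior (states not affected by the low-queue modification and below the full state $L$), the policy reduces to (\ref{sol-d-1}) or (\ref{sol-d-2}); evaluating the probability of a successful relay transmission (a ``down'' move, requiring $d_i=1$ and $O_R(i)=1$), of neither node transmitting (``stay''), and of a successful source transmission (``up'', requiring $d_i=0$ and $O_S(i)=1$) yields exactly the quantities $p$ (down), $q$ (stay), and $1-p-q$ (up) stated in the theorem, for the two base policies respectively. The boundary rows differ in two places. At the top state $k=L$ the full-buffer drop rule converts every ``up'' move into a self-loop, which leaves the cut between $L-1$ and $L$ unchanged. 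At the bottom, the low-queue modification forces the source to transmit whenever it can: for (\ref{sol-d-delay-1}) this alters the rows $k=0$ and $k=1$, whereas for (\ref{sol-d-delay-2}) and (\ref{sol-d-delay-3}) only the row $k=0$ is altered. This difference is precisely what produces the three-piece form of ${\rm Pr}\{Q=kR_0\}$ in (\ref{eq-Q-L-1}) versus the two-piece form in (\ref{eq-Q-L-2}).

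I would then solve for the stationary distribution using the local (cut) balance equations of a birth--death chain: for each $k$, the upward flow out of $k$ equals the downward flow into $k+1$, which in the interior gives the geometric recursion $\pi_{k+1}=\frac{1-p-q}{p}\,\pi_k$. Imposing the modified balance equations at the boundary states fixes the anomalous $k=0$ (and, in Case~1, $k=1$) terms, and normalizing $\sum_{k=0}^{L}\pi_k=1$ via finite geometric sums produces the closed forms in (\ref{eq-Q-L-1}) and (\ref{eq-Q-L-2}). From these, $E\{Q\}=R_0\sum_{k=0}^{L} k\,\pi_k$ follows by summing an arithmetic--geometric series, giving (\ref{mean-Q-1}) and (\ref{mean-Q-2}). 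The throughput is the average departure rate, $\tau=R_0\sum_{k\geq 1}\pi_k\,{\rm Pr}\{\text{relay transmits successfully}\mid Q=kR_0\}$, which by flow conservation equals the \emph{accepted} arrival rate; substituting the stationary probabilities yields (\ref{trup-delay-1}) and (\ref{trup-delay-2}). Finally, the average delay follows from Little's law (\ref{delay-main}) with the effective arrival rate equal to $\tau$, i.e.\ $E\{T\}=E\{Q\}/\tau$, reproducing (\ref{delay-1}) and (\ref{delay-2}).

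The routine but delicate part is the boundary bookkeeping together with the ensuing algebra: one must correctly identify which rows each protocol perturbs, keep the drop-induced self-loop at $k=L$ from corrupting the geometric recursion, and simplify the finite sums into the stated rational expressions. The one genuinely subtle point, which I expect to be the main obstacle, is the denominator used in Little's law: because packets arriving to a full buffer are dropped, the nominal arrival rate $A$ in (\ref{arr-rate}) overcounts, and the correct rate of bits that actually traverse the queue is the throughput $\tau$; this is why $E\{T\}=E\{Q\}/\tau$ rather than $E\{Q\}/A$, and it is the identity $E\{T\}\,\tau=E\{Q\}$ that ties the three displayed formulas together consistently.
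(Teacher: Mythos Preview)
Your plan is correct and essentially mirrors the paper's proof: set up the finite-state birth--death Markov chain for the packet count, solve for the stationary distribution, compute $E\{Q\}$ as a weighted sum, and then obtain $E\{T\}$ via Little's law. The only cosmetic difference is that you solve via local (cut) balance, while the paper writes down the full transition matrix $\mathbf{M}$ and solves $\boldsymbol{\pi}\mathbf{M}=\boldsymbol{\pi}$ together with normalization; for a skip-free chain these are of course equivalent. Your observation about Little's law is also consistent with the paper: the arrival rate the paper actually plugs in (its equation for $A$ in the proof) already excludes the full-buffer state, i.e.\ it is the accepted arrival rate, which by flow conservation equals $\tau$---so your $E\{T\}=E\{Q\}/\tau$ and the paper's $E\{T\}=E\{Q\}/A$ coincide.
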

\begin{proof}
Please refer to Appendix \ref{proof-th-5}. 
\end{proof}
Due to their complexity, the equations in Theorem \ref{theorem5} do not provide much insight into the performance of the considered system. To overcome this problem, we consider the case 
$L\gg 1$, which leads to significant simplifications and design insight. This is addressed in the following lemma.
\begin{lemma}\label{lemma5}
For the system considered in Theorem \ref{theorem5}, assume that $L\to \infty$. In this case, for a system with link selection variable $d_i$ given by  (\ref{sol-d-delay-1}), (\ref{sol-d-delay-2}), or (\ref{sol-d-delay-3}) to 
be able to achieve a fixed delay, $E\{T\}$, that does not grow with $L$ as $L\to \infty$,  the condition $2p+q-1>0$ must hold. If $2p+q-1>0$ holds, the following simplifications can be made for each of the considered link selection variables:
\\
\textbf{Case 1:}
If  the link selection variable $d_i$ is given by (\ref{sol-d-delay-1}), the probability of the buffer being empty, the average delay, $E\{T\}$, and throughput, $\tau$, simplify to 
\begin{equation}
 {\rm Pr}\{Q=0\} =P_S \frac{ 2 P_C (1-P_R) (1-P_S)+(2-P_R) P_S-1} 
{2 P_C (1-P_S)(1-P_S P_R)+P_S^2 (1-P_R)} \label{eq-Q_0-app-1}
\end{equation} 
\begin{eqnarray}
&&\hspace{-6mm} E\{T\}=\frac{1}{2 P_C (1-P_R) (1-P_S)-P_R P_S+2 P_S-1}\qquad\nonumber\\
 &&\hspace{-2mm}+\frac{2 P_C (1-P_S)}{P_S^2 (P_C (2 P_R-1)-P_R+1)-2 P_C P_R P_S+P_C} 
\label{delay-app-1}
\end{eqnarray} 
\begin{eqnarray}
\tau=&&\hspace{-6mm}R_0 (1-P_S)\nonumber\\
\times&&\hspace{-6mm}\frac{P_S^2 (P_C (2 P_R-1)-P_R+1)-2 P_C P_R P_S+P_C}{2 P_C (1-P_S) (1-P_S
   P_R)+(1-P_R) P_S^2} .\qquad\label{trup-delay-app-1}
\end{eqnarray} 
\textbf{Case 2:}
If  the link selection variable $d_i$ is given by  (\ref{sol-d-delay-2}),  the probability of the buffer being empty, the average delay, $E\{T\}$, and  throughput, $\tau$, simplify to 
\begin{eqnarray}
    {\rm Pr}\{Q\hspace{-0.6mm}=\hspace{-0.6mm}0\} \hspace{-0.6mm}=&&\hspace{-6mm}\frac{2 P_C (1-P_R) (1-P_S)+P_S(2-P_R)-1}{(1-P_R) (P_S+2 P_C (1-P_S))}\qquad\label{eq-Q_0-app-2}\\
  E\{T\}=&&\hspace{-6mm}\frac{1}{2 P_C (1-P_R) (1-P_S)-P_R P_S+2 P_S-1}
\label{delay-app-2}\;\\
  \tau=&&\hspace{-6mm}R_0 (1-P_S)\frac{ P_C (1-P_S)+P_S}{2 P_C (1-P_S)+P_S}. \label{trup-delay-app-2}
\end{eqnarray} 
\textbf{Case 3:}
If the link selection variable $d_i$ is given by (\ref{sol-d-delay-3}), the probability of the buffer being empty, the average delay, $E\{T\}$, and the throughput, $\tau$, simplify to 
\begin{eqnarray}
    {\rm Pr}\{Q=0\} =&&\hspace{-6mm}\frac{1-P_R  (2-P_S-P_C(1- P_S))}{2-P_S-P_R (2-P_S-P_C (1-P_S))}
    \label{eq-Q_0-app-3}\qquad\quad\\
  E\{T\}=&&\hspace{-6mm}\frac{1}{1-P_R  (2-P_S-P_C(1- P_S))}
\label{delay-app-3}\\
  \tau=&&\hspace{-6mm}R_0 \frac{1+P_S P_R -P_R-P_S}{2-P_S-P_R (2-P_S-P_C (1-P_S))}. \label{trup-delay-app-3}
\end{eqnarray} 
For each of the considered cases, the probability $P_C$ can be used to adjust the desired average delay $E\{T\}$ in (\ref{delay-app-1}), (\ref{delay-app-2}), and (\ref{delay-app-3}).
\end{lemma}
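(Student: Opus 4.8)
The plan is to obtain Lemma~\ref{lemma5} directly from the closed-form expressions of Theorem~\ref{theorem5} by letting $L\to\infty$. The starting observation is that every quantity in Theorem~\ref{theorem5}---${\rm Pr}\{Q=kR_0\}$, $E\{Q\}$, $E\{T\}$, and $\tau$---is a ratio whose numerator and denominator are linear combinations of the two geometric sequences $p^{L}$ (or $p^{L-1}$) and $(1-p-q)^{L}$ (or $(1-p-q)^{L-1}$), with some terms additionally carrying a factor of $L$. Hence the entire limiting behaviour is governed by the comparison of the two bases $p$ and $1-p-q$, and I would record at the outset the identity $2p+q-1=p-(1-p-q)$, so that the sign of $2p+q-1$ is exactly the sign of $p-(1-p-q)$. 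I would also verify the sign bounds for both parameterizations: substituting (\ref{p&q-1}) gives $1-p-q=(1-P_S)\bigl(1-(1-P_R)P_C\bigr)\ge 0$, and a similar check holds for (\ref{p&q-2}); together with $p\le 1$ this guarantees $0\le 1-p-q$ and, whenever $2p+q-1>0$, the strict ordering $0\le 1-p-q<p$.

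Next I would prove necessity of the condition $2p+q-1>0$. Suppose instead $2p+q-1<0$, i.e.\ $1-p-q>p$. Then $(1-p-q)^{L}$ is the dominant geometric factor, and in the delay formulas (\ref{delay-1}) and (\ref{delay-2}) the numerator contains the term $L(2p+q-1)+p$ multiplied by $(1-p-q)^{L-1}$ (resp.\ $(1-p-q)^{L}$), whereas the corresponding denominator carries no factor of $L$. Dividing numerator and denominator by $(1-p-q)^{L-1}$ (resp.\ $(1-p-q)^{L}$) and letting $L\to\infty$ shows that $E\{T\}$ grows at least linearly in $L$, so no fixed delay independent of $L$ can be achieved. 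The boundary $2p+q-1=0$ is excluded separately, since there the leading prefactor $1/(2p+q-1)$ is singular and a direct limit again yields a delay that diverges with $L$. This establishes that $2p+q-1>0$ is necessary.

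Under the standing assumption $2p+q-1>0$ I would then carry out the simplifications. Because $0\le (1-p-q)/p<1$, both $\bigl((1-p-q)/p\bigr)^{L}\to 0$ and $L\bigl((1-p-q)/p\bigr)^{L}\to 0$. For each quantity I divide numerator and denominator by the dominant power $p^{L-1}$ (Case~1) or $p^{L}$ (Case~2/3), discard the vanishing subdominant terms, and read off the finite limit. For instance, (\ref{delay-2}) collapses to $E\{T\}\to 1/(2p+q-1)$, while (\ref{trup-delay-2}) and (\ref{mean-Q-2}) reduce analogously to ratios of the $p$-coefficients only. Finally I substitute the explicit parameterizations---(\ref{p&q-1}) for the variables of (\ref{sol-d-delay-1}) and (\ref{sol-d-delay-2}), and (\ref{p&q-2}) for (\ref{sol-d-delay-3})---into the reduced ratios and simplify to reach the displayed forms. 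As a consistency check, $2p+q-1$ coincides with the denominators appearing in (\ref{delay-app-2}) and (\ref{delay-app-3}), confirming the limit $E\{T\}\to 1/(2p+q-1)$.

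I expect the main obstacle to be purely algebraic bookkeeping in Case~1 rather than any conceptual difficulty: after dividing by $p^{L-1}$, the limit of (\ref{delay-1}) is a single rational function of $p$ and $q$, and turning it into the two-summand form displayed in (\ref{delay-app-1})---and likewise reconciling (\ref{mean-Q-1}) and (\ref{trup-delay-1}) with (\ref{trup-delay-app-1})---requires a careful expansion in $P_S$, $P_R$, $P_C$ using $p+q=(1-P_S)(1-P_R)P_C+P_S$. Since the three-parameter substitutions are lengthy, I would organize them by first rewriting $2p+q-1$, $1-p-q$, and $p+q$ in terms of $P_S$, $P_R$, $P_C$ and then reusing these groupings throughout.
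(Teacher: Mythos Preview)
Your proposal is correct and follows essentially the same approach as the paper: identify $2p+q-1=p-(1-p-q)$ so the sign determines which geometric base dominates, show $E\{T\}$ grows like $L$ when $2p+q-1<0$, and when $2p+q-1>0$ drop the subdominant $(1-p-q)^L$ terms before substituting (\ref{p&q-1}) or (\ref{p&q-2}). Your treatment is slightly more careful than the paper's---you explicitly verify $1-p-q\ge 0$ and flag the boundary $2p+q-1=0$---but the argument is the same.
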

\begin{proof}
Please refer to Appendix \ref{5a}.
\end{proof}
As already mentioned in Proposition~\ref{proposition1}, it is not possible to achieve any desired average delay with the proposed buffer-aided link selection protocols. The limits of the achievable average delay for each of the proposed  link selection 
variables $d_i$ in Proposition~\ref{proposition1} are provided in the following lemma.
\begin{lemma}\label{lemma5a}
Depending on the adopted link selection variable $d_i$ the following cases can be distinguished for the average delay:\\
\textbf{Case 1:} If the link selection variable $d_i$ is given by  (\ref{sol-d-delay-1}), then if  $P_R<1/(2-P_S)$ and $P_S<1/(2-P_R)$, the system can achieve any average delay $E\{T\}\geq T_{\rm min,1}$, where $T_{\rm min,1}$ is given by
\begin{eqnarray}\label{delay-app-min-1}
   T_{\rm min,1}= \frac{1}{1-P_R \left(2-P_S\right)} +\frac{2 \left(1-P_S\right)}{1-P_S P_R \left(2-P_S\right)}.
\end{eqnarray}
On the other hand, if  $P_R<1/(2-P_S)$ and $P_S>1/(2-P_R)$, the system can  achieve any average delay in the interval $T_{\rm min,1}\leq E\{T\}\leq T_{\rm max,1}$, where $T_{\rm max,1}$
is given by 
\begin{eqnarray}\label{delay-app-max-1}
    T_{\rm max,1}=  \frac{1}{P_S(2-P_R)-1}.
\end{eqnarray} 
\textbf{Case 2:} If  the link selection variable $d_i$ is given by  (\ref{sol-d-delay-2}), then if  $P_R<1/(2-P_S)$ and $P_S<1/(2-P_R)$, the system can achieve any average delay $E\{T\}\geq T_{\rm min,2}$, where  $T_{\rm min,2}$ is given by
\begin{eqnarray}\label{delay-app-min-2}
    T_{\rm min,2}=
\frac{1}{1-P_R(2-P_S)}.
\end{eqnarray} 
However, if $P_R<1/(2-P_S)$ and $P_S>1/(2-P_R)$, the system can achieve any average delay $T_{\rm min,2}\leq E\{T\}\leq T_{\rm max,2}$, where $T_{\rm max,2}=T_{\rm max,1}$.
\\
\textbf{Case 3:} If  the link selection variable $d_i$ is given by (\ref{sol-d-delay-3}), then if  $P_R>1/(2-P_S)$, the system can achieve any average delay $E\{T\}\geq T_{\rm min,3}$, where  $T_{\rm min,3}$ is given by
\begin{eqnarray}\label{delay-app-min-3}
    T_{\rm min,3}=  
\frac{1}{1-P_R}.
\end{eqnarray} 
On the other hand, if  $P_R<1/(2-P_S)$, the system can achieve any average delay $T_{\rm min,3}\leq E\{T\}\leq T_{\rm max,3}$, where $T_{\rm max,3}=T_{\rm min,2}$.
\end{lemma}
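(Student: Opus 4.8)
The plan is to take the closed-form $L\to\infty$ delay expressions of Lemma~\ref{lemma5}, namely (\ref{delay-app-1}), (\ref{delay-app-2}), and (\ref{delay-app-3}), and to regard each as a function of the single free parameter $P_C={\rm Pr}\{\mathcal{C}=1\}\in[0,1]$. For a fixed link selection variable, the set of achievable average delays is precisely the image of $[0,1]$ under this map, so it suffices to prove that $E\{T\}$ is continuous and strictly monotonic in $P_C$ and then to evaluate the two endpoints $P_C=1$ and $P_C=0$. Throughout, the standing assumptions on $P_S$ and $P_R$ guarantee that $E\{T\}$ is positive and finite at $P_C=1$, which will identify $P_C=1$ with the minimum delay.

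I would first dispatch Cases~2 and~3, which are straightforward because (\ref{delay-app-2}) and (\ref{delay-app-3}) both have the form $E\{T\}=1/g(P_C)$ with $g$ affine in $P_C$ and slope $2(1-P_R)(1-P_S)>0$ (Case~2) and $P_R(1-P_S)>0$ (Case~3). Hence $E\{T\}$ strictly decreases wherever $g(P_C)>0$, so the minimum is at $P_C=1$: evaluating $g(1)=1-P_R(2-P_S)$ and $g(1)=1-P_R$ recovers $T_{\rm min,2}$ and $T_{\rm min,3}$ in (\ref{delay-app-min-2}) and (\ref{delay-app-min-3}). The maximum is governed by $P_C=0$, where $g(0)=P_S(2-P_R)-1$ (Case~2) and $g(0)=1-P_R(2-P_S)$ (Case~3). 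If $g(0)>0$ the endpoint is finite and yields $T_{\rm max,2}=1/(P_S(2-P_R)-1)$ when $P_S>1/(2-P_R)$ and $T_{\rm max,3}=1/(1-P_R(2-P_S))=T_{\rm min,2}$ when $P_R<1/(2-P_S)$; if $g(0)<0$, then since $g(1)>0$ there is a unique root $P_C^*\in(0,1)$ of $g$, so $E\{T\}\to\infty$ as $P_C\downarrow P_C^*$ and every $E\{T\}\geq T_{\rm min}$ is achievable. The intermediate value theorem closes each interval.

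The delicate part, and the step I expect to be the main obstacle, is the monotonicity in Case~1, because (\ref{delay-app-1}) is a sum $E\{T\}=1/D_1+2P_C(1-P_S)/D_2$ of a first term that increases and a second term that decreases as $P_C$ decreases, so monotonicity of the sum is not evident. Here $D_1=2P_C(1-P_R)(1-P_S)+P_S(2-P_R)-1$ and $D_2$ is the second denominator in (\ref{delay-app-1}). I would differentiate, obtaining $dE\{T\}/dP_C=-2(1-P_R)(1-P_S)/D_1^2+2(1-P_S)P_S^2(1-P_R)/D_2^2$, whose sign reduces to comparing $D_2^2$ with $P_S^2 D_1^2$. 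The crux is the algebraic identity $D_2-P_S D_1=(1-P_S)\,[(1-P_S)P_C+P_S]>0$, which holds for all $P_C\in[0,1]$; since $D_1,D_2>0$ on the relevant range, it gives $D_2>P_S D_1\geq 0$, hence $D_2^2>P_S^2 D_1^2$ and $dE\{T\}/dP_C<0$. Thus $E\{T\}$ is strictly decreasing on $\{P_C:D_1>0\}$.

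Finally I would evaluate the Case~1 endpoints. At $P_C=1$, simplifying $D_1=1-P_R(2-P_S)$ and $D_2=1-P_SP_R(2-P_S)$ turns (\ref{delay-app-1}) into $T_{\rm min,1}$ of (\ref{delay-app-min-1}), which is positive precisely when $P_R<1/(2-P_S)$. At $P_C=0$ the second term vanishes and $D_1=P_S(2-P_R)-1$, so $E\{T\}|_{P_C=0}=1/(P_S(2-P_R)-1)=T_{\rm max,1}$ as in (\ref{delay-app-max-1}) whenever $P_S>1/(2-P_R)$; if instead $P_S<1/(2-P_R)$, then $D_1$ changes sign on $(0,1)$, the delay diverges, and every $E\{T\}\geq T_{\rm min,1}$ is achievable. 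Combining the monotonicity, the endpoint evaluations, and continuity via the intermediate value theorem establishes all three cases and their sub-cases.
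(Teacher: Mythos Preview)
Your proposal is correct and follows essentially the same approach as the paper: evaluate the closed-form delay expressions of Lemma~\ref{lemma5} at the endpoints $P_C=1$ and $P_C=0$, and identify the validity conditions on $P_S,P_R$ with the requirement $2p+q-1>0$ (which is exactly your $D_1>0$). The paper simply asserts that the minimum and maximum delays occur at $P_C=1$ and $P_C=0$ respectively and then reads off the conditions, whereas you additionally supply the monotonicity argument (including the nontrivial identity $D_2-P_SD_1=(1-P_S)[(1-P_S)P_C+P_S]$ for Case~1) and the intermediate-value step; this extra rigor fills a gap the paper leaves implicit but does not change the route.
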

\begin{proof}
Please refer to Appendix \ref{proof-lemma5a}.
\end{proof}
In the following, we investigate the outage probability of the proposed buffer-aided relaying protocol for delay constrained fixed rate transmission.
%%%%%%%%%%%%%%%%%%%%%%%%%%%%%%%%%%%%%
\subsection{Outage Probability\label{s4D}}
%%%%%%%%%%%%%%%%%%%%%%%%%%%%%%%%%%
The following theorem specifies the outage probability.
\begin{theorem}\label{theorem6}
For the considered buffer-aided relaying protocol in Proposition~\ref{proposition1}, if the required delay can be satisfied by using the link selection variable $d_i$ in either (\ref{sol-d-delay-1}) or (\ref{sol-d-delay-2}), the outage probability is given by 
\begin{eqnarray}\label{OP-delay-1-and-2}
    F_{\rm out} =&&\hspace{-6mm} P_S {\rm Pr}\{Q=0  \}\nonumber\\
+&&\hspace{-6.5mm}  P_S P_R\big(1-{\rm Pr}\{Q=0  \}-  {\rm Pr}\{Q=LR_0  \}\big) \nonumber\\
+&&\hspace{-6.5mm} 
\big((1\hspace{-0.5mm} -\hspace{-0.5mm} P_S) P_R +(1\hspace{-0.5mm} -\hspace{-0.5mm} P_S P_R)(1\hspace{-0.5mm} - \hspace{-0.5mm} P_C) \big) {\rm Pr}\{Q=LR_0  \} ,\nonumber\\
\end{eqnarray}
where if $d_i$ is given by  (\ref{sol-d-delay-1}), ${\rm Pr}\{Q=0\}$  and ${\rm Pr}\{Q=LR_0\}$ are given by (\ref{eq-Q-L-1}) with $p$ and $q$  given by (\ref{p&q-1}). On the other hand, if $d_i$ is given by  (\ref{sol-d-delay-2}), ${\rm Pr}\{Q=0\}$  and 
${\rm Pr}\{Q=LR_0\}$ are given by (\ref{eq-Q-L-2}) with $p$ and $q$ given by (\ref{p&q-1}).

If the required delay is satisfied by using the link selection variable  $d_i$ given by  (\ref{sol-d-delay-3}), then the outage probability is given by 
\begin{eqnarray}\label{OP-delay-3}
    F_{\rm out} =&&\hspace{-6mm} P_S {\rm Pr}\{Q=0  \}\nonumber\\
+&&\hspace{-6mm}  P_S P_R\big(1-{\rm Pr}\{Q=0  \}- {\rm Pr}\{Q=LR_0  \}\big) \nonumber\\
+&&\hspace{-6mm} (1-P_S) P_R (1- P_C)  {\rm Pr}\{Q=LR_0  \} ,
\end{eqnarray}
where ${\rm Pr}\{Q=0\}$  and ${\rm Pr}\{Q=LR_0\}$ are given by (\ref{sol-d-delay-2}) with $p$ and $q$  given by (\ref{p&q-2}).
\end{theorem}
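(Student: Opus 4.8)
The plan is to obtain $F_{\rm out}$ as the long-run fraction of slots containing an outage event and to evaluate it by conditioning on the buffer occupancy $Q(i-1)$ at the start of each slot. By the remark following Lemma~\ref{lemma-OP-non-delay}, an outage coincides with a slot in which no data advances the end-to-end flow; in the finite-buffer setting this means either a silent slot (neither node transmits) or a slot in which the relay receives a packet that is then dropped. Since the queue is an ergodic Markov chain with the stationary law ${\rm Pr}\{Q=kR_0\}$ computed in Theorem~\ref{theorem5}, the outage probability decomposes as $F_{\rm out}=\sum_{k=0}^{L}{\rm Pr}\{Q=kR_0\}\,{\rm Pr}\{\textrm{outage}\mid Q(i-1)=kR_0\}$. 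The first step is therefore to split the occupancy states into the three regions that appear in (\ref{OP-delay-1-and-2}) and (\ref{OP-delay-3}): the empty buffer $k=0$, the interior $1\le k\le L-1$, and the full buffer $k=L$, and to record that away from both boundaries the rule of Proposition~\ref{proposition1} reduces to its underlying form, i.e.\ to (\ref{sol-d-1}) for the protocols (\ref{sol-d-delay-1})/(\ref{sol-d-delay-2}) and to (\ref{sol-d-2}) for (\ref{sol-d-delay-3}), so that each conditional outage probability depends only on $(O_S(i),O_R(i))$ and the coin flip $\mathcal{C}$.

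Next I would evaluate the conditional outage probability in each region directly from the policy. On the empty buffer the relay has nothing to forward, so the slot is productive only if the source succeeds; here the boundary modification of Proposition~\ref{proposition1} is essential, since forcing $d_i=0$ whenever $O_S(i)=1$ is exactly what prevents a wasted relay selection on an empty queue, leaving outage probability ${\rm Pr}\{O_S(i)=0\}=P_S$. In the interior either the source ($O_S(i)=1$) or the relay ($O_R(i)=1$) can transmit productively, so a slot is wasted only when both links fail, giving $P_SP_R$; one checks that this value is unchanged at $k=1$ whether or not the lower-boundary modification is active, which is why protocols (\ref{sol-d-delay-1}) and (\ref{sol-d-delay-2}) share the formula (\ref{OP-delay-1-and-2}) and differ only through ${\rm Pr}\{Q=0\}$ and ${\rm Pr}\{Q=LR_0\}$. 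Weighting this by ${\rm Pr}\{0<Q<LR_0\}=1-{\rm Pr}\{Q=0\}-{\rm Pr}\{Q=LR_0\}$ yields the middle term.

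On the full buffer the situation is reversed: the source can no longer deposit data (a received packet is dropped), so the slot advances the end-to-end flow only if the relay drains the backlog. I would enumerate, conditioned on $(O_S(i),O_R(i))$ and $\mathcal{C}$, the events in which the relay does \emph{not} successfully drain the queue — the source-selection/drop events, the relay-selection-with-$O_R(i)=0$ events, and the both-outage events — and collect them, using $P_C$, into the stated full-buffer coefficient $(1-P_S)P_R+(1-P_SP_R)(1-P_C)$ for (\ref{sol-d-1}) and $(1-P_S)P_R(1-P_C)$ for (\ref{sol-d-2}). The final step is purely bookkeeping: substitute ${\rm Pr}\{Q=0\}$ and ${\rm Pr}\{Q=LR_0\}$ from Theorem~\ref{theorem5} — from (\ref{eq-Q-L-1}) with $(p,q)$ as in (\ref{p&q-1}) for (\ref{sol-d-delay-1}), from (\ref{eq-Q-L-2}) with (\ref{p&q-1}) for (\ref{sol-d-delay-2}), and from (\ref{eq-Q-L-2}) with (\ref{p&q-2}) for (\ref{sol-d-delay-3}) — and assemble the three contributions into (\ref{OP-delay-1-and-2}) and (\ref{OP-delay-3}).

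I expect the full-buffer boundary to be the main obstacle. There the half-duplex silence events, the dropped-deposit events, and the coin flip interact, and the indeterminate choice $\varepsilon$ at $(O_S,O_R)=(0,0)$ together with the drop convention must be reconciled with the very transition probabilities $p,q$ of (\ref{p&q-1})/(\ref{p&q-2}) that were used to derive the stationary distribution, so that the per-state outage count is consistent with the chain of Theorem~\ref{theorem5}. The empty and interior regions are comparatively routine; it is precisely at the full buffer that the two protocol families produce different coefficients, and where the event accounting is most delicate.
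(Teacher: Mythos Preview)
Your approach is essentially the same as the paper's: the paper also conditions on the buffer occupancy (empty, interior, full) and enumerates, for each region, the events in which no useful transmission takes place (silent slots and, at the full buffer, dropped packets), then weights these by the stationary probabilities from Theorem~\ref{theorem5}. The paper even mentions the alternative of deriving $F_{\rm out}$ directly from $F_{\rm out}=1-\tau/\tau_0$ and the throughput expressions of Theorem~\ref{theorem5}, but carries out the event-count argument you outline; your anticipation that the full-buffer bookkeeping is the delicate part is accurate.
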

\begin{proof}
Please refer to Appendix \ref{app_J}.
\end{proof}
The expressions for $F_{\rm out}$ in Theorem \ref{theorem6} are valid for general $L$. However, significant  simplifications are possible if $L\gg 1$.  This is addressed in the following lemma.
\begin{lemma}
When $L\to \infty$, the outage probability  given by (\ref{OP-delay-1-and-2}) and (\ref{OP-delay-3}) simplifies to
\begin{eqnarray}\label{OP-delay-1-and-2_a}
    F_{\rm out} = P_S {\rm Pr}\{Q=0  \}+  P_S P_R\big(1-{\rm Pr}\{Q=0  \}\big),
\end{eqnarray}
where ${\rm Pr}\{Q=0\}$  is given by (\ref{eq-Q_0-app-1}), (\ref{eq-Q_0-app-2}), and (\ref{eq-Q_0-app-3}) if $d_i$ is given by (\ref{sol-d-delay-1}),  (\ref{sol-d-delay-2}),  and (\ref{sol-d-delay-3}), respectively.
\end{lemma}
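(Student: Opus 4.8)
The plan is to show that, as $L\to\infty$, the full-buffer probability ${\rm Pr}\{Q=LR_0\}$ vanishes, so that every term in (\ref{OP-delay-1-and-2}) and (\ref{OP-delay-3}) that multiplies ${\rm Pr}\{Q=LR_0\}$ drops out and both expressions collapse to the common form (\ref{OP-delay-1-and-2_a}). The only nontrivial ingredient is the decay of ${\rm Pr}\{Q=LR_0\}$, which I would establish directly from the closed-form queue distributions in Theorem~\ref{theorem5}.

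First I would invoke the stability condition. By Lemma~\ref{lemma5}, a finite average delay that does not grow with $L$ can be achieved only if $2p+q-1>0$. This inequality is equivalent to $1-p-q<p$; since $p>0$ and, for both parameterizations (\ref{p&q-1}) and (\ref{p&q-2}), one verifies $1-p-q\ge 0$, it yields the key ratio $(1-p-q)/p<1$, which drives the geometric decay below.

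Next I would take the $k=L$ entry of the queue distribution and pass to the limit. For the link selection variable (\ref{sol-d-delay-1}), setting $k=L$ in (\ref{eq-Q-L-1}) gives
\[
{\rm Pr}\{Q=LR_0\}=\frac{(2p+q-1)(1-P_S)^2(1-p-q)^{L-2}}{p^{L-1}\big(2p(1-q)+q(2-q)-P_S(2-P_S)\big)-(1-P_S)^2(1-p-q)^{L-1}}.
\]
Dividing numerator and denominator by $p^{L-1}$, the numerator is proportional to $\big((1-p-q)/p\big)^{L-2}\to 0$, while the denominator tends to the finite nonzero constant $2p(1-q)+q(2-q)-P_S(2-P_S)$ (nonzero because it also governs the well-defined limit ${\rm Pr}\{Q=0\}$ in (\ref{eq-Q_0-app-1})); hence ${\rm Pr}\{Q=LR_0\}\to 0$. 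For (\ref{sol-d-delay-2}) and (\ref{sol-d-delay-3}), the $k=L$ entry of (\ref{eq-Q-L-2}) has a numerator proportional to $(1-p-q)^{L-1}$ and a denominator dominated by the $p^{L}$ term, so the same dominant-balance argument again gives ${\rm Pr}\{Q=LR_0\}\to 0$.

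Finally I would substitute ${\rm Pr}\{Q=LR_0\}=0$ into (\ref{OP-delay-1-and-2}) and (\ref{OP-delay-3}). In both, the entire last line is proportional to ${\rm Pr}\{Q=LR_0\}$ and vanishes, and the middle line loses its $-{\rm Pr}\{Q=LR_0\}$ contribution, so what survives in each case is exactly $P_S{\rm Pr}\{Q=0\}+P_SP_R\big(1-{\rm Pr}\{Q=0\}\big)$, i.e.\ (\ref{OP-delay-1-and-2_a}); the remaining ${\rm Pr}\{Q=0\}$ is the $L\to\infty$ limit already computed in Lemma~\ref{lemma5}, namely (\ref{eq-Q_0-app-1}), (\ref{eq-Q_0-app-2}), or (\ref{eq-Q_0-app-3}) for the three respective link selection variables. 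The main obstacle is the dominant-balance step: one must correctly identify which of the competing geometric terms $p^{L}$ and $(1-p-q)^{L}$ dominates the denominator, and confirm that the stability inequality $2p+q-1>0$ is precisely what forces $p>1-p-q$, so that ${\rm Pr}\{Q=LR_0\}$ decays to zero rather than saturating at a positive value.
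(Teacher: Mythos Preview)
Your proposal is correct and follows the same approach as the paper: both proofs simply let ${\rm Pr}\{Q=LR_0\}\to 0$ as $L\to\infty$ in (\ref{OP-delay-1-and-2}) and (\ref{OP-delay-3}) to obtain (\ref{OP-delay-1-and-2_a}). The paper's proof is a single line that asserts this limit without justification, whereas you supply the missing dominant-balance argument from the stability condition $2p+q-1>0$ of Lemma~\ref{lemma5} and the closed-form queue distributions of Theorem~\ref{theorem5}; so your version is strictly more detailed but not different in substance.
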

\begin{proof}
Eq.~(\ref{OP-delay-1-and-2_a}) is obtained by letting ${\rm Pr}\{Q=LR_0  \}\to 0$ when $L\to\infty$ in (\ref{OP-delay-1-and-2}) and (\ref{OP-delay-3}).
\end{proof}
The expression for the outage probability in (\ref{OP-delay-1-and-2_a}) can be further simplified in the high SNR regime, which provides insight into the achievable diversity gain. This is summarized in the following theorem.
\begin{theorem}\label{theo-outage-high-snr}
 In the high SNR regime, when $\gamma_S=\gamma_R=\gamma\to \infty$, depending on the required delay that the system has to satisfy, two cases can be distinguished:  
\\
\textbf{Case 1:} If   $1<E\{T\}\leq 3$, the outage probability asymptotically converges to  
\begin{eqnarray}\label{OP-h-snr-2}
    F_{\rm out} \to \frac{P_S} {E\{T\}+1}, \quad \textrm{as }\gamma\to \infty.
\end{eqnarray}
\textbf{Case 2:} If   $E\{T\}>3$, the outage probability  asymptotically converges to  
\begin{eqnarray}\label{OP-h-snr-1}
    F_{\rm out} \to\frac{P_S^2} {E\{T\}-1}+P_S P_R, \quad \textrm{as }\gamma\to \infty .
\end{eqnarray}
Therefore, assuming Rayleigh fading, the considered system  achieves a diversity gain of two if and only if  $E\{T\}>3$.
\end{theorem}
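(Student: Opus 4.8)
The plan is to start from the infinite-buffer outage expression (\ref{OP-delay-1-and-2_a}) and feed into it the closed-form expressions for ${\rm Pr}\{Q=0\}$ and $E\{T\}$ from Lemma~\ref{lemma5}, eliminating the free parameter $P_C$ in favour of the target delay $E\{T\}$ in the limit $P_S,P_R\to0$. The first step is to decide which of the link selection variables in Proposition~\ref{proposition1} is active in each delay regime. Since $\gamma\to\infty$ forces $P_S\to0$ and $P_R\to0$, the delay thresholds of Lemma~\ref{lemma5a} converge: $T_{\rm min,1}$ in (\ref{delay-app-min-1}) tends to $1+2=3$, while the thresholds $1/(1-P_R)$ and $1/(1-P_R(2-P_S))$ delimiting the ranges of (\ref{sol-d-delay-3}) and (\ref{sol-d-delay-2}) both tend to $1$, so the range of the Case~3 variable collapses. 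Moreover, a short expansion of (\ref{delay-app-min-1}) shows $T_{\rm min,1}>3$ for all sufficiently small $P_S,P_R$. Hence for a fixed target $E\{T\}>3$, eventually $E\{T\}>T_{\rm min,1}$ and the active variable is (\ref{sol-d-delay-1}); for a fixed target $1<E\{T\}\le3$ we have $E\{T\}\le3<T_{\rm min,1}$ at every sufficiently high SNR, so the active variable is (\ref{sol-d-delay-2}).

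Consider first $1<E\{T\}\le3$, where ${\rm Pr}\{Q=0\}$ and $E\{T\}$ are given by (\ref{eq-Q_0-app-2}) and (\ref{delay-app-2}). Letting $P_S,P_R\to0$ in (\ref{delay-app-2}) gives $1/E\{T\}\to 2P_C-1$, i.e. $P_C\to(E\{T\}+1)/(2E\{T\})$, which indeed lies in $(1/2,1]$ for $E\{T\}\in(1,3]$. Substituting the same limit into (\ref{eq-Q_0-app-2}), whose numerator and denominator tend to $2P_C-1$ and $2P_C$ respectively, yields ${\rm Pr}\{Q=0\}\to(2P_C-1)/(2P_C)=1/(E\{T\}+1)$. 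Because ${\rm Pr}\{Q=0\}$ here tends to a nonzero constant, the first term of (\ref{OP-delay-1-and-2_a}) is of order $P_S$ while the second is of order $P_SP_R$; dropping the latter gives $F_{\rm out}\to P_S/(E\{T\}+1)$, which is (\ref{OP-h-snr-2}).

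For $E\{T\}>3$ the relevant quantities are (\ref{eq-Q_0-app-1}) and (\ref{delay-app-1}). The key structural difference is twofold. First, in (\ref{delay-app-1}) the second summand tends to $2$ as $P_S,P_R\to0$ (its numerator and denominator both reduce to $P_C$ up to the factor $2$), so that $E\{T\}\to 1/(2P_C-1)+2$, giving $2P_C-1\to 1/(E\{T\}-2)$; this is exactly why the regime boundary sits at $E\{T\}=3$, equivalently $P_C=1$. Second, and decisively, the whole expression (\ref{eq-Q_0-app-1}) carries an explicit prefactor $P_S$, with the residual fraction again tending to $(2P_C-1)/(2P_C)$. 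Combining, ${\rm Pr}\{Q=0\}\to P_S(2P_C-1)/(2P_C)=P_S/(E\{T\}-1)$. Now \emph{both} terms of (\ref{OP-delay-1-and-2_a}) are second order: $P_S\,{\rm Pr}\{Q=0\}\to P_S^2/(E\{T\}-1)$ and $P_SP_R(1-{\rm Pr}\{Q=0\})\to P_SP_R$, so $F_{\rm out}\to P_S^2/(E\{T\}-1)+P_SP_R$, which is (\ref{OP-h-snr-1}).

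Finally, I would close with the diversity claim, which follows from the Rayleigh asymptotics $P_S=1-e^{-(2^{S_0}-1)/\Omega_S}\sim(2^{S_0}-1)/(\bar\Omega_S\gamma)$ and similarly $P_R\sim(2^{R_0}-1)/(\bar\Omega_R\gamma)$, so $P_S,P_R=\Theta(1/\gamma)$. Then (\ref{OP-h-snr-2}) gives $F_{\rm out}=\Theta(1/\gamma)$ (diversity one) for $1<E\{T\}\le3$, whereas (\ref{OP-h-snr-1}) gives $F_{\rm out}=\Theta(1/\gamma^2)$ (diversity two) for $E\{T\}>3$; hence diversity two is achieved if and only if $E\{T\}>3$. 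The main obstacle is the bookkeeping of asymptotic orders in the two inversions: one must verify that the extra factor $P_S$ multiplying (\ref{eq-Q_0-app-1}) --- absent from (\ref{eq-Q_0-app-2}) --- is precisely what demotes ${\rm Pr}\{Q=0\}$ from $\Theta(1)$ to $\Theta(P_S)$ and thereby promotes the diversity order from one to two, and to confirm via the monotone limit $T_{\rm min,1}\downarrow3$ that the protocol switch genuinely occurs at $E\{T\}=3$, so that the ``if and only if'' is sharp --- in particular that $E\{T\}=3$ itself falls in the diversity-one regime because $T_{\rm min,1}>3$ at finite SNR forces the use of (\ref{sol-d-delay-2}).
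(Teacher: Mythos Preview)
Your proposal is correct and follows essentially the same approach as the paper's own proof: identify which link selection variable is active in each delay regime via the high-SNR limits of the thresholds in Lemma~\ref{lemma5a}, take $P_S,P_R\to0$ in the closed-form expressions for ${\rm Pr}\{Q=0\}$ and $E\{T\}$ from Lemma~\ref{lemma5}, eliminate $P_C$ in favour of $E\{T\}$, and substitute into the infinite-buffer outage formula (\ref{OP-delay-1-and-2_a}). Your additional remarks on why the switch occurs sharply at $E\{T\}=3$ and the explicit Rayleigh $\Theta(1/\gamma)$ scaling for the diversity claim are useful refinements, but the core argument coincides with the paper's.
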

\begin{proof}
Please refer to Appendix \ref{proof-outage-high-snr}.
\end{proof}
According to Theorem~\ref{theo-outage-high-snr}, for Rayleigh fading, a diversity gain of two can be also achieved for delay constrained transmission, which underlines the appeal of buffer-aided relaying with adaptive link selection compared
to conventional relaying, which only achieves a diversity gain of one even in case of infinite delay (Conventional Relaying 1).
%%%%%%%%%%%%%%%%%%%%%%%%%%%%%%%%%%%%%%%%%%%%%%%%%%%%%%%%%%%%%%%%%%%%%%%%
\section{Mixed Rate Transmission}\label{sec-hybrid}
%%%%%%%%%%%%%%%%%%%%%%%%%%%%%%%%%%%%%%%%%%%%%
In  this section, we investigate buffer-aided relaying protocols with adaptive link selection for mixed rate transmission. In particular, we assume that the source does not have CSIT and transmits with fixed rate $S_0$ but the relay has full CSIT 
and transmits with the maximum possible rate, $R(i)=\log_2(1+r(i))$, that does not cause an outage in the $\mathcal{R}$-$\mathcal{D}$ channel. For this scenario, we consider first delay unconstrained transmission and derive the optimal
link adaptive buffer-aided relaying protocols with and without power allocation. Subsequently, we investigate the impact of delay constraints. 

Before we proceed, we note that for mixed rate transmission the throughput can be expressed as
\begin{eqnarray}\label{eq_tau}
    \tau=\lim_{N\to\infty}\frac{1}{N}\sum_{i=1}^N d_i   \min\{\log_2(1+r(i)),Q(i-1)\},
\end{eqnarray}
where we used (\ref{3-lit}) and (\ref{eq1}).
For the derivation of the maximum throughput of buffer-aided relaying with adaptive link selection the following theorem is useful.
\begin{theorem}\label{theorem1-mixed}
The link selection policy that maximizes the throughput of the considered buffer-aided relaying system for mixed rate transmission can  be found in the set of link selection policies that satisfy   
\begin{equation}\label{cond-1-mixed}
    \lim_{N\to\infty}\frac{1}{N}\sum_{i=1}^N (1-d_i) O_S(i)  S_0 = \lim_{N\to\infty}\frac{1}{N}\sum_{i=1}^N   d_i \log_2(1+r(i))\;.
\end{equation}
Furthermore, for link selection policies within this set, the throughput is given by the right (and left) hand side of (\ref{cond-1-mixed}).
\end{theorem}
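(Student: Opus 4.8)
The plan is to mirror the argument used for Theorem~\ref{theorem1} in the fixed-rate case, replacing the fixed relay contribution $O_R(i)R_0$ by the adaptive rate $\log_2(1+r(i))$. First I would introduce the arrival rate $A$ (the left-hand side of (\ref{cond-1-mixed})) and the unconstrained departure rate $\bar D\triangleq\lim_{N\to\infty}\frac{1}{N}\sum_{i=1}^N d_i\log_2(1+r(i))$ (the right-hand side), and recall from (\ref{eq_tau}) that the actual throughput is $\tau=\lim_{N\to\infty}\frac{1}{N}\sum_{i=1}^N d_i\min\{\log_2(1+r(i)),Q(i-1)\}$. Two elementary bounds hold for every policy: $\tau\le\bar D$, because $\min\{a,b\}\le a$; and $\tau\le A$, because over a long horizon the total number of departing bits cannot exceed the total number of arriving bits plus the finite initial backlog, so that dividing the telescoped recursion by $N$ and letting $N\to\infty$ yields $\tau\le A$.

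The core step is to show $\tau=\min\{A,\bar D\}$ for \emph{every} policy. Summing the queue recursion $Q(i)=Q(i-1)+S_{\mathcal{S}\mathcal{R}}(i)-R_{\mathcal{R}\mathcal{D}}(i)$ telescopes to $A-\tau=\lim_{N\to\infty}Q(N)/N$. If the queue is non-absorbing (this limit is zero) then $\tau=A$; if it is absorbing ($Q(i)\to\infty$) then $Q(i-1)$ eventually dominates $\log_2(1+r(i))$ in all but a vanishing fraction of slots, so the $\min$ becomes inactive and $\tau=\bar D$. I would then tie these regimes to the sign of $A-\bar D$: if $A>\bar D$ the queue must be absorbing (otherwise $\tau=A>\bar D\ge\tau$, a contradiction), giving $\tau=\bar D=\min\{A,\bar D\}$; if $A<\bar D$ the queue must be non-absorbing, giving $\tau=A=\min\{A,\bar D\}$.

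This same contradiction disposes of the case $A=\bar D$ and thereby proves the second assertion of the theorem: an absorbing queue would force $\bar D=\tau=A-\lim_{N\to\infty}Q(N)/N<A=\bar D$, which is impossible, so the queue is non-absorbing and $\tau=A=\bar D$. Hence any policy lying in the set defined by (\ref{cond-1-mixed}) has throughput equal to the common value on either side of the equation. For the first assertion I would note that switching $d_i$ from $1$ to $0$ in a slot raises $A$ and lowers $\bar D$, while switching from $0$ to $1$ does the opposite, so $A$ and $\bar D$ move monotonically in opposite directions as selections are reallocated. Since $\tau=\min\{A,\bar D\}$, any policy with $A\neq\bar D$ can be strictly improved by shifting selections toward the smaller of the two rates until they balance; therefore a throughput-maximizing policy must satisfy $A=\bar D$, i.e.\ lie in the set (\ref{cond-1-mixed}).

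The step I expect to be the main obstacle is the absorbing-queue analysis, which is more delicate here than in the fixed-rate case: because $\log_2(1+r(i))$ is a continuous, unbounded random variable rather than the bounded quantity $R_0$, establishing that the $\min$ becomes inactive requires controlling the vanishing contribution of the rare slots in which $\log_2(1+r(i))$ exceeds the (growing) backlog $Q(i-1)$. This is precisely where the finiteness of $E\{\log_2(1+r(i))\}$ must be invoked to guarantee that the rate loss from these slots tends to zero as $N\to\infty$.
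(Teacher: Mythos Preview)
Your proposal is correct and follows essentially the same route as the paper, which simply says to mirror the proof of Theorem~\ref{theorem1} with $O_R(i)R_0$ replaced by $\log_2(1+r(i))$. Your repackaging via the identity $\tau=\min\{A,\bar D\}$ is a tidy way to organize the same index-shifting and conservation-of-flow argument the paper uses in Appendix~\ref{app_for_t_1}.

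One point worth highlighting: you correctly flag the unboundedness of $\log_2(1+r(i))$ as the place where the transplant is not purely mechanical. The paper's continuity step in Appendix~\ref{app_for_t_1} explicitly invokes finiteness of $R_0$ to rule out discontinuities when an $\epsilon$-fraction of indices is moved; the paper's one-line proof of Theorem~\ref{theorem1-mixed} does not revisit this. Your observation that $E\{\log_2(1+r(i))\}<\infty$ is what makes the contribution of the exceptional slots vanish is exactly the missing ingredient, so your proof is in fact slightly more careful than the paper's on this point.
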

\begin{proof}
A proof of this theorem can obtained by  replacing $O_R(i) R_0$ by $\log_2(1+r(i))$ in the proof of Theorem~\ref{theorem1} given in Appendix~\ref{app_for_t_1}. 
\end{proof}
Hence, similar to fixed rate transmission, for the set of policies considered in Theorem \ref{theorem1-mixed}, for $N\to\infty$, the buffer at the relay is practically always fully backlogged. Thus, the 
$\min(\cdot)$ function in (\ref{eq_tau}) can be omitted and the throughput is given by the right hand side of (\ref{cond-1-mixed}). 
%%%%%%%%%%%%%%%%%%%%%%%%%%%%%%%%%%%%%%%%
\subsection{Optimal Link Selection Policy Without Power Allocation}\label{sec_mixed_no_delay}
% %%%%%%%%%%%%%%%%%%%%%%%%%%%%%%%%%%%%%%%%
Since the relay has the instantaneous CSI of both links, it can also optimize its transmit power. However, to get more insight, we first consider the case where the relay transmits with fixed power. We note that power allocation is not always desirable as it
requires highly linear power amplifiers and thus, increases the implementation complexity of the relay. 

According to Theorem~\ref{theorem1-mixed}, the optimal link selection policy maximizing the throughput can be found in the set of policies that satisfy (\ref{cond-1-mixed}). 
Therefore, the optimal policy can be obtained from the following optimization problem
\begin{eqnarray}\label{MPR-mixed-1}
\begin{array}{ll}
 {\underset{d_i}{\rm{Maximize: }}}&\frac{1}{N}\sum_{i=1}^N d_i \log_2\big(1+r(i)\big) \\
{\rm{Subject\;\; to: }} &{\rm C1:}\, \frac{1}{N}\sum_{i=1}^N (1-d_i) O_S(i) S_0\\
&\qquad  =\frac{1}{N}\sum_{i=1}^N d_i \log_2\big(1+r(i)\big) \\
  &{\rm C2:} \, d_i(1-d_i)=0,\; \forall i ,\\
\end{array}
\end{eqnarray}
where $N\to\infty$, constraint C1 ensures that the search for the optimal policy is conducted only among the policies that satisfy (\ref{cond-1-mixed}), and C2 ensures that $d_i\in\{0,1\}$. The solution of (\ref{MPR-mixed-1}) leads to the following theorem.
 
\begin{theorem}\label{theorem7} Let   the  pdfs of $s(i)$ and $r(i)$ be denoted by $f_s(s)$ and $f_r(r)$, respectively. Then, for the considered  buffer-aided relaying system in which the source transmits with a fixed rate $S_0$ and fixed power $\mathcal{P_S}$, 
and the relay  transmits with an adaptive rate $R(i)=\log_2(1+r(i))$ and fixed power $\mathcal{P_R}$, two cases have to be distinguished for the optimal link selection variable $d_i$, which maximizes the throughput:
\\
\textbf{Case 1: } If 
\begin{eqnarray}\label{cond-PS-mixed-non-pa-1}
    P_S\leq \frac{S_0}{S_0+\int_0^\infty \log_2(1+r) f_r(r) dr}
\end{eqnarray}
holds, then
\begin{eqnarray}\label{sol-d-mixed-1}
d_i=\left\{
\begin{array}{cl}
1 & \textrm{if }   O_S(i)=0\\
1 & \textrm{if }   O_S(i)=1 \textrm{ AND } r(i)\geq 2^{\rho S_0}-1\\
0 & \textrm{if }   O_S(i)=1 \textrm{ AND } r(i)<2^{\rho S_0}-1 \;,\\
\end{array} 
\right.
\end{eqnarray} 
where $\rho$ is a constant which can be found as the solution of 
\begin{eqnarray}\label{eq11}
 S_0(1-P_S) \int_{0}^{2^{\rho S_0}-1}\hspace{-8mm}f_r(r) dr =&&\hspace{-6mm}P_S\int_{0}^{\infty} \hspace{-2mm}\log_2(1+r) f_r(r) dr\nonumber\\ 
&&\hspace{-22mm}+(1-P_S)\int_{2^{\rho S_0}-1}^{\infty}\hspace{-6mm}\log_2(1+r)f_r(r) dr \;.
\end{eqnarray}
In this case, the maximum throughput is given by the right (and left) hand side of (\ref{eq11}).
\\ 
\textbf{Case 2: } If  (\ref{cond-PS-mixed-non-pa-1}) does not hold, then
\begin{eqnarray}\label{sol-d-mixed-2}
d_i=\left\{
\begin{array}{cl}
0 & \textrm{if }   O_S(i)=1\\
1 & \textrm{if }   O_S(i)=0 \;.\\
\end{array} 
\right.
\end{eqnarray} 
In this case, the maximum throughput is  given by
\begin{eqnarray}\label{max-tau-mixed-non-pa-2}
    \tau=S_0(1-P_S)\;.
\end{eqnarray}
%and the fraction of  throughput lost  due to outages is upper bounded by  $F_{\rm out}\leq P_S$.
\end{theorem}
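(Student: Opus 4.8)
The plan is to build directly on Theorem~\ref{theorem1-mixed}, which confines the search to policies satisfying the flow-balance condition (\ref{cond-1-mixed}) and identifies the throughput with the common value of its two sides. I would first observe that in every slot with $O_S(i)=0$ the source link is in outage, so the arrival term $(1-d_i)O_S(i)S_0$ vanishes regardless of $d_i$; choosing $d_i=1$ then adds $\log_2(1+r(i))\ge 0$ to the departure (and hence to the throughput) at no cost to the balance constraint, whereas $d_i=0$ merely wastes the slot. Thus $d_i=1$ is optimal whenever $O_S(i)=0$, which already fixes the first line of both (\ref{sol-d-mixed-1}) and (\ref{sol-d-mixed-2}). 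The remaining freedom lies entirely in the slots with $O_S(i)=1$.

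For the $O_S(i)=1$ slots I would attach a Lagrange multiplier $\mu$ to the equality constraint C1 in (\ref{MPR-mixed-1}) and form the Lagrangian. Because the objective and C1 are both linear in the $d_i$ and the resulting Lagrangian decouples across slots, the optimal decision is bang-bang: for each $i$ one sets $d_i=1$ precisely when the per-slot coefficient is positive, i.e.\ when $(1-\mu)\log_2(1+r(i)) > \mu\,O_S(i)\,S_0$. For $O_S(i)=1$ and $0\le\mu<1$ this reduces to $\log_2(1+r(i)) > \tfrac{\mu}{1-\mu}S_0$, which is exactly a threshold on $r(i)$. Writing $\rho\triangleq \mu/(1-\mu)$ gives the cut-off $r(i)=2^{\rho S_0}-1$ appearing in (\ref{sol-d-mixed-1}), so the relay transmits in the good-source slots only when the $\mathcal{R}$-$\mathcal{D}$ link is strong enough. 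Since $f_r$ is continuous, the event $r(i)=2^{\rho S_0}-1$ has zero probability and no randomized tie-breaking is needed. The multiplier $\mu$, equivalently $\rho$, is then pinned down by enforcing C1 itself: substituting the threshold policy into (\ref{cond-1-mixed}), with the source contributing $S_0$ in every $O_S(i)=1$ slot having $r(i)<2^{\rho S_0}-1$ and the relay contributing $\log_2(1+r(i))$ in the complementary slots together with all $O_S(i)=0$ slots, yields precisely (\ref{eq11}), and the throughput equals either side of it.

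The case split comes from checking when a feasible threshold $\rho\in[0,\infty)$ exists. I would note that the arrival rate $S_0(1-P_S)\int_0^{2^{\rho S_0}-1} f_r(r)\,dr$ is increasing in $\rho$ while the relay departure rate is decreasing in $\rho$; at $\rho\to\infty$ the arrival rate approaches $S_0(1-P_S)$ and the departure rate approaches $P_S\int_0^\infty\log_2(1+r)f_r(r)\,dr$, while at $\rho=0$ they approach $0$ and $\int_0^\infty\log_2(1+r)f_r(r)\,dr$, respectively. Hence a crossing solving (\ref{eq11}) exists in $[0,\infty)$ if and only if $S_0(1-P_S)\ge P_S\int_0^\infty\log_2(1+r)f_r(r)\,dr$, which rearranges to condition (\ref{cond-PS-mixed-non-pa-1}); this is Case~1. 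When (\ref{cond-PS-mixed-non-pa-1}) fails, even letting the relay transmit only in the $O_S(i)=0$ slots (the minimal-relay policy) leaves the relay with strictly more departure capacity than the source can ever supply, so the $\mathcal{S}$-$\mathcal{R}$ link is the bottleneck. Since the throughput can never exceed the average arrival rate, which is itself at most $S_0(1-P_S)$, I would conclude the optimum is the source-priority policy (\ref{sol-d-mixed-2}), in which the relay clears the buffer and the throughput attains the bound $S_0(1-P_S)$ in (\ref{max-tau-mixed-non-pa-2}).

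The main obstacle I anticipate is the rigorous justification that the constrained optimum is attained by the deterministic threshold rule, namely establishing the bang-bang/extreme-point property for this single-equality-constrained linear program and confirming that the balancing multiplier satisfies $0\le\mu<1$ in Case~1. This is paired with the clean argument that Case~2 is genuinely arrival-limited rather than admitting a better non-balanced policy, which relies on the conservation-of-flow bound (throughput not exceeding the arrival rate) implied by Theorem~\ref{theorem1-mixed}.
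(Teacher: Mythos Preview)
Your proposal is correct and follows essentially the same route as the paper: form the Lagrangian for (\ref{MPR-mixed-1}), exploit linearity in $d_i$ to obtain the bang-bang rule $d_i=1$ iff $(1-\mu)\log_2(1+r(i))\ge \mu O_S(i)S_0$, set $\rho=\mu/(1-\mu)$ to get the threshold in (\ref{sol-d-mixed-1}), fix $\rho$ via C1 to obtain (\ref{eq11}), and recover Case~2 as the boundary $\mu=1$ (equivalently $\rho\to\infty$) where the $\mathcal{S}$-$\mathcal{R}$ link is the bottleneck and conservation of flow caps the throughput at $S_0(1-P_S)$. Your explicit monotonicity argument for the existence of a feasible $\rho$ and your direct handling of the $O_S(i)=0$ slots are slightly more transparent than the paper's treatment, but the substance is the same.
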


\begin{proof}
Please refer to Appendix \ref{app_K}.
\end{proof}

%\begin{remark}
We note that with mixed rate transmission the ${\cal S}$-${\cal R}$ link is used only if it is not in outage, cf.~(\ref{sol-d-mixed-1}), (\ref{sol-d-mixed-2}). On the other hand, the ${\cal R}$-${\cal D}$ link is never in outage since the transmission rate is 
adjusted to the channel conditions. Furthermore, buffer-aided relaying with adaptive link selection has a larger throughput than Conventional Relaying 1, and also achieves a multiplexing gain of one.
%\end{remark}

%For future reference, we derive in the following the maximum throughput in the absence of outages for mixed rate transmission with fixed transmit powers. The throughput in the absence of outages, $\tau_o$, of the considered buffer-aided 
%relaying system can be found by assuming that $P_S=0$,  which is equivalent  to $O_S(i)=1,\; \forall i$. Making this assumption in (\ref{sol-d-mixed-1}), we obtain the optimal link selection variable in the absence of outages as 
%\begin{eqnarray}\label{sol-d-mixed-no_outage}
%d_i=\left\{
%\begin{array}{cl}
%0 & \textrm{if }  r(i)\leq 2^{\rho_0 S_0}-1\\
%1 & \textrm{if }  r(i)> 2^{\rho_0 S_0}-1\;.\\
%\end{array} 
%\right.
%\end{eqnarray} 
%where $\rho_0=\rho|_{P_S=0}$ is found as the solution of (\ref{eq11}) for $P_S=0$. Hence,  the throughput in the absence of outages is given by
%\begin{eqnarray}\label{eq_t_o_1}
%    \tau_0=S_0 {\rm Pr}(r(i)\leq 2^{\rho_0 S_0}-1)=\int_{2^{\rho_0 S_0}-1} ^\infty \log_2(1+r) f_r(r) dr
%\end{eqnarray}

To get more insight, we specialize the results derived thus far in this section to Rayleigh fading links.
\begin{lemma}\label{lemma7}
For Rayleigh fading links, condition (\ref{cond-PS-mixed-non-pa-1}) simplifies to
\begin{equation}\label{cond-PS-mixed-non-pa-1-ray}
      P_S= 1-\exp\left(-\frac{2^{ S_0}-1}{\Omega_S}\right) \leq \frac{S_0}{S_0+e^{1/\Omega_R} E_1(1/\Omega_R)/\ln(2)}.
\end{equation}
Furthermore, (\ref{eq11})  simplifies to
\begin{eqnarray}\label{eq12}
&&\hspace{-6mm} S_0 \exp\left(-\frac{2^{ S_0}-1}{\Omega_S}\right) \left[1-\exp\left(-\frac{2^{\rho S_0}-1}{\Omega_R}\right) \right]
\nonumber\\
&&\hspace{-6mm} = \frac{ e^{1/\Omega_R} }{\ln(2)}\Bigg[\left(1-\exp\left(-\frac{2^{ S_0}-1}{\Omega_S}\right)\right) E_1\left(\frac{1}{\Omega_R}\right) \nonumber\\
&&\hspace{-5mm}+\exp\left(-\frac{2^{ S_0}-1}{\Omega_S}\right) E_1\left(\frac{2^{\rho S_0}}{\Omega_R}\right) \Bigg] 
\nonumber\\
&&\hspace{-6mm}+ \exp\left(-\frac{2^{\rho S_0}-1}{\Omega_R}\right) \exp\left(-\frac{2^{ S_0}-1}{\Omega_S}\right) \rho S_0\;,
\end{eqnarray}
and the maximum throughput is given by the right (and left) hand side of (\ref{eq12}). If (\ref{cond-PS-mixed-non-pa-1-ray}) does not hold, the throughput can be obtained by simplifying
(\ref{max-tau-mixed-non-pa-2}) to
\begin{eqnarray}\label{max-tau-mixed-non-pa-2-ray}
    \tau=S_0 \exp\left(-\frac{2^{ S_0}-1}{\Omega_S}\right)\;.
\end{eqnarray}
%Finally, the throughput in the absence of outages  (\ref{eq_t_o_1}) simplifies to
%\begin{eqnarray}\label{eq_nn-1}
%    \tau_0=S_0\left[1- \exp\left(-\frac{2^{ \rho_0 S_0}-1}{\Omega_R}\right)\right]
%=\frac{e^{1/\Omega_R}}{\ln(2)}E_1\left(\frac{2^{\rho_0 S_0}}{\Omega_R}\right)
%+\rho_0 S_0  \exp\left(-\frac{2^{ \rho_0 S_0}-1}{\Omega_R}\right) 
%\;.
%\end{eqnarray}
\end{lemma}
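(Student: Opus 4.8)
The plan is to obtain Lemma~\ref{lemma7} as a direct specialization of Theorem~\ref{theorem7}: I would substitute the Rayleigh density $f_r(r)=e^{-r/\Omega_R}/\Omega_R$ into each $r$-integral appearing in the condition (\ref{cond-PS-mixed-non-pa-1}) and in (\ref{eq11}), evaluate these integrals in closed form, and then use the already-established Rayleigh expression for the source outage probability, $P_S=1-\exp(-(2^{S_0}-1)/\Omega_S)$, equivalently $1-P_S=\exp(-(2^{S_0}-1)/\Omega_S)$. All four assertions of the lemma then follow by algebraic assembly.

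First I would dispatch the condition. The quantity $\int_0^\infty \log_2(1+r) f_r(r)\,dr$ is precisely $E\{\log_2(1+r)\}$, which for Rayleigh fading has already been evaluated in (\ref{t-conv-mixed-2}) as $e^{1/\Omega_R}E_1(1/\Omega_R)/\ln(2)$. Inserting this together with the expression for $P_S$ into (\ref{cond-PS-mixed-non-pa-1}) reproduces (\ref{cond-PS-mixed-non-pa-1-ray}) immediately. Next I would evaluate the two remaining integrals in (\ref{eq11}). The first is the cumulative distribution of $r(i)$ up to $2^{\rho S_0}-1$, namely $\int_0^{2^{\rho S_0}-1} f_r(r)\,dr=1-\exp(-(2^{\rho S_0}-1)/\Omega_R)$; multiplying by $S_0(1-P_S)$ and writing $1-P_S=\exp(-(2^{S_0}-1)/\Omega_S)$ recovers the left-hand side of (\ref{eq12}).

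The main computational step is the tail integral $\int_{2^{\rho S_0}-1}^\infty \log_2(1+r) f_r(r)\,dr$. I would integrate by parts with $u=\log_2(1+r)$ and $dv=e^{-r/\Omega_R}/\Omega_R\,dr$. The boundary term yields $\rho S_0\exp(-(2^{\rho S_0}-1)/\Omega_R)$, using $\log_2(2^{\rho S_0})=\rho S_0$, and the remaining piece $\frac{1}{\ln(2)}\int_{2^{\rho S_0}-1}^\infty \frac{e^{-r/\Omega_R}}{1+r}\,dr$ is brought to the form of an exponential integral by the substitution $t=(1+r)/\Omega_R$, which maps the lower limit to $2^{\rho S_0}/\Omega_R$ and produces $\frac{e^{1/\Omega_R}}{\ln(2)}E_1(2^{\rho S_0}/\Omega_R)$.

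Finally I would collect the terms: multiplying the tail integral by $1-P_S$, its $E_1(2^{\rho S_0}/\Omega_R)$ contribution combines with the $P_S\,E_1(1/\Omega_R)$ term coming from the first summand on the right-hand side of (\ref{eq11}), and substituting $P_S=1-\exp(-(2^{S_0}-1)/\Omega_S)$ yields exactly the grouped right-hand side of (\ref{eq12}), while the separated boundary term gives its last line. By Theorem~\ref{theorem7} the maximum throughput equals this right-hand side. For the complementary case, where (\ref{cond-PS-mixed-non-pa-1-ray}) fails, (\ref{max-tau-mixed-non-pa-2-ray}) follows from (\ref{max-tau-mixed-non-pa-2}) by writing $1-P_S=\exp(-(2^{S_0}-1)/\Omega_S)$. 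The only nonroutine step is the integration by parts that introduces $E_1$; everything else is algebraic substitution, so I expect no genuine obstacle beyond bookkeeping of the exponential factors.
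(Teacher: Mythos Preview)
Your proposal is correct and follows exactly the approach the paper takes: the paper's proof is a one-line statement that (\ref{cond-PS-mixed-non-pa-1-ray})--(\ref{max-tau-mixed-non-pa-2-ray}) are obtained by inserting the Rayleigh pdfs of $s(i)$ and $r(i)$ into (\ref{cond-PS-mixed-non-pa-1}), (\ref{eq11}), and (\ref{max-tau-mixed-non-pa-2}). You have simply written out the computation in detail, including the integration by parts and the substitution $t=(1+r)/\Omega_R$ that convert the tail integral into the exponential integral $E_1$, which the paper leaves implicit.
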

\begin{proof}
Equations (\ref{cond-PS-mixed-non-pa-1-ray})-(\ref{max-tau-mixed-non-pa-2-ray}) are  obtained  by inserting the pdfs of $s(i)$ and $r(i)$ into (\ref{cond-PS-mixed-non-pa-1}), (\ref{eq11}), and (\ref{max-tau-mixed-non-pa-2}),
respectively.
\end{proof}

%%%%%%%%%%%%%%%%%%%%%%%%%%%%%%%%%%%%%%%%%%%%%%%%%%%%%%%%%%%%%%%%%%%%%%%%%%%%%%%%%%%%%%%
\subsection{Optimal Link Selection Policy with Power Allocation}\label{sec_mixed_no_delay_pa}
%%%%%%%%%%%%%%%%%%%%%%%%%%%%%%%%%%%%%%%%%%%%%%%%%%%%%%%%%%%%%%%%%%%%%%%%%%%%%%%%%%%%%%%
As mentioned before, since for mixed rate transmission the relay is assumed to have the full CSI of both links, power allocation can be applied to further improve performance. In other words, the relay can adjust its transmit power $\mathcal{P_R}(i)$ to the
channel conditions while the source still transmits with fixed power $\mathcal{P_S}(i)=\mathcal{P_S}$, $\forall i$. In the following, for convenience, we will use the transmit  SNRs without fading, $\gamma_S$ and $\gamma_R(i)$, which may be 
viewed as normalized powers, as variables instead of the actual powers $\mathcal{P_S}=\gamma_S \sigma_{n_R}^2$ and $\mathcal{P_R}(i)=\gamma_R(i)\sigma_{n_D}^2$. 

For the power allocation case, Theorem \ref{theorem1-mixed} is still applicable but it is convenient to rewrite the throughput as
\begin{eqnarray}
    \tau=\lim_{N\to \infty}\frac{1}{N}\sum_{i=1}^N d_i \log_2(1+ \gamma_R(i) h_R(i)).
\end{eqnarray}
We note that (\ref{cond-1-mixed}) also applies to the case of power allocation.
Furthermore, in order to meet the average power constraint $\Gamma$, the instantaneous (normalized) power $\gamma_R(i)$ and the fixed (normalized)  power $\gamma_S$ have to satisfy the following condition:
\begin{equation}\label{cond-mixed-2-PA}
  \lim_{N\to \infty}\frac{1}{N}\sum_{i=1}^N (1-d_i) O_S(i) \gamma_S  +\lim_{N\to \infty}\frac{1}{N}\sum_{i=1}^N d_i \gamma_R(i)\leq\Gamma.
\end{equation}
Thus, the optimal link selection policy for mixed rate transmission is the solution of the following optimization problem:
\begin{eqnarray}
\begin{array}{ll}
 {\underset{d_i,\gamma_R(i)}{\rm{Maximize: }}}&\frac{1}{N}\sum_{i=1}^N d_i \log_2\big(1+\gamma_R(i) h_R(i)\big)  \\
{\rm{Subject\;\; to: }} &{\rm C1:}\, \frac{1}{N}\sum_{i=1}^N (1-d_i) O_S(i) S_0\\
&\qquad  =\frac{1}{N}\sum_{i=1}^N d_i \log_2\big(1+\gamma_R(i) h_R(i)\big) \\
  &{\rm C2:} \, d_i(1-d_i)=0\;,\quad \forall i\\
 &{\rm C3:} \, \frac{1}{N}\sum_{i=1}^N  (1-d_i) O_S(i)\gamma_S  \nonumber\\
&\qquad+\frac{1}{N}\sum_{i=1}^N d_i \gamma_R(i)\leq\Gamma , \\
\end{array} 
\nonumber
\end{eqnarray}
\vspace{-10mm}
\begin{eqnarray}
    \label{MPR-mixed-1-PA}
\end{eqnarray}
where $N\to\infty$, constraints C1 and C3 ensure that the search for the optimal policy is conducted only among those policies that jointly satisfy (\ref{cond-1-mixed}) and the  source-relay power constraint (\ref{cond-mixed-2-PA}), respectively,  and C2 
ensures that $d_i\in\{0,1\}$. The solution of (\ref{MPR-mixed-1-PA}) is provided in the following theorem.

\begin{theorem}\label{theorem8}
Let the pdfs of $h_S(i)$ and $h_R(i)$ be denoted by $f_{h_S}(h_S)$ and $f_{h_R}(h_R)$, respectively. Then, for the considered buffer-aided relaying system where the source transmits with a fixed rate $S_0$ and fixed power $\gamma_S$ and 
the relay  transmits with adaptive rate $R(i)=\log_2(1+r(i))=\log_2(1+\gamma_R(i) h_R(i))$ and adaptive power $\gamma_R(i)$, two cases have to be considered for the optimal link selection variable $d_i$ which maximizes the throughput:
\\
\textbf{Case 1: } If 
\begin{eqnarray}\label{cond-PS-mixed-pa-1}
    P_S\leq \frac{S_0}{S_0+\int_{\lambda_t}^\infty \log_2(h_R/\lambda_t) f_{h_R}(h_R) d h_R},
\end{eqnarray}
holds, where ${\lambda_t}$ is found as the solution to
\begin{eqnarray}\label{lambda-cond-PS-mixed-pa-1}
      P_S \int_{\lambda_t}^\infty \left(\frac{1}{\lambda_t}-\frac{1}{h_R}\right) f_{h_R}(h_R) d h_R 
+ \gamma_S (1-P_S)  =\Gamma,
\end{eqnarray}
then the optimal  power $\gamma_R(i)$ and link selection variable $d_i$ which maximize the throughput are given  by  
%[WHAT HAPPENED TO $\gamma_S$? DON'T WE ALSO OPTIMIZE THE SOURCE TRANSMIT POWER BASED ON STATISTICAL CSI?]
\begin{eqnarray}
\gamma_R(i)&=&\max\left\{0,\frac{1}{\lambda}-\frac{1}{h_R(i)}\right\}\; ,  
\label{power-eq-2b}
\end{eqnarray} 
and
\begin{eqnarray}
d_i=\left\{
\begin{array}{cl}
1 & \textrm{if }  O_S(i)=0 \textrm{ AND } h_R(i)\geq\lambda \\
1 & \textrm{if }   O_S(i)=1 \textrm{ AND }   h_R(i)\geq\lambda \\
&\quad  \textrm{ AND } \ln\left(\frac{h_R(i)}{\lambda}\right)+\frac{\lambda}{h_R(i)}\geq\rho S_0 - \lambda \gamma_S+1 \\
0 & \textrm{if }   O_S(i)=1 \textrm{ AND } h_R(i)<\lambda\\
0 & \textrm{if }   O_S(i)=1 \textrm{ AND }    h_R(i)\geq\lambda \\
&\quad  \textrm{ AND } \ln\left(\frac{h_R(i)}{\lambda}\right)+\frac{\lambda}{h_R(i)}<\rho S_0 - \lambda \gamma_S+1\\
\varepsilon & \textrm{if } O_S(i)=0 \textrm{ AND } h_R(i)<\lambda\; ,
\end{array} 
\right.\nonumber
\end{eqnarray} 
\vspace{-9mm}
\begin{eqnarray}\label{sol-d-mixed-PA}
\end{eqnarray}
where $\varepsilon$ is either $0$ or $1$ and has not impact on the throughput. Constants $\rho$ and $\lambda$ are  chosen such that constraints C1 and C3 in (\ref{MPR-mixed-1-PA}) 
are satisfied with equality.  These two constants can be found as the solution to the following system of equations
\begin{eqnarray}\label{nz-1}
&&\hspace{-6mm}S_0(1\hspace{-1mm} -\hspace{-1mm} P_S) \int_0^G\hspace{-3mm} f_{h_R}(h_R) d h_R=  P_S \hspace{-1mm} \int_\lambda^\infty \hspace{-3mm}\log_2\left(\frac{h_R}{\lambda}\right) f_{h_R}(h_R) d h_R \nonumber\\
&&\hspace{-6mm}+
(1-P_S) \int_G^\infty \log_2\left(\frac{h_R}{\lambda}\right) f_{h_R}(h_R) d h_R ,
\end{eqnarray} 
\begin{eqnarray}\label{nz-1a}
&& P_S \int_\lambda^\infty \left(\frac{1}{\lambda}-\frac{1}{h_R}\right) f_{h_R}(h_R) d h_R \nonumber\\
&& +
(1-P_S) \int_G^\infty \left(\frac{1}{\lambda}-\frac{1}{h_R}\right) f_{h_R}(h_R) d h_R\nonumber\\
&&+ \gamma_S (1-P_S) \int_0^G f_{h_R}(h_R) d h_R =\Gamma ,
\end{eqnarray}
where the integral limit $G$ is given by 
\begin{eqnarray}\label{int-lim-1}
    G=-\frac{\lambda }{W\{-e^{\lambda \gamma_S-\rho S_0-1}\}}\;.
\end{eqnarray}
Here, $W\{\cdot\}$ denotes the Lambert~$W$-function defined in \cite{corless1996lambertw}, which is available as built-in function in software packages such as Mathematica.
In this case, the maximized throughput is given by the right (and left) hand side of (\ref{nz-1}).% and the fraction of  throughput lost due to outages is upper bounded by
%\begin{eqnarray}\label{op_bound-2}
%    F_{\rm out}\leq P_S{\rm Pr}(h_R(i)<G)\;.
%\end{eqnarray}

\noindent
\textbf{Case 2: } If (\ref{cond-PS-mixed-pa-1}) does not hold, the optimal power $\gamma_R(i)$ and link selection variable $d_i$ are given  by
\begin{eqnarray}
\gamma_R(i)=\max\left\{0,\frac{1}{\lambda}-\frac{1}{h_R(i)}\right\}\; ,  \textrm{ if } O_S(i)=0;
\label{power-eq-2b-2}
\end{eqnarray} 
\begin{eqnarray}\label{sol-d-mixed-PA-2}
d_i=\left\{
\begin{array}{cl}
0 & \textrm{if }  O_S(i)=1 \\
1 & \textrm{if }  O_S(i)=0 ,
\end{array} 
\right.
\end{eqnarray} 
where  $\lambda={\lambda_t}$ is the solution to (\ref{lambda-cond-PS-mixed-pa-1}). In this case, the maximum throughput is given by
\begin{eqnarray}\label{max-tau-mixed-pa-2}
    \tau=S_0(1-P_S).
\end{eqnarray}
\end{theorem}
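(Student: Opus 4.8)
The plan is to solve the optimization problem (\ref{MPR-mixed-1-PA}) by Lagrangian duality, exploiting, exactly as in Appendix~\ref{app_for_t_1} (for Theorem~\ref{theorem1}) and in the proof of Theorem~\ref{theorem7} in Appendix~\ref{app_K}, the stationarity and ergodicity of $h_S(i)$ and $h_R(i)$, which let me replace every normalized time average in the limit $N\to\infty$ by an ensemble average over the instantaneous channel state $(O_S(i),h_R(i))$. The objective and all three constraints then become expectations of per-slot quantities, so that once the multipliers are fixed the problem separates into an independent optimization in each fading state. I would attach an unrestricted-sign multiplier $\mu$ to the equality constraint C1 and a nonnegative multiplier $\nu$ to the average-power constraint C3, while the binary constraint C2 is enforced directly by choosing $d_i\in\{0,1\}$ state by state. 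Collecting terms, the per-slot Lagrangian reduces (up to an additive constant $\nu\Gamma$) to $d_i\big[(1-\mu)\log_2(1+\gamma_R(i)h_R(i))-\nu\gamma_R(i)\big]+(1-d_i)O_S(i)\big[\mu S_0-\nu\gamma_S\big]$.

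With this separable form the derivation proceeds in two nested steps. First, for a slot in which $d_i=1$, I maximize over $\gamma_R(i)\ge 0$; the concave objective yields the water-filling rule $\gamma_R(i)=\max\{0,1/\lambda-1/h_R(i)\}$ of (\ref{power-eq-2b}) with $\lambda\triangleq\nu\ln 2/(1-\mu)$, which requires $1-\mu>0$, so I would verify that at the optimum $\mu\in(0,1)$ (for $\mu\ge 1$ the relay term is non-increasing in $\gamma_R$ and the relay would never transmit). Second, substituting the optimal power back, I compare the resulting relay value $V_1(h_R)=(1-\mu)\log_2(h_R/\lambda)-\nu(1/\lambda-1/h_R)$ for $h_R\ge\lambda$ (and $V_1=0$ otherwise) against the source value $V_0=(\mu S_0-\nu\gamma_S)O_S(i)$, setting $d_i=1$ exactly when $V_1\ge V_0$. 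For $O_S(i)=0$ this gives $d_i=1\Leftrightarrow h_R(i)\ge\lambda$; for $O_S(i)=1$, dividing by $1-\mu$, multiplying by $\ln 2$, and introducing $\rho\triangleq\mu\ln 2/(1-\mu)$ together with the identity $\nu\gamma_S\ln 2/(1-\mu)=\lambda\gamma_S$ turns the comparison into $\ln(h_R/\lambda)+\lambda/h_R\ge\rho S_0-\lambda\gamma_S+1$, which is precisely the threshold in (\ref{sol-d-mixed-PA}).

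To recover the single-threshold description through $G$, I would show that $g(h_R)\triangleq\ln(h_R/\lambda)+\lambda/h_R$ satisfies $g'(h_R)=(h_R-\lambda)/h_R^2\ge 0$ on $[\lambda,\infty)$ with $g(\lambda)=1$, so the inequality $g(h_R)\ge\rho S_0-\lambda\gamma_S+1$ is equivalent to $h_R\ge G$; solving $g(G)=\rho S_0-\lambda\gamma_S+1$ via the substitution $u=\lambda/G$ reduces it to $u e^{-u}=e^{-(\rho S_0-\lambda\gamma_S+1)}$, whose solution $u=-W\{-e^{\lambda\gamma_S-\rho S_0-1}\}$ gives $G=-\lambda/W\{-e^{\lambda\gamma_S-\rho S_0-1}\}$, i.e.\ (\ref{int-lim-1}). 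The two unknowns $\rho,\lambda$ (equivalently $\mu,\nu$) are then pinned down by complementary slackness: since C3 is active and C1 holds with equality, substituting the derived $d_i$ and $\gamma_R(i)$ into C1 and C3 produces exactly the coupled system (\ref{nz-1})--(\ref{nz-1a}), and the maximized throughput equals the common value of the two sides of (\ref{nz-1}).

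Finally, the case split hinges on whether an interior threshold with $\rho S_0-\lambda\gamma_S+1>1$ is needed at all. I would evaluate the candidate policy that never lets the relay preempt the source, i.e.\ $d_i=1\Leftrightarrow O_S(i)=0$, whose water-filling level $\lambda_t$ solves (\ref{lambda-cond-PS-mixed-pa-1}); its departure rate is $P_S\int_{\lambda_t}^\infty\log_2(h_R/\lambda_t)f_{h_R}(h_R)\,dh_R$ while its arrival rate is $S_0(1-P_S)$. Requiring the relay to deliver at least the arrival rate and rearranging shows that this holds precisely when (\ref{cond-PS-mixed-pa-1}) fails, which is Case~2: the source is the bottleneck, the relay transmits only when $O_S(i)=0$, and $\tau=S_0(1-P_S)$, giving (\ref{sol-d-mixed-PA-2})--(\ref{max-tau-mixed-pa-2}). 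When (\ref{cond-PS-mixed-pa-1}) holds the relay cannot keep up under that policy, forcing the interior solution of Case~1. The main obstacle I anticipate is the rigorous justification of the state-wise decoupling, together with establishing that the optimal $\mu$ lies strictly in $(0,1)$ and that the KKT stationarity conditions are not merely necessary but sufficient here, which follows from concavity of the relaxed per-state problem in $\gamma_R$ and linearity in $d_i$ after the binary relaxation; the remaining threshold manipulations and the Lambert-$W$ inversion are then routine.
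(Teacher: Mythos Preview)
Your proposal is correct and follows essentially the same Lagrangian approach as the paper's proof in Appendix~\ref{app_L}: form the Lagrangian with multipliers $\mu$ for C1 and $\nu$ for C3, optimize per slot to obtain water-filling for $\gamma_R(i)$ and a threshold comparison for $d_i$, then reparameterize via $\rho=\mu\ln 2/(1-\mu)$ and $\lambda=\nu\ln 2/(1-\mu)$ and recover (\ref{nz-1})--(\ref{nz-1a}) from C1 and C3, with the boundary $\mu=1$ (equivalently $\rho\to\infty$) yielding Case~2. The only minor differences are cosmetic: the paper introduces an explicit multiplier $\tilde\beta_i$ for C2 rather than enforcing $d_i\in\{0,1\}$ directly, and it states (\ref{int-lim-1}) without the monotonicity and Lambert-$W$ inversion that you spell out.
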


\begin{proof}
Please refer to Appendix \ref{app_L}.
\end{proof}
\begin{remark}\label{remark_8}
Note that when conditions (\ref{cond-PS-mixed-non-pa-1}) and (\ref{cond-PS-mixed-pa-1}) do not hold, the throughput with and without power allocation is identical, cf.~(\ref{max-tau-mixed-non-pa-2}) and (\ref{max-tau-mixed-pa-2}). If conditions (\ref{cond-PS-mixed-non-pa-1}) 
and (\ref{cond-PS-mixed-pa-1}) do not hold, this means that the SNR in the ${\cal S}$-${\cal R}$ channel is low, whereas the SNR in the ${\cal R}$-${\cal D}$ channel is high. In this case, power allocation is not beneficial since the ${\cal S}$-${\cal R}$ channel is the bottleneck link, which
cannot be improved by power allocation at the relay. Furthermore, the throughput in (\ref{max-tau-mixed-non-pa-2}) and (\ref{max-tau-mixed-pa-2}) is identical to the throughput of a  point-to-point communication between the source and the relay since the number of time slots
required to transmit the information from the relay to the destination becomes negligible. Therefore, in this case, as far as the achievable throughput is concerned, the three-point half-duplex relay channel is transformed into a one hop channel
between source and relay. 
\end{remark}

%Next, we derive the maximum throughput in the absence of outages for mixed rate transmission with power allocation. The throughput  in the absence of outages, $\tau_o$, of the considered  buffer-aided relaying system can be obtained by assuming that $P_S=0$,  
%which is equivalent to $O_S(i)=1,\; \forall i$. Inserting $O_S(i)=1,\; \forall i$ into (\ref{sol-d-mixed-PA}), we obtain the optimal link selection variable in the absence of outages as 
%\begin{eqnarray}\label{sol-d-mixed-PA-o}
%d_i=\left\{
%\begin{array}{cl}
%1 & \textrm{if }    h_R(i)\geq\lambda_0  \textrm{ AND } %\ln\left(\frac{h_R(i)}{\lambda_0}\right)+\frac{\lambda_0}{h_R(i)}\geq\rho_0 S_0 - \lambda_0 \gamma_s+1 \\
%0 & \textrm{if }  h_R(i)<\lambda_0\\
%0 & \textrm{if }     h_R(i)\geq\lambda_0  \textrm{ AND } \ln\left(\frac{h_R(i)}{\lambda_0}\right)+\frac{\lambda_0}{h_R(i)}<\rho_0 S_0 - \lambda_0 \gamma_s+1
%\end{array} 
%\right.
%\end{eqnarray} 
%Constants $\rho_0=\rho|_{P_S=0}$ and $\lambda_0=\lambda|_{P_S=0}$ are found as the solutions to (\ref{nz-1}) and (\ref{nz-1a}) for $P_S=0$.
%The throughput  in the absence of outages is given by 
%\begin{eqnarray}\label{eq_t_o-2}
%    \tau_0=S_0  \int_0^G f_{h_R}(h_R) d h_R
%=  \int_G^\infty \log_2\left(\frac{h_R}{\lambda_0}\right) f_{h_R}(h_R) d h_R
%\end{eqnarray}

In the following lemma, we concentrate on Rayleigh fading for illustration purpose. 

\begin{lemma}\label{lemma8}
For Rayleigh fading channels,  $P_S$ is given by
$$
P_S=1-\exp\left(-\frac{2^{ S_0}-1}{\gamma_S\bar\Omega_S}\right).
$$
Furthermore, condition (\ref{cond-PS-mixed-pa-1}) simplifies to
\begin{eqnarray}\label{cond-PS-mixed-pa-1-ray}
   P_S\leq \frac{S_0}{S_0+E_1(\lambda_t/\bar\Omega_R)/\ln(2)},
\end{eqnarray}
where $\lambda_t$ is found as the solution to
\begin{equation}\label{eq_un-1}
P_S\left[\frac{e^{-\lambda_t/\bar\Omega_R}}{\lambda_t} -\frac{1}{\bar\Omega_R} E_1\left(\frac{\lambda_t}{\bar\Omega_R}\right)\right] =\Gamma.
\end{equation}
For the case  where (\ref{cond-PS-mixed-pa-1-ray}) holds, (\ref{nz-1}) and  (\ref{nz-1a}) simplify to
\begin{eqnarray}\label{nz-2}
&&\hspace{-8mm}S_0(1-P_S)\big(1-e^{-G/\bar\Omega_R}\big)
=
\frac{1}{\ln(2)}\bigg[
P_S E_1\left(\frac{\lambda}{\bar\Omega_R}\right) 
\nonumber\\
&&\hspace{-6mm}
+(1-P_S) \left( E_1\left(\frac{G}{\bar\Omega_R}\right) +\ln\left(\frac{G}{\lambda}\right) e^{-G/\bar\Omega_R} \right)\bigg]
\end{eqnarray}
and
\begin{eqnarray}\label{nz-2a}
&&\hspace{-6mm}P_S\left[\frac{e^{-\lambda/\bar\Omega_R}}{\lambda} -\frac{1}{\bar\Omega_R} E_1\left(\frac{\lambda}{\bar\Omega_R}\right)\right] +(1-P_S) \bigg[\frac{e^{-G/\bar\Omega_R}}{\lambda} \nonumber\\
&&\hspace{-6mm}
-\frac{1}{\bar\Omega_R}E_1\left(\frac{G}{\bar\Omega_R}\right)\bigg] 
+\gamma_S(1-P_S)\big(1-e^{-G/\bar\Omega_R}\big)=\Gamma,\qquad
\end{eqnarray}
respectively, 
 where integral limit $G$ is given by (\ref{int-lim-1}). The maximum throughput is given by the right (and left) hand side of (\ref{nz-2}). 

For the case, where (\ref{cond-PS-mixed-pa-1-ray}) does not hold, the throughput is given by $\tau= S_0(1-P_S)$.% and the throughput in the absence of outages becomes
%\begin{eqnarray}\label{eq_nn-2aaa}
%    \tau_0=S_0 \big(1-e^{-G/\bar\Omega_R}\big)
%=
%\frac{1}{\ln(2)}\left[
%   E_1\left(\frac{G}{\bar\Omega_R}\right) +\ln\left(\frac{G}{\lambda}\right) e^{-G/\bar\Omega_R} \right]
%\;.
%\end{eqnarray}
\end{lemma}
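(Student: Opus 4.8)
The plan is to treat Lemma~\ref{lemma8} exactly as Lemma~\ref{lemma7}: it is a specialization of the distribution-free Theorem~\ref{theorem8} to Rayleigh fading, so every formula follows by inserting the Rayleigh pdf $f_{h_R}(h_R)=e^{-h_R/\bar\Omega_R}/\bar\Omega_R$ (and the pdf of $s(i)$) into the integrals of Theorem~\ref{theorem8} and evaluating them in closed form in terms of $E_1(\cdot)$. First I would record $P_S$: an $\mathcal{S}$-$\mathcal{R}$ outage occurs when $s(i)<2^{S_0}-1$, and since $\Omega_S=\gamma_S\bar\Omega_S$, integrating $f_s(s)=e^{-s/\Omega_S}/\Omega_S$ gives $P_S=1-\exp(-(2^{S_0}-1)/(\gamma_S\bar\Omega_S))$, as claimed.

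The computational core is a single integration-by-parts identity. Taking $u=\log_2(h_R/\lambda)$ and $dv=f_{h_R}(h_R)\,dh_R$, the boundary term at infinity vanishes (since $e^{-h_R/\bar\Omega_R}$ decays faster than $\log h_R$ grows) and the remaining integral is an exponential integral, yielding, for any $a>0$,
\[
\int_a^\infty \log_2\!\left(\frac{h_R}{\lambda}\right) f_{h_R}(h_R)\,dh_R
= \frac{1}{\ln 2}\left[\ln\!\left(\frac{a}{\lambda}\right)e^{-a/\bar\Omega_R}+E_1\!\left(\frac{a}{\bar\Omega_R}\right)\right],
\]
which collapses to $E_1(a/\bar\Omega_R)/\ln 2$ when $a=\lambda$. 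Applying this with $a=\lambda=\lambda_t$ to the denominator integral in (\ref{cond-PS-mixed-pa-1}) gives the simplified condition (\ref{cond-PS-mixed-pa-1-ray}); applying it with $a=\lambda$ and then with $a=G$ (where the $\ln(G/\lambda)e^{-G/\bar\Omega_R}$ boundary term survives) reproduces the two logarithmic integrals on the right-hand side of (\ref{nz-1}) and hence yields (\ref{nz-2}).

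The remaining integrals are elementary. Using $\int_a^\infty f_{h_R}(h_R)/h_R\,dh_R=E_1(a/\bar\Omega_R)/\bar\Omega_R$, the waterfilling integrals reduce to
\[
\int_a^\infty\!\left(\frac{1}{\lambda}-\frac{1}{h_R}\right) f_{h_R}(h_R)\,dh_R
= \frac{e^{-a/\bar\Omega_R}}{\lambda}-\frac{1}{\bar\Omega_R}E_1\!\left(\frac{a}{\bar\Omega_R}\right),
\]
while $\int_0^G f_{h_R}(h_R)\,dh_R=1-e^{-G/\bar\Omega_R}$. Substituting $a=\lambda_t$ into (\ref{lambda-cond-PS-mixed-pa-1}) gives (\ref{eq_un-1}), and substituting $a=\lambda$, $a=G$, together with the cdf term, into (\ref{nz-1a}) gives (\ref{nz-2a}). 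The integral limit $G$ and its Lambert-$W$ form (\ref{int-lim-1}) depend only on the decision boundary $\ln(h_R/\lambda)+\lambda/h_R=\rho S_0-\lambda\gamma_S+1$, not on the fading law, so they carry over unchanged. Since by Theorem~\ref{theorem1-mixed} the maximized throughput equals the common value in constraint C1, i.e.\ the right-hand side of (\ref{nz-1}), it equals the right-hand side of (\ref{nz-2}); and Case~2 needs nothing new because (\ref{max-tau-mixed-pa-2}) already states $\tau=S_0(1-P_S)$ in distribution-free form.

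I expect no genuine analytical obstacle here, only careful bookkeeping: one must confirm that every boundary term at infinity vanishes and that no $1/h_R$ singularity at the origin is hit, both guaranteed because the lower limits $\lambda$, $\lambda_t$, and $G$ are strictly positive. The one point needing attention is tracking $G\ge\lambda$ so that the surviving boundary term $\ln(G/\lambda)e^{-G/\bar\Omega_R}$ has the correct sign; this is consistent with the role of $G$ as the threshold above which the $\mathcal{S}$-$\mathcal{R}$ link is preferred. Everything else is a direct substitution of the two displayed $E_1$-identities.
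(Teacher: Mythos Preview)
Your proposal is correct and is precisely the approach the paper takes: its proof is a single sentence stating that (\ref{cond-PS-mixed-pa-1-ray}), (\ref{eq_un-1}), (\ref{nz-2}), and (\ref{nz-2a}) follow by inserting the Rayleigh pdfs of $h_S(i)$ and $h_R(i)$ into (\ref{cond-PS-mixed-pa-1}), (\ref{lambda-cond-PS-mixed-pa-1}), (\ref{nz-1}), and (\ref{nz-1a}), respectively. You have simply made explicit the integration-by-parts and $E_1$ reductions that the paper leaves implicit; the only slip is descriptive (above $G$ the relay, not the source, is selected), and it does not affect the computation.
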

\begin{proof}
Equations (\ref{cond-PS-mixed-pa-1-ray}), (\ref{eq_un-1}),  (\ref{nz-2}),  and (\ref{nz-2a})
%and (\ref{eq_nn-2aaa}) 
 are  obtained  by inserting the pdfs of $h_S(i)$ and $h_R(i)$ into (\ref{cond-PS-mixed-pa-1}), (\ref{lambda-cond-PS-mixed-pa-1}), 
 (\ref{nz-1}),  and (\ref{nz-1a}),
%, and (\ref{eq_t_o-2}), 
respectively.
\end{proof}

\begin{remark}
Conditions (\ref{cond-PS-mixed-non-pa-1}) and (\ref{cond-PS-mixed-pa-1})  depend only on the long term fading statistics and not on the instantaneous fading states. Therefore, for fixed $\bar\Omega_S$ and $\bar\Omega_R$, the optimal policy  for condition (\ref{cond-PS-mixed-non-pa-1}) is given by either (\ref{sol-d-mixed-1}) or (\ref{sol-d-mixed-2}), but not by both. Similarly, the optimal policy  for condition (\ref{cond-PS-mixed-pa-1}) is given by either (\ref{sol-d-mixed-PA}) or (\ref{sol-d-mixed-PA-2}), but not by both.
\end{remark}
%%%%%%%%%%%%%%%%%%%%%%%%%%%%%%%%%%%%%%%%%%%%%%%%%%%%%%%%%%%%%%%%%%%%%%%%%%%%%%%%%%%%
\subsection{Mixed Rate Transmission with Delay Constraints}\label{sec-delay-2}
%%%%%%%%%%%%%%%%%%%%%%%%%%%%%%%%%%%%%%%%%%%%%%%%%%%%%%%%%%%%%%%%%%%%%%%%%%%%%%%%%%%%
Now, we turn our attention to mixed rate transmission with delay constraints. For the delay unconstrained case, Theorem~\ref{theorem1-mixed} was very useful to arrive at the optimal protocol since it removed the complexity of having to deal with the queue states. 
However, for the delay constrained case, the queue states determine the throughput and the average delay.  Moreover, for mixed rate transmission, the queue states can only be modeled by a Markov chain with continuous state space, which makes the analysis 
complicated.  Therefore, we resort to a suboptimal adaptive link selection protocol in the following.
\begin{proposition}\label{prep-4}
Let the buffer size be limited to $Q_{\max}$ bits. For this case, we propose the following link selection protocol for mixed rate transmission with delay constraints: 
\begin{enumerate}
\item If $O_S(i)=0$, set $d_i=1$.
\item  Otherwise, if $\log_2(1+r(i))  \leq Q(i-1) \leq Q_{\max}-S_0$, select $d_i$ as proposed  in Theorem \ref{theorem7}   for the case of transmission without delay constraint.
\item Otherwise, if $Q(i-1)> Q_{\max}-S_0$, set $d_i=1$.
\item Otherwise, if  $Q(i-1)< \log_2(1+r(i))$, set $d_i=0$.
\end{enumerate}
\end{proposition}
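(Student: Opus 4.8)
The plan is to read this proposition not as an optimization claim but as the \emph{specification of a feasible delay-limited protocol}, so the statement to be established is that the four rules constitute a well-defined link-selection policy which keeps the queue inside $[0,Q_{\max}]$ --- hence bounds the average delay through Little's law --- while coinciding with the throughput-optimal unconstrained rule of Theorem~\ref{theorem7} whenever the buffer occupancy lies in its comfortable operating range. I would organize the argument around these three properties and verify them in order, under the convention that the repeated ``otherwise'' imposes a priority ordering on the rules.

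First I would check well-definedness, i.e., that the rules are exhaustive and, under that priority, mutually exclusive. Conditioning on $O_S(i)$ splits the sample space: rule 1 handles $O_S(i)=0$ completely, and rules 2--4 apply only when $O_S(i)=1$. For the latter I would partition the pair $\big(Q(i-1),\log_2(1+r(i))\big)$ and show the three rules tile it: assuming none of rules 2--4 fires, then (rule 2 having been passed) the failure of rule 3 forces $Q(i-1)\le Q_{\max}-S_0$, and the failure of rule 4 forces $\log_2(1+r(i))\le Q(i-1)$; together these reconstitute exactly the hypothesis of rule 2, a contradiction. Hence the rules cover all outcomes, and the ordering removes any ambiguity where their conditions might overlap.

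Next I would establish the buffer invariant $0\le Q(i)\le Q_{\max}$ by induction on $i$, starting from an empty buffer. The lower bound is automatic from the $\min\{\cdot\}$ in (\ref{3-lit}) together with the update (\ref{eq4}). For the upper bound, note that the queue grows only when the source transmits, i.e., when $d_i=0$ and $O_S(i)=1$, in which case it rises by $S_0$ via (\ref{eq2}); this occurs only under rules 2 and 4, and both carry the guard $Q(i-1)\le Q_{\max}-S_0$ (explicitly for rule 2, and as a consequence of the failure of rule 3 for rule 4), so $Q(i)\le Q_{\max}$. Rule 3 is precisely the safeguard forcing $d_i=1$ once $Q(i-1)>Q_{\max}-S_0$, pre-empting overflow, while rule 4 forces $d_i=0$ exactly when $Q(i-1)<\log_2(1+r(i))$ so that a relay-transmission opportunity is not wasted on a near-empty buffer. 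With $Q(i)\le Q_{\max}$ for all $i$, Little's law (\ref{delay-main}) yields $E\{T\}=E\{Q\}/A\le Q_{\max}/A$ with $A$ as in (\ref{arr-rate}), so shrinking $Q_{\max}$ tightens the delay; this is the sense in which the protocol is delay-limited.

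Finally I would argue that the protocol is a genuine modification of the unconstrained optimum: as $Q_{\max}\to\infty$ rule 3 never fires, and since Theorem~\ref{theorem1-mixed} shows the optimal policy keeps the buffer practically fully backlogged, the occupancy $Q(i-1)$ is typically large, so rule 4 becomes negligible and the policy collapses onto rule 2, i.e., onto Theorem~\ref{theorem7}. The hard part lies \emph{not} in the proposition itself but in the subsequent quantitative analysis: because the relay rate $\log_2(1+r(i))$ is continuous, $Q(i)$ evolves as a Markov chain on a continuous state space, so obtaining the stationary occupancy law and an explicit relation between $Q_{\max}$ and the achieved $E\{T\}$ --- the analogue of Theorem~\ref{theorem5} for the fixed-rate case --- is where the real difficulty is. The proposition deliberately sidesteps this by adopting the above suboptimal but analytically tractable rule.
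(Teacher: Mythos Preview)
Your reading of the proposition is exactly right: in the paper this is not a theorem with a proof but the \emph{definition} of a heuristic protocol. The paper offers no formal argument; immediately after the proposition it gives only an informal rationale (source-link outage forces relay transmission; buffer near full forces relay transmission; buffer too low forces source transmission; otherwise fall back on Theorem~\ref{theorem7}) and then explicitly declines any analysis, stating that the continuous state space of the induced Markov chain makes a theoretical treatment difficult and that performance will be assessed by simulation.

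Your proposal therefore goes strictly beyond what the paper does. The exhaustiveness check and the inductive buffer invariant $0\le Q(i)\le Q_{\max}$ are correct and are not in the paper; the paper relies on the reader accepting these from the verbal description. Your Little's-law bound $E\{T\}\le Q_{\max}/A$ and the observation that the protocol collapses onto Theorem~\ref{theorem7} as $Q_{\max}\to\infty$ are likewise sound and match the paper's intent (``The value of $Q_{\max}$ can be used to adjust the average delay while maintaining a low throughput loss compared to the throughput without delay constraint''), though the paper states this without derivation. Your closing paragraph identifying the continuous-state Markov chain as the real obstacle is precisely the paper's own conclusion. In short: there is nothing to correct; you have supplied the well-definedness and feasibility verification that the paper omits, and you have correctly located where the genuine difficulty lies.
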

If the $\mathcal{S}$-$\mathcal{R}$ link is in outage, the relay transmits. Otherwise, if  there is enough room in the buffer to accommodate the bits possibly sent from the source to the relay and there are enough bits in the buffer for the relay to transmit,  
the link selection protocol introduced in Theorem \ref{theorem7}  is employed. On the other hand, if there exists the possibility of a buffer overflow, the relay transmits to reduce the amount of data in the buffer. If the number of bits in the buffer 
is too low, the source transmits. The value of $Q_{\max}$ can be used to adjust the average delay while maintaining a low throughput loss compared to  the throughput without delay constraint. 

Although conceptually simple, as pointed out before, a theoretical analysis of the throughput of the proposed queue size limiting protocol is difficult because of the continuous state space of the associated Markov chain. Thus, we will resort to simulations to 
evaluate its performance in Section \ref{numerics}.
%%%%%%%%%%%%%%%%%%%%%%%%%%%%%%%%%%%%%%%%%%%%%%%%%%%%%%%%%%%%%%%%%%%%%%%%%%%%%%%%%%%%%
\subsection{Conventional Relaying With Delay Constraints \label{mixed-delay}}
%%%%%%%%%%%%%%%%%%%%%%%%%%%%%%%%%%%%%%%%%%%%%%%%%%%%%%%%%%%%%%%%%%%%%%%%%%%%%%%%%%%%%
To have a benchmark for delay constrained buffer-aided relaying with adaptive link selection, we propose a corresponding conventional relaying protocol, which may be viewed as a delay constrained version of Conventional Relaying 1.
\begin{proposition}\label{pr_3}
The source transmits to the relay in $k$ consecutive time slots followed by the relay transmitting to the destination in the following $n$ time slots. Then, this patter is repeated, i.e., the source transmits again in $k$ consecutive time slots, and so on. 
The values of $k$ and $n$ can be chosen to satisfy any delay and throughput requirements. 
\end{proposition}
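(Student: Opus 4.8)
The plan is to treat this periodic conventional protocol as a finite-block version of Conventional Relaying~1 (Section~\ref{ggas}) and to show that the pair $(k,n)$ furnishes two essentially independent degrees of freedom: the ratio $k/n$ controls the throughput, while the period length $k+n$ controls the average delay. First I would characterize the steady-state queue dynamics over one period of length $k+n$. During the $k$ source slots the relay receives, on average, $kS_0(1-P_S)$ bits, since the source delivers $S_0$ bits whenever the $\mathcal{S}$-$\mathcal{R}$ link is not in outage, which occurs with probability $1-P_S$; during the $n$ relay slots the relay can forward up to $nE\{\log_2(1+r(i))\}$ bits, as its rate is adapted and no outages occur. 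Assuming the buffer is large enough that overflow is negligible while the queue stays non-absorbing, conservation of flow yields the throughput
\begin{equation}
\tau=\frac{\min\{kS_0(1-P_S),\,nE\{\log_2(1+r(i))\}\}}{k+n}.
\end{equation}
For a fixed period $k+n$ this is maximized by balancing the two terms, i.e.\ by taking $k/n\approx E\{\log_2(1+r(i))\}/\big(S_0(1-P_S)\big)$, in which case $\tau\to\tau_{\rm{conv,1}}^{\rm mixed}$ as $k+n\to\infty$, recovering~(\ref{eq_qq-1}).

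Next I would control the average delay via Little's law~(\ref{delay-main}), $E\{T\}=E\{Q\}/A$, with arrival rate $A=kS_0(1-P_S)/(k+n)$. The key observation is that, because arrivals are confined to the first $k$ slots and departures to the last $n$ slots of each period, the queue accumulates an amount proportional to $kS_0(1-P_S)$ during the source phase and is drained during the relay phase; hence $E\{Q\}$ grows linearly with the period while $A$ is held fixed once the ratio $k/n$ is kept constant. Consequently $E\{T\}$ is monotonically increasing in $k+n$ for fixed $k/n$, so the period length acts as a delay knob that can be tuned independently of the throughput knob $k/n$.

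Combining the two, given a target throughput $\tau^\star$ (any value up to $\tau_{\rm{conv,1}}^{\rm mixed}$) and a target average delay $E\{T\}^\star$ (any value above the minimum achievable delay, attained at the smallest admissible period), I would first fix the ratio $k/n$ so that the throughput expression equals $\tau^\star$, and then scale $k+n$ so that the delay expression equals $E\{T\}^\star$, finally rounding $k$ and $n$ to the nearest integers. Since, in the fluid limit, both $\tau$ and $E\{T\}$ vary continuously and monotonically in their respective knobs, every feasible pair $(\tau^\star,E\{T\}^\star)$ can be met, which establishes the claim.

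The main obstacle I expect is the exact delay characterization. Because the number of bits arriving during the source phase is random (owing to $\mathcal{S}$-$\mathcal{R}$ outages) and the relay service is a continuous random rate $\log_2(1+r(i))$, the queue occupancy at period boundaries is random and, as already noted for the related protocol in Section~\ref{sec-delay-2}, the associated Markov chain has a continuous state space that resists a clean closed form. I would therefore establish the monotonicity of $E\{T\}$ in the period length through a fluid/averaging argument, replacing the random per-slot arrivals and departures by their means and invoking Little's law, and confirm the resulting throughput--delay operating points with the simulations used elsewhere in the paper, rather than attempting an exact queueing analysis.
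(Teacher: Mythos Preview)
Your proposal is sound and, in fact, goes further than the paper does. In the paper, Proposition~\ref{pr_3} is treated as a protocol \emph{definition} rather than as a theorem with a proof: immediately after stating it, the paper simply records the non-absorbing condition~(\ref{mix_delay_conv_cond}) and the throughput expression~(\ref{eq_m_t_1}), and defers the delay analysis to the high-SNR limit in Theorem~\ref{theorem9}. There is no separate proof of the claim that ``$k$ and $n$ can be chosen to satisfy any delay and throughput requirements''; it is left as an informal statement supported by the subsequent formulas and by the numerical results in Fig.~\ref{fig4}.

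Your argument is consistent with the paper's treatment: your throughput expression reduces to~(\ref{eq_m_t_1}) once the non-absorbing condition~(\ref{mix_delay_conv_cond}) holds (so the $\min$ is attained by the first argument), and your two-knob decomposition (ratio $k/n$ for throughput, period $k+n$ for delay) is precisely the mechanism the paper exploits implicitly in Theorem~\ref{theorem9}, where $n=1$ and $k$ alone sets both $\tau$ and $E\{T\}$ at high SNR. You also correctly anticipate the obstacle the paper itself flags in Section~\ref{sec-delay-2}: the continuous-state queue precludes a closed-form delay expression at finite SNR, which is why the paper resorts to the high-SNR asymptotics and simulation rather than an exact queueing analysis. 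In short, there is no gap in your proposal relative to the paper; if anything, your fluid/monotonicity argument for $E\{T\}$ in $k+n$ supplies justification that the paper omits.
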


For this protocol, the queue is non-absorbing if
\begin{eqnarray}\label{mix_delay_conv_cond}
    k(1-P_S)S_0\leq nE\{\log_2(1+r(i))\}.
\end{eqnarray}
Assuming (\ref{mix_delay_conv_cond}) holds, the average arrival rate is equal to the throughput and hence the  throughput is given by
\begin{eqnarray}  
\tau&=&\frac{k}{k+n}(1-P_S)S_0\label{eq_m_t_1}\;,
\end{eqnarray}

Using a numerical example, we will show in Section \ref{numerics} (cf.~Fig.~\ref{fig4}) that the protocol with adaptive link selection in Proposition~\ref{prep-4} achieves a higher throughput than the conventional protocol in Proposition~\ref{pr_3}. 
However, the conventional protocol is more amendable to analysis and it is interesting to investigate the corresponding throughput and multiplexing gain for a given average delay in the high SNR regime, $\gamma_S=\gamma_R=\gamma\to\infty$.
This is done in the following theorem.

\begin{theorem}\label{theorem9}
For a given average delay constraint, $E\{T\}$, the maximal throughput $\tau$ and multiplexing rate $r$ of mixed rate transmission, for $\gamma_S=\gamma_R=\gamma\to\infty$, are given by
\begin{eqnarray}
    \tau&\to& S_0 \left(1-\frac{1}{2 E\{T\}}\right),\quad \textrm{as }\gamma\to\infty\;. \label{t_b_N-1_2}\\
r&\to& 1-\frac{1}{2 E\{T\}}  ,\quad \textrm{as }\gamma\to\infty\;. \label{t_b_N-1_22121}
\end{eqnarray}
\end{theorem}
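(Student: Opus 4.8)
The plan is to start from the closed-form throughput of the conventional protocol in (\ref{eq_m_t_1}), namely $\tau=\frac{k}{k+n}(1-P_S)S_0$, and to determine the pair $(k,n)$ that maximizes it subject to the prescribed average delay $E\{T\}$ in the limit $\gamma\to\infty$. First I would record the two facts that govern the high-SNR regime: $P_S\to 0$, and $E\{\log_2(1+r(i))\}\to\infty$ since the relay capacity grows like $\log_2\gamma$ while $S_0$ is held fixed. The second fact has two consequences: the non-absorption condition (\ref{mix_delay_conv_cond}) is met for every fixed $k$ already with $n=1$, and, because the buffer content never exceeds $kS_0$ while $\log_2(1+r(i))\geq kS_0$ with probability tending to one, the relay empties the whole queue within a single relay slot. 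Hence, without loss of optimality, I may restrict attention to protocols that flush the buffer in the first relay slot of each cycle.

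Next I would compute the average delay of such a cycle by Little's law (\ref{delay-main}). Over one period of $k+n$ slots the queue grows linearly, taking the values $S_0,2S_0,\dots,kS_0$ at the ends of the $k$ source slots and dropping to $0$ in the first relay slot, where it then stays for the remaining $n-1$ idle relay slots. Therefore the time-averaged queue length is $E\{Q\}=\frac{1}{k+n}\sum_{j=1}^{k} jS_0$ and the arrival rate is $A=\frac{k}{k+n}(1-P_S)S_0\to\frac{k}{k+n}S_0$; the factor $k+n$ cancels and $E\{T\}=E\{Q\}/A=\frac{k+1}{2}$, independent of $n$. As a cross-check, a bit emitted in source slot $j$ waits $k+1-j$ slots, whose average over $j=1,\dots,k$ is again $\frac{k+1}{2}$. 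Inverting gives $k=2E\{T\}-1$.

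Since $E\{T\}$ fixes $k$ but leaves $n$ free, and since $\tau=\frac{k}{k+n}(1-P_S)S_0$ is strictly decreasing in $n$, the throughput is maximized by the smallest admissible value $n=1$. Substituting $k=2E\{T\}-1$ and $n=1$ and letting $P_S\to 0$ yields $\tau\to \frac{k}{k+1}S_0=\left(1-\frac{1}{k+1}\right)S_0=S_0\left(1-\frac{1}{2E\{T\}}\right)$, which is (\ref{t_b_N-1_2}). The multiplexing rate is then read off as the throughput normalized by the source rate, $r=\tau/S_0\to 1-\frac{1}{2E\{T\}}$, giving (\ref{t_b_N-1_22121}); this is the same normalization under which delay-unconstrained Conventional Relaying~1 was found to attain $r=1$ in (\ref{eq_qq-1_high}).

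I expect the main obstacle to be the first paragraph: rigorously justifying that at high SNR one may both set $n=1$ and treat the relay as clearing the entire backlog in one slot. This rests on the separation of scales $E\{\log_2(1+r(i))\}\to\infty$ with $S_0$ and $k$ fixed, so that ${\rm Pr}\{\log_2(1+r(i))\geq kS_0\}\to 1$; once this is in place, the delay computation and the optimization over $n$ are routine. A secondary point worth a remark is the integrality of $k=2E\{T\}-1$: the identities hold exactly when this is a positive integer, and are otherwise understood in the usual asymptotic sense for large $k$.
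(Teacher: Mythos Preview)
Your proposal is correct and follows essentially the same route as the paper: high-SNR limits $P_S\to 0$ and $E\{\log_2(1+r(i))\}\to\infty$, a single relay slot suffices to flush the buffer, the queue evolution over a period gives $E\{Q\}$ and $A$, and Little's law yields $E\{T\}=(k+1)/2$, which is then inverted and combined with $\tau=\frac{k}{k+1}S_0$. Your treatment is in fact slightly more complete than the paper's, since you first keep $n$ general, observe that $E\{T\}=(k+1)/2$ is independent of $n$, and then optimize over $n$ to conclude $n=1$; the paper simply sets $n=1$ at the outset on physical grounds.
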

\begin{proof}
Please refer to Appendix~\ref{app_theorem9}.
\end{proof}
\begin{remark}
Theorem \ref{theorem9} reveals that, as expected from the discussion of the case without delay constraints, delay constrained mixed rate transmission approaches a multiplexing gain of one as the allowed average delay increases.
\end{remark}
%%%%%%%%%%%%%%%%%%%%%%%%%%%%%%%%%%
\section{Numerical and Simulation Results}\label{numerics}
%%%%%%%%%%%%%%%%%%%%%%%%%%%%%%%%%%
In this section, we evaluate the performance of the proposed fixed rate and mixed rate transmission schemes for Rayleigh fading. We also confirm some of our analytical results with computer simulations.   We note that our analytical results are valid for 
$N\to\infty$. For the simulations, $N$ has to be finite, of course, and we adopted $N=10^7$ in all simulations. Furthermore,  in the simulations for buffer-aided relaying without delay constraints, we neglected transient effects caused by the filling and emptying 
of the buffer at the beginning and the end of transmission. This allows us to verify the theoretical results for this idealized case, which constitute performance upper bounds for the delay constrained case. On the other hand, for the practical delay constrained 
case transient effects are taken into account in our simulations. In particular, we assume that the buffer is empty at the beginning of transmission and, once the source has ceased to transmit,  the relay transmits the queued information in its buffer until the 
buffer is empty. In this case, the simulated performance of the proposed protocols takes into account all transmitted bits. However, our results show that for the adopted value of $N$, transient effects (which are not included in our theoretical expressions, which
where derived for $N\to\infty$) do not have a noticeable impact of on the performance of the proposed delay constrained protocols and there is an excellent agreement between the simulated and theoretical performance results, cf.~Figs.~3-5. 
%%%%%%%%%%%%%%%%%%%%%%%%%%%%%%%%%%
\subsection{Fixed Rate Transmission}
%%%%%%%%%%%%%%%%%%%%%%%%%%%%%%
For fixed rate transmission, we evaluate the proposed link selection protocols for transmission with and without delay constraints. Throughout this section we assume that source and relay transmit with identical rates, i.e., $S_0=R_0$. 
%%%%%%%%%%%%%%%%%%%%%%%%%%%%%%%%%%%%%%%%%%%%
\subsubsection{Transmission Without Delay Constraints}
%%%%%%%%%%%%%%%%%%%%%%%%%%%%%%%%%%%%%%%%%%%%%%%%%%%%%%%%%%%%%%%%%%%%%%%%%%%%%%%%%%%%
In Fig.~\ref{fig2a}, we show  the ratio of the  throughputs achieved with the proposed buffer-aided relaying protocol with adaptive link selection and Conventional Relaying 1 as a function of the transmit SNR $\gamma_S=\gamma_R=\gamma$ for $\bar \Omega_R=1$,
$S_0=R_0=2$ bits/slot, and different values of $\bar\Omega_S$. The throughput of buffer-aided relaying, $\tau$, was computed based on (\ref{max-tau-1}), (\ref{max-tau-2}), and (\ref{max-tau-3}) in Theorem \ref{theorem2}, while the  throughput of Conventional 
Relaying 1, $\tau_{\rm conv,1}^{\rm fixed}$, was obtained based on (\ref{eq_qq-conv_1a}).  Furthermore, we also show simulation results where the throughput of the buffer-aided relaying protocol was obtained via Monte Carlo simulation. From Fig.~\ref{fig2a} we
observe that theory and simulation are in excellent agreement. Furthermore, Fig.~\ref{fig2a} shows that except for $\bar\Omega_S=\bar\Omega_R$ the proposed link adaptive relaying scheme achieves its largest gain for medium SNRs. For very high SNRs, both links are 
never in outage and thus, Conventional Relaying 1 with optimized $\xi$ and link adaptive relaying achieve the same performance. On the other hand, for very low SNR, there are very few transmission opportunities on both links as the links are in outage most of the time.
The proposed link adaptive protocol can exploit all of these opportunities. In contrast, for $\bar\Omega_S=\bar\Omega_R$, Conventional Relaying 1 choses $\xi=0.5$ and will miss half of the transmission opportunities by selecting the link that is in outage instead of the link
that is not in outage because of the pre-determined schedule for link selection. On the other hand, if $\bar\Omega_S$ and $\bar\Omega_R$ differ significantly, Conventional Relaying 1 selects $\xi$ close to 0 or 1 (depending on which link is stronger) and the loss 
compared to the link adaptive scheme becomes negligible. 

\begin{figure}
\includegraphics[width=3.75in]{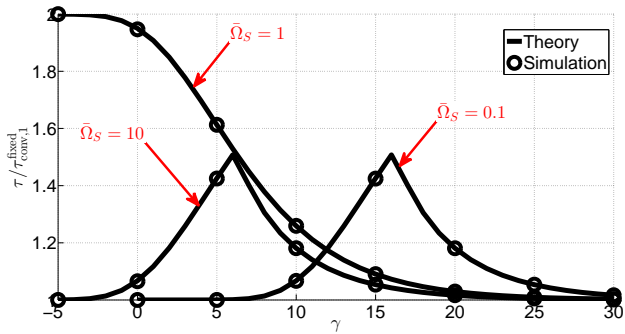}
%\vspace*{-7mm}
\caption{Ratio of the throughputs of buffer-aided relaying and Conventional Relaying 1, $\tau/\tau_{\rm conv,1}^{\rm fixed}$, vs.~$\gamma$. Fixed rate transmission without delay constraints. $\gamma_S=\gamma_R=\gamma$, $S_0=R_0=2$ bits/slot, and $\bar \Omega_R=1$.} \label{fig2a}
%\vspace*{-5mm}
\end{figure}
\begin{figure}
\includegraphics[width=3.75in]{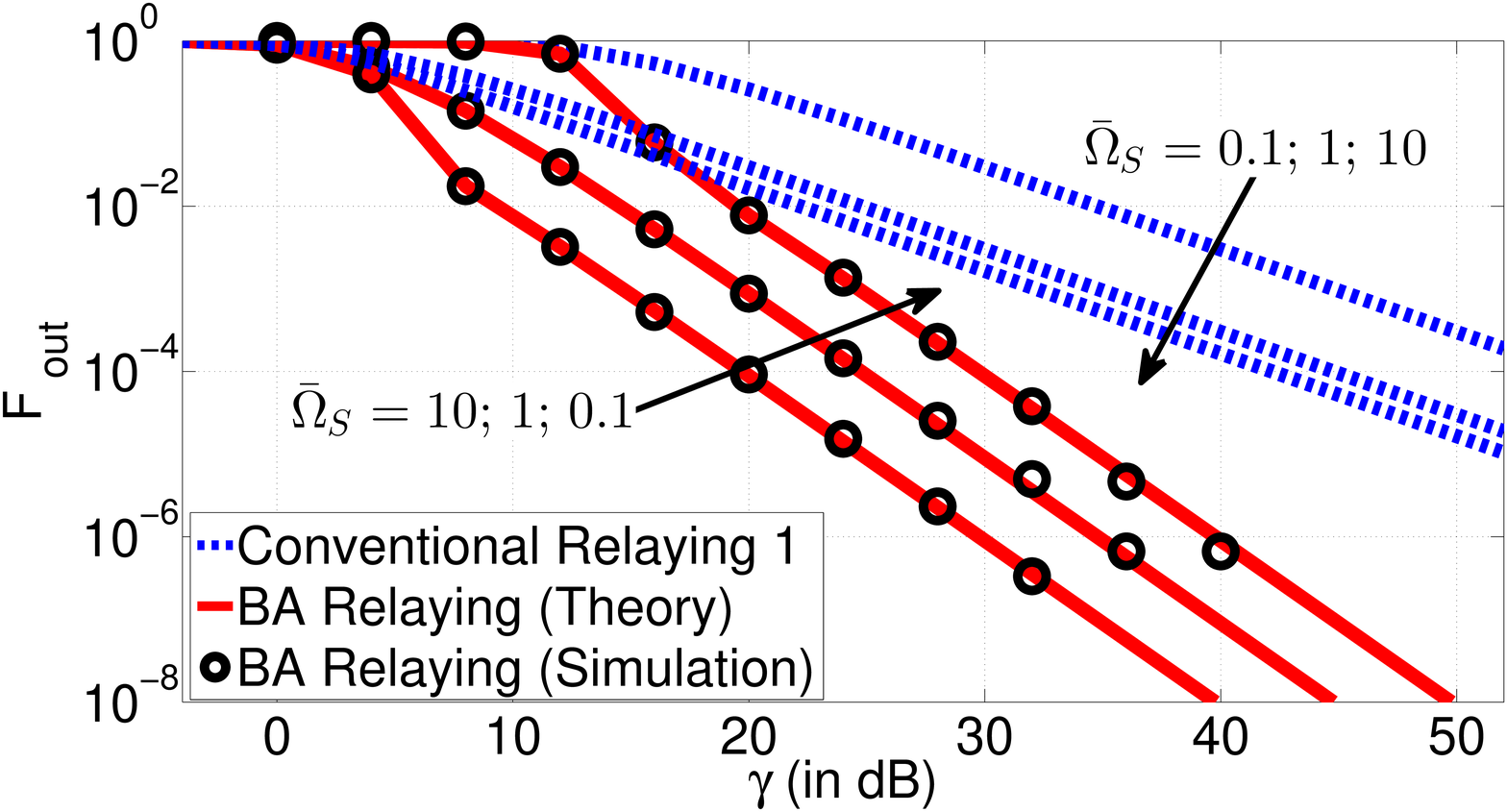}
%\vspace*{-7mm}
\caption{Outage probability of buffer-aided (BA) relaying and Conventional Relaying 1 vs.~$\gamma$. Fixed rate transmission without delay constraints. $\gamma_S=\gamma_R=\gamma$, $S_0=R_0=2$ bits/slot, and $\bar \Omega_R=1$.} \label{fig2}
%\vspace*{-5mm}
\end{figure}

In Fig.~\ref{fig2},  we show the outage probability, $F_{\rm out}$, for the proposed buffer-aided relaying protocol with adaptive link selection and Conventional Relaying 1. The same channel and system parameters as for Fig.~\ref{fig2a} were adopted
for Fig.~\ref{fig2} as well. For buffer-aided relaying with adaptive link selection,  $F_{\rm out}$ was obtained from  (\ref{OP-non-delay}) and confirmed by Monte Carlo simulations. For conventional relaying, $F_{\rm out}$ was obtained from (\ref{op_conv_1}). 
As expected from Lemma \ref{lemma_3}, buffer-aided relaying achieves a diversity gain of two, whereas conventional relaying achieves only a diversity gain of one, which underlines the superiority of buffer-aided relaying with adaptive link selection.
%%%%%%%%%%%%%%%%%%%%%%%%%%%%%%%%%%%%%%%%%%%%%%%%%%%%%%%%%%%%%%%%%%%%%%%%%%%%%%%%%%%%
\subsubsection{Transmission With Delay Constraints}\label{sec-numerics-delay}
%%%%%%%%%%%%%%%%%%%%%%%%%%%%%%%%%%%%%%%%%%%%%%%%%%%%%%%%%%%%%%%%%%%%%%%%%%%%%%%%%%%%
In Fig.~\ref{fig_delay_2a}, we show the throughput of buffer-aided relaying with adaptive link selection as  a function of the transmit SNR $\gamma_S=\gamma_R=\gamma$ for fixed rate transmission with different constraints on the average 
delay $E\{T\}$. The theoretical curves for buffer-aided relaying were obtained from the expressions given in Lemma \ref{lemma5} for throughput and the average delay. For comparison, we also show the throughput of buffer-aided relaying with adaptive link 
selection and without delay constraint (cf.~Theorem \ref{theorem2}), and the throughput of Conventional Relaying 2  given by (\ref{trup_conv_2}). These two schemes introduce an infinite delay, i.e., $E\{T\}\to\infty$ as $N\to\infty$, and a delay of one time slot, 
respectively. In the low SNR regime, the proposed buffer-aided relaying scheme with adaptive link selection cannot satisfy all delay requirements as expected from Lemma~\ref{lemma5a}. Hence, for finite delays, the throughput curves in 
Fig.~\ref{fig_delay_2a} do not extend to low SNRs. Nevertheless, as the affortable delay increases, the throughput for delay constrained transmission approaches the throughput for delay unconstrained transmission for sufficiently high SNR. 
Furthermore, the performance gain compared to Conventional Relaying 2 is substantial even for the comparatively small average delays $E\{T\}$ considered in Fig.~\ref{fig_delay_2a}.

In Fig.~\ref{fig_delay_2}, we show the outage probability, $F_{\rm out}$, for the same schemes and parameters that were considered in Fig.~\ref{fig_delay_2a}. For buffer-aided relaying with adaptive link selection,  
the theoretical results shown in Fig.~\ref{fig_delay_2} were obtained from (\ref{OP-delay-1-and-2})  and (\ref{OP-delay-3}). These theoretical results are confirmed by the Monte Carlo simulation results also shown in 
Fig.~\ref{fig_delay_2}. Furthermore, the curves for transmission without delay constraint (i.e., $E\{T\}\to\infty$ as $N\to\infty$) were computed from (\ref{OP-non-delay}), and for Conventional Relaying 2, we used (\ref{op_conv_2}). 
In addition, we have included in Fig.~\ref{fig_delay_2} the outage probability of the buffer-aided relaying protocol proposed in \cite[Section V.C]{globe11}. The results for the latter protocol were obtained via Monte Carlo simulation.
Fig.~\ref{fig_delay_2} shows that even for an average delay as small as $E\{T\}=1.1$ slots, the proposed buffer-aided relaying protocol with adaptive link selection outperforms Conventional Relaying 2. Furthermore,  as expected from 
Theorem \ref{theo-outage-high-snr}, buffer-aided relaying with adaptive link selection achieves a diversity gain of two when the average delay is larger than three time slots (e.g.,~$E\{T\}=3.1$ time slots in Fig.~\ref{fig_delay_2} ). 
This leads to a large performance gain over conventional relaying which achieves only a diversity gain of one. Finally, note that even for  $E\{T\}=3.1$ the coding gain loss is very small compared to the case of $E\{T\}\to\infty$. This is in
 stark contrast to the protocol proposed in  \cite[Section V.C]{globe11}, which suffers from a loss in diversity even for an average delay of $E\{T\}=50$.

\begin{figure*}[!t]
\centering
\includegraphics[width=6in]{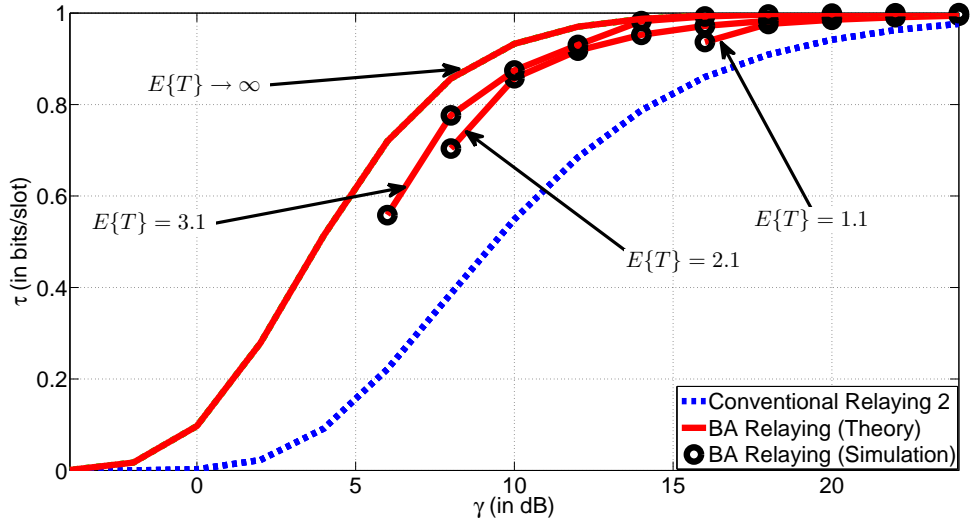}
%\vspace*{-7mm}
\caption{Throughputs of buffer-aided  (BA) relaying and Conventional Relaying 2 vs.~$\gamma$. Fixed rate transmission with delay constraints.  $\gamma_S=\gamma_R=\gamma$, $S_0=R_0=2$ bits/slot, $\bar \Omega_R=1$, 
and $\bar\Omega_S=1$.} \label{fig_delay_2a}
%\vspace*{-5mm}
\end{figure*}
\begin{figure*}[!t]
\includegraphics[width=6in]{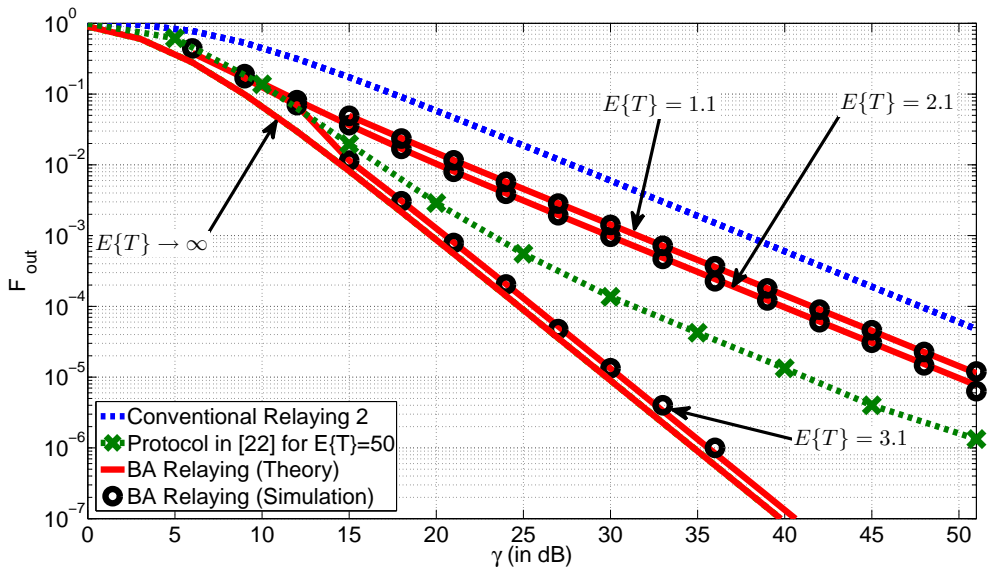}
\centering
%\vspace*{-7mm}
\caption{Outage probability of buffer-aided (BA) relaying, Conventional Relaying 2, and the protocol proposed in \cite[Section V.C]{globe11} vs.~$\gamma$. Fixed rate transmission with delay constraints.  $\gamma_S=\gamma_R=\gamma$, $S_0=R_0=2$ bits/slot, $\bar \Omega_R=1$, 
and $\bar\Omega_S=1$.} \label{fig_delay_2}
%\vspace*{-5mm}
\end{figure*}
 
\begin{remark}
For the simulation results shown in Figs.~\ref{fig_delay_2a} and \ref{fig_delay_2}, we adopted a relay with a buffer size of $L=60$ packets which leads to a negligible probability of dropped packets. For example, for $\gamma=45$ dB, the probability of a 
full buffer, ${\rm Pr}\{Q=LR_0  \}$, is bounded by ${\rm Pr}\{Q=LR_0  \}<10^{-60}$. This also supports the claim in the proof of Theorem \ref{theo-outage-high-snr} that for  large enough buffer sizes the probability of dropping a packet due to a buffer overflow
becomes negligible.
\end{remark}
 %%%%%%%%%%%%%%%%%%%%%%%%%%%%%%%%%%%%%%%%%%%%%%%%%%%%%%%%%%%%%%%%%%%%%%%%%%%%%%%%%%%%%
\subsection{Mixed Rate Transmission}
%%%%%%%%%%%%%%%%%%%%%%%%%%%%%%%%%%%%%%%%%%%%%%%%%%%%%%%%%%%%%%%%%%%%%%%%%%%%%%%%%%%%%
In this section, we investigate the achievable throughput for mixed rate transmission. For this purpose, we consider again the delay constrained and the delay unconstrained cases separately. 
%%%%%%%%%%%%%%%%%%%%%%%%%%%%%%%%%%%%%%%%%%%%%%%%%%%%%%%%%%%%%%%%%%%%%%%%%%%%%%%%%%%%%
\subsubsection{Transmission Without Delay Constraints}
%%%%%%%%%%%%%%%%%%%%%%%%%%%%%%%%%%%%%%%%%%%%%%%%%%%%%%%%%%%%%%%%%%%%%%%%%%%%%%%%%%%%%
In Fig.~\ref{fig-pa}, we compare the throughputs of buffer-aided relaying with adaptive link selection and Conventional Relaying 1. In both cases, we consider the cases with and without power allocation. The theoretical results shown
in Fig.~\ref{fig-pa} for the four considered schemes were generated based on Theorem~\ref{theorem7}/Lemma~\ref{lemma7}, Theorem~\ref{theorem8}/Lemma~\ref{lemma8}, (\ref{eq_qq-1}), (\ref{t-conv-mixed-2}), and 
(\ref{eq_qq-1}), (\ref{t-conv-mixed-2pa}). The transmit SNRs of both links are identical, i.e., $\gamma_S=\gamma_R=\Gamma$, $S_0=2$ bits/slot, $\bar\Omega_S=10$, and $\bar\Omega_R=1$.  As can be observed from 
Fig.~\ref{fig-pa}, for both buffer-aided relaying with adaptive link selection and Conventional Relaying 1, power allocation is beneficial only for low to moderate SNRs. Both schemes can achieve a throughput of $S_0$ 
bits/slot in the high SNR regime. However, adaptive link selection achieves a throughput gain compared to Conventional Relaying 1 in the entire considered SNR range.
\begin{figure*}
\centering
\includegraphics[width=6in]{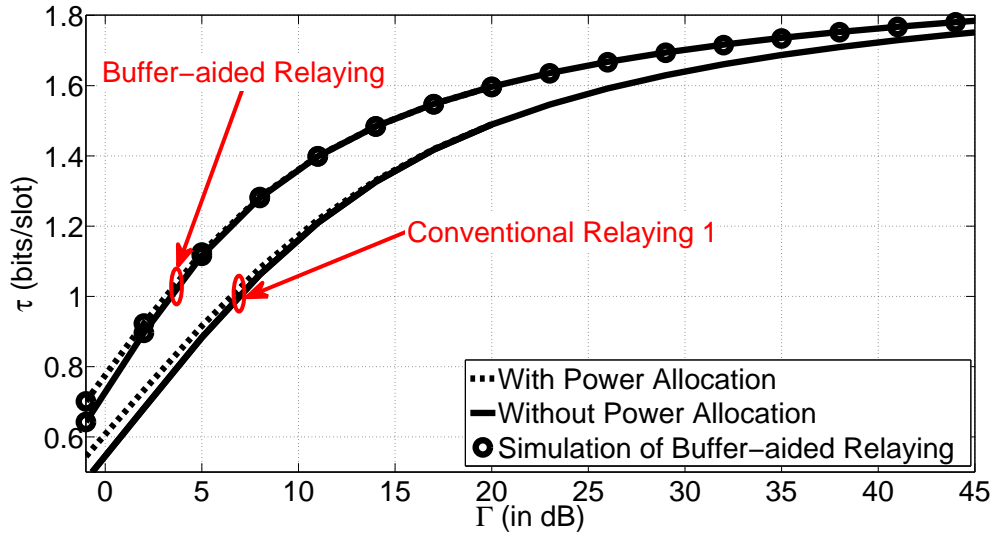}
%\vspace*{-7mm}
\caption{Throughput of buffer-aided relaying with adaptive link selection and  Conventional Relaying  1 vs.~$\Gamma$. Mixed rate transmission without delay constraints. $\bar\Omega_S=10$, $\bar\Omega_R=1$, and $S_0=2$ bits/slot.}  \label{fig-pa}
%\vspace*{-1mm}
\end{figure*}

%%%%%%%%%%%%%%%%%%%%%%%%%%%%%%%%%%%%%%%%%%%%%%%%%%%%%%%%
\subsubsection{Transmission with Delay Constraints}
%%%%%%%%%%%%%%%%%%%%%%%%%%%%%%%%%%%%%%%%
\begin{figure*}[!t]
\includegraphics[width=6in]{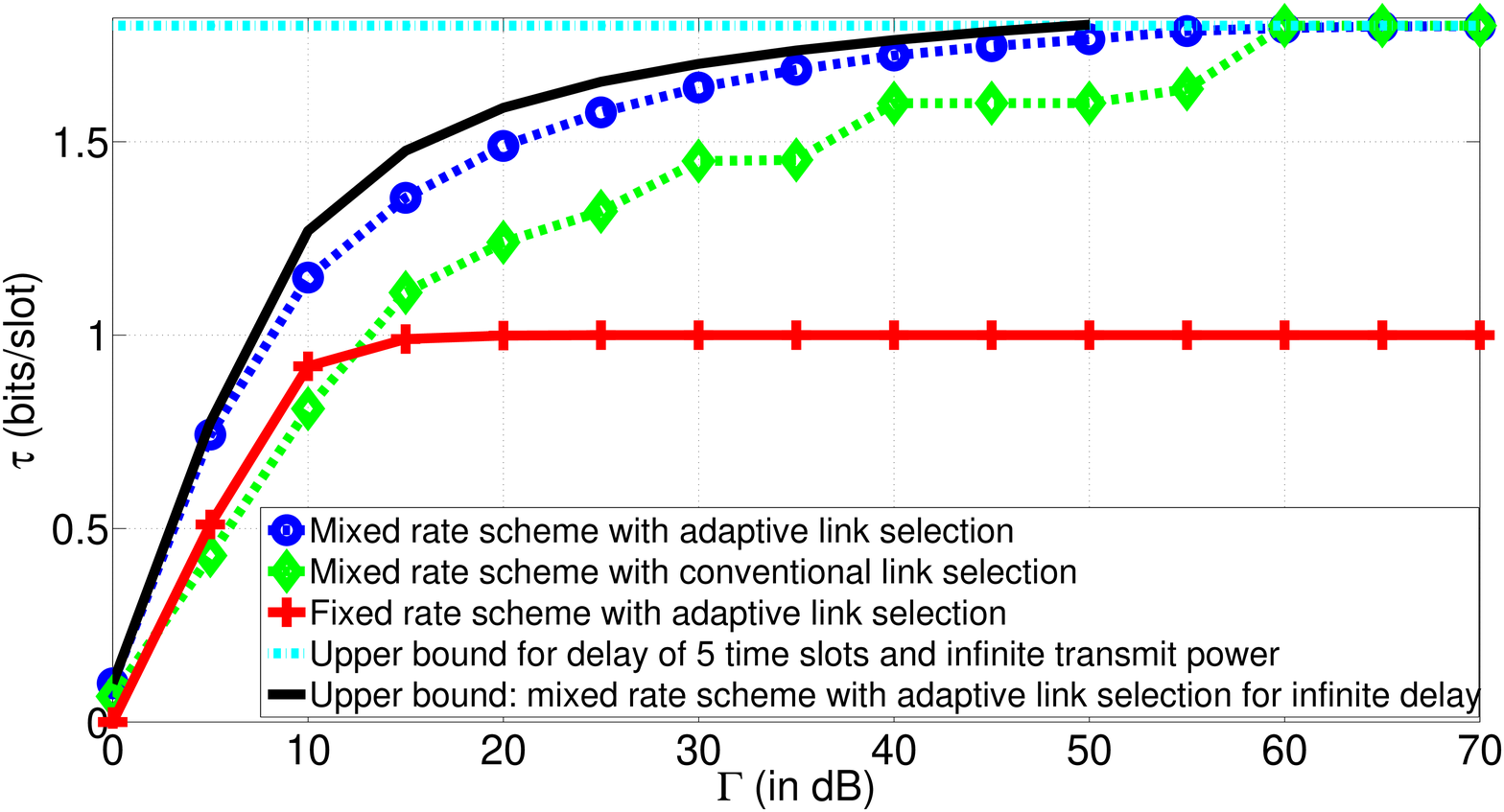}
\centering
\caption{Throughput  of buffer-aided relaying with adaptive link selection and  conventional relaying  vs.~$\Gamma$. Mixed rate and fixed rate transmission with  delay constraint.  $E\{T\}=5$ time slots, $\gamma_S=\gamma_R=\Gamma$, $S_0=2$ bits/slot, 
and $\bar\Omega_S=\bar\Omega_R=1$.}  \label{fig4}
\vspace*{-1mm}
\end{figure*}
In Fig.~\ref{fig4}, we compare the throughputs of various mixed rate and fixed rate transmission schemes for a maximum average delay of $E\{T\}=5$ time slots and $S_0=2$ bits/slot.  The transmit SNRs of both links are identical, i.e., $\gamma_S=\gamma_R=\Gamma$,  $\bar\Omega_S= \bar\Omega_R=1$.
For mixed rate transmission, we simulated both the buffer-aided relaying protocol with adaptive link selection described in Proposition~\ref{prep-4} and the conventional relaying protocol described in Proposition~\ref{pr_3}.
For fixed rate transmission, we chose $R_0=S_0=2$ bits/slot and included results for buffer-aided relaying with adaptive link selection obtained based on Lemma~\ref{lemma5}. Furthermore, for mixed rate transmission, we also show 
the maximum achievable throughput of buffer-aided relaying with adaptive link selection in the absence of delay constraints (as given by Theorem~\ref{theorem7}/Lemma~\ref{lemma7}) and the maximum throughput achievable for a delay 
constraint of $E\{T\}=5$ time slots and infinite transmit power (as given by (\ref{t_b_N-1_2})). Fig.~\ref{fig4} reveals that for mixed rate transmission the protocol with adaptive link selection proposed in Proposition~\ref{prep-4} is superior to  
the conventional relaying scheme proposed in Proposition~\ref{pr_3}, and for high SNR, both protocols reach the upper bound for mixed rate transmission under a delay constraint  given by (\ref{t_b_N-1_2}). Furthermore, Fig.~\ref{fig4} also shows that mixed rate transmission is superior
to fixed rate transmission since the former can exploit the additional flexibility afforded by having CSIT for the $\cal R$-$\cal D$ link. For example, for $\Gamma=30$ dB, mixed rate transmission with adaptive link selection achieves a throughput 
gain of 65 $\%$  compared to fixed rate transmission, and even conventional link selection  still achieves a gain of $45 \%$. Fig.~\ref{fig4} also shows that even in the presence of severe delay constraints mixed rate transmission can significantly reduce 
the throughput loss caused by half-duplexing compared to full-duplexing, whose maximum throughput is $S_0=2$ bits/slot.\footnote{We note that for transmitting and receiving in the same time slot and the same frequency band, a full-duplex relay would need  two antennas, one for transmission and one for reception \cite{jain2011practical}, whereas the half-duplex relay considered in this paper only requires one antenna which can be used for reception and transmission in different time slots.
However, a decode-and-forward full-duplex relay can retransmit the packet received in the current time slot in the following time slot and has to store it only for one time slot.}
%%%%%%%%%%%%%%%%%%%%%%%%%%%%%%%%%%%%%%%%%%%%%%%%%%%%%%%%%%%%%%%%%%%%%%%%%%%%%%%%%%%%
\section{Conclusions}\label{conclude}
%%%%%%%%%%%%%%%%%%%%%%%%%%%%%%%%%%%%%%%%%%%%%%%%%%%%%%%%%%%%%%%%%%%%%%%%%%%%%%%%%%%%
In this paper, we have considered a three-node decode-and-forward relay system comprised of a source, a half-duplex relay with a buffer, and a destination, where the direct source-destination link is not available or not used. 
We have investigated both fixed rate transmission, where
source and relay do not have full CSIT and are forced to transmit with fixed rate, and mixed rate transmission, where the source does not have full CSIT and transmits with fixed rate but the relay has full CSIT and transmits with variable rate. For both modes of transmission,
we have derived the throughput-optimal buffer-aided relaying protocols with adaptive link selection and the resulting throughputs and outage probabilities. Furthermore, we could show that buffer-aided relaying with adaptive link selection leads to
substantial performance gains compared to conventional relaying with non-adaptive link selection. In particular, for fixed rate transmission, buffer-aided relaying with adaptive link selection achieves a diversity gain of two, whereas conventional relaying is 
limited to a diversity gain of one. For mixed rate transmission, both buffer-aided relaying with adaptive link selection and a newly proposed conventional relaying scheme with non-adaptive link selection have been shown to overcome the half-duplex loss typical
for wireless relaying protocols and to achieve a multiplexing gain of one. Since the proposed throughput-optimal protocols introduce an infinite delay, we have also proposed modified protocols for delay constrained transmission and have investigated the resulting 
throughput-delay trade-off. Surprisingly, the diversity gain of fixed rate transmission with buffer-aided relaying is also observed for delay constrained transmission as long as the average delay exceeds three time slots. Furthermore, for mixed rate 
transmission, for an average delay $E\{T\}$, a multiplexing gain of $r=1-1/(2E\{T\})$ is achieved even for conventional relaying. 
%%%%%%%%%%%%%%%%%%%%%%%%%
\begin{appendix}
%%%%%%%%%%%%%%%%%%%%%%%%%
%\subsection{Discussion about $F_{\rm out}$}\label{app_f_out}
 %{\bf
%For a communications system, assume that the throughput in the absence of outages, denoted by $\tau_0$, and the outage probability, denoted by $P_{\rm out}$, are known. Then the maximum expected throughput in the presence of outages is easy to compute using $\tau_m=\tau_0(1-P_{\rm out})$. Equivalently, the following   statement should also be true: given that $\tau_m$ and $\tau_0$ are the maximum expected throughputs  in the presence and absence of outages, respectively, the outage probability, $P_{\rm out}$, can be found as $P_{\rm out}=1-\tau_m/\tau_0$. 

%Now,  assume a system in which the maximum throughput in the absence of outages, $\tau_0$, is known, however, the system is using some suboptimal protocol which,  in the presence of outages, achieves a  throughput denoted by $\tau$. Since  $\tau\leq\tau_m$ holds,   the following also holds 
%\begin{eqnarray}
  %  F_{\rm out}=1-\frac{\tau}{\tau_0}\geq  1-\frac{\tau_m}{\tau_0}= P_{\rm out}
%\end{eqnarray} 
%i.e., $F_{\rm out}$ is the outage probability achieved with the new protocol and is larger then or equal to $P_{\rm out}$, with equivalence holding if and only if $\tau=\tau_m$. Hence, by using $F_{\rm out}$ as a performance metric,  one can never produce a performance which is better than the %optimum performance, and therefore, the usage of $F_{\rm out}$ is justified.
%}

%%%%%%%%%%%%%%%%%%%%%%%%%%%%%%%%%%%%%%%%%%%%%%%%%%%%%%%%%%%%%%%%%%%%%
\subsection{Proof of Theorem \ref{theorem1}\label{app_for_t_1}}
We first note that, because of the law of the conservation of flow, $A\ge \tau$ is always valid and equality holds if and only if the queue is non-absorbing. 

We denote the set of indices with $d_i=1$ by $\bar I$ and the set of indices with $d_i=0$ by $I$.
Assume   that we have a link selection protocol with arrival rate $A$ and throughput $\tau$ with $A>\tau$, 
i.e., the queue is absorbing. Then, for $N\to \infty$, we have
\begin{eqnarray}
A&=&\frac{1}{N}\sum_{i\in I}(1-d_i) O_S(i) S_0 \nonumber\\
&>& \tau =\frac{1}{N}\sum_{i\in\bar I}  d_i O_R(i) \min\{R_0,Q(i-1)\}.
\label{app1}
\end{eqnarray}
From (\ref{app1}) we observe that the considered protocol cannot be optimal as the throughput can be improved by moving some of the indices $i$ in $I$ to $\bar I$ which leads to an increase of $\tau$ at the expense of a decrease of $A$. 
As we continue moving indices from $I$ to $\bar I$ we reach a point where $A=\tau$ holds. At this point, the queue becomes non-absorbing (but is at the boundary between a non-absorbing and an absorbing queue) and the throughput is
maximized. If we continue moving indices from $I$ to $\bar I$, in general, $A$ will decrease and as a consequence of the law of conservation of flow, $\tau$ will also decrease. We note that $A$ does not decrease if we move only those indices 
from $I$ to $\bar I$ for which $O_S(i)=0$ holds. In this case, $A$ will not change, and as a consequence of the law of conservation of flow, the value of $\tau$ also remains unchanged.
Note that this is used in Lemma~1.
 However, the queue is moved from the edge of non-absorption 
if $O_R(i)=1$ holds for some of the indices moved from $I$ to $\bar I$. As will be seen later, if the queue of the buffer operates at the edge of non-absorption, the throughput becomes independent of the state of the queue, which is desirable 
for analytical throughput maximization. 

In the following, we will prove that when the queue is at the edge of non-absorption the following holds
\begin{eqnarray}
   &&\hspace{-6mm} \tau= \lim_{N\to \infty} \frac{1}{N}\sum_{i=1}^N d_i O_R(i) R_0 \nonumber\\
&&\hspace{-6mm}= A= \lim_{N\to \infty} \frac{1}{N}\sum_{i=1}^N (1-d_i) O_S(i) S_0.
\end{eqnarray}

Let $\epsilon$ denote a small subset of $\bar I$ containing only indices $i$ for which $O_S(i)=1$, where $|\epsilon|/N\to 0$ for $N\to\infty$ and $|\cdot|$ denotes the cardinality of a set. Throughout the remainder of this proof $N\to\infty$ is assumed.

If the queue in the buffer of the relay is absorbing, $A>\tau$ holds and on average the number of bits arriving at the queue exceed the number of bits leaving the queue. Thus, $R_0\le Q(i-1)$ holds almost always and as a result the throughput can be written as 
\begin{equation}\label{pr_eq_2}
   \tau= \frac{1}{N}\sum_{i\in \bar I } O_R(i) \min\{R_0,Q(i-1)\} = \frac{1}{N} \sum_{i\in \bar I} O_R(i) R_0 .
\end{equation}

Now,  we assume that the queue is at the edge of non-absorption. That is $A=\tau$ holds but  moving the small fraction of indices in $\epsilon$, where $|\epsilon|/N\to 0$, from $\bar I$ to $I$ will make  the queue an absorbing queue with $A>\tau$. 
For this case, we wish to determine whether or not
\begin{eqnarray}\label{pr_eq_3}
  &&\hspace{-6mm} \frac{1}{N} \sum_{i\in \bar I} O_R(i) R_0> \tau =\frac{1}{N}\sum_{i\in \bar I } O_R(i) \min\{R_0,Q(i-1)\}\nonumber\\
&&\hspace{-6mm}= A=\frac{1}{N} \sum_{i\in   I }  O_S(i) S_0
\end{eqnarray}
holds. To test this, we move a small fraction $\epsilon$, where $|\epsilon|/N\to 0$, of indices from $\bar I$ to $I$, thus making the queue an absorbing queue. As a result, (\ref{pr_eq_2}) holds and (\ref{pr_eq_3}) becomes
\begin{eqnarray}\label{pr_eq_4}
  &&\hspace{-6mm} \frac{1}{N} \sum_{i\in \bar I\backslash\epsilon} O_R(i)  R_0= \tau =\frac{1}{N}\sum_{i\in \bar I\backslash \epsilon } O_R(i) \min\{R_0),Q(i-1)\}\nonumber\\
&&\hspace{-6mm}< A= \frac{1}{N}\sum_{i\in I \cup \epsilon } O_S(i) S_0.
\end{eqnarray}
From the above we conclude that if (\ref{pr_eq_2}) holds, then based on (\ref{pr_eq_3}) and (\ref{pr_eq_4}), for $|\epsilon|/N\to 0$, we must have
\begin{eqnarray}\label{pr_eq_5}
   \frac{1}{N} \sum_{i\in\bar I} O_R(i) R_0>  \frac{1}{N}\sum_{i\in   I } O_S(i) S_0
\end{eqnarray}
and
\begin{eqnarray}\label{pr_eq_6}
   \frac{1}{N} \sum_{i\in \bar I\backslash\epsilon} O_R(i) R_0 <  \frac{1}{N}\sum_{i\in   I \cup \epsilon } O_S(i) S_0.
\end{eqnarray}
However, for (\ref{pr_eq_5}) and (\ref{pr_eq_6}) to jointly hold, we require that the particular considered move of indices from $\bar I$ to $I$ causes a discontinuity in $\frac{1}{N} \sum_{i\in \bar I} O_R(i) R_0$ or/and a discontinuity in 
$\frac{1}{N}\sum_{i\in   I } O_S(i) S_0$ as $|\epsilon|/N\to 0$ is assumed. Since $S_0$ and $R_0$ are finite,  $\lim_{N\to\infty}\sum_{i\in \epsilon }S_0/N=\lim_{N\to\infty}S_0|\epsilon|/N= 0$ and $\lim_{N\to\infty}\sum_{i\in \epsilon }R_0/N=\lim_{N\to\infty}R_0|\epsilon|/N= 0$. 
Hence, such discontinuities are not possible. Therefore, at the edge of non-absorption the inequality in (\ref{pr_eq_3}) cannot hold and we must have
\begin{eqnarray}\label{pr_eq_7}
 &&\hspace{-6mm}  \frac{1}{N} \sum_{i\in \bar I} O_R(i) R_0= \tau= \frac{1}{N}\sum_{i\in \bar I } O_R(i) \min\{R_0,Q(i-1)\}\nonumber\\
&&\hspace{-6mm}=A= \frac{1}{N}\sum_{i\in   I } O_S(i) S_0.
\end{eqnarray}
Eq.~(\ref{pr_eq_7}) can be written as (\ref{cond-1}). This concludes the proof. 
%%%%%%%%%%%%%%%%%%%%%%%%%
\subsection{Proof of Theorem \ref{theorem2}\label{app_A}}
%%%%%%%%%%%%%%%%%%%%%%%%%
The Lagrangian of Problem (\ref{MPR1}) is given by
\begin{eqnarray}\label{MPR2}
    \mathcal{L}&&\hspace{-6mm}=\frac{1}{N}\sum_{i=1}^N d_i O_R(i) R_0 - \sum_{i=1}^N \tilde{\beta}_i d_i(1-d_i) \nonumber\\
&&\hspace{-6mm}-\mu \frac{1}{N}\sum_{i=1}^N  \Big[d_i O_R(i) R_0-(1-d_i) O_S(i) S_0\Big] ,
\end{eqnarray}
where $\mu$ and $\tilde{\beta}_i$ are the Lagrange multipliers. Differentiating $\mathcal{L}$ with respect to $d_i$, introducing $\beta_i=N\tilde{\beta}_i$, and setting the result to zero leads to
\begin{eqnarray}\label{eq-nekoja1}
    d_i=\frac{\beta_i+(-1+\mu)O_R(i) R_0+\mu O_S(i) S_0}{2\beta_i}.
\end{eqnarray}
For $d_i(1-d_i)=0$ to hold, we need either $d_i=0$ or $d_i=1$, which leads to two possible values for $\beta_i$:
\begin{eqnarray}
d_i=0\, \Rightarrow\,    \beta_{i,1}&=&(1-\mu)O_R(i) R_0-\mu O_S(i) S_0\\
d_i=1\, \Rightarrow\,  \beta_{i,2}&=&-\beta_{i,1}.
\end{eqnarray}
For the maximum of $\mathcal{L}$ in (\ref{MPR2}), $\beta_i\leq 0$, $\forall i$, has to hold.
Hence, we have
\begin{eqnarray}\label{link-select-1}
   d_i=\left\{
\begin{array}{cl}
1 & \textrm{if }  (1-\mu) O_R(i) R_0\geq \mu O_S(i) S_0\\
0 & \textrm{if }  (1-\mu) O_R(i) R_0\leq \mu O_S(i) S_0 .
\end{array} 
\right.
\end{eqnarray}
 Furthermore, $0\leq \mu \leq 1$ has to hold since for $\mu < 0$ and $\mu > 1$ we have always $d_i=1$ and $d_i=0$, respectively, irrespective of any non-negative values of $O_S(i) S_0$ and $O_R(i) R_0$. 

First, we consider the case $0 < \mu < 1$. The boundary values $\mu=0$ and $\mu=1$  will be investigated later. From  (\ref{link-select-1}), for $0 < \mu < 1$, we have four possibilities:
\begin{enumerate}
\item If $O_R(i)=1$ and $O_S(i)=0$, then $d_i=1$.
\item If $O_R(i)=0$ and $O_S(i)=1$, then $d_i=0$.
\item If $O_R(i)=0$ and $O_S(i)=0$, then $d_i$ can be chosen to be either $d_i=0$ or $d_i=1$ and the choice does not influence the throughput as both the source and the relay remain silent. 
\item If $O_R(i)=1$ and $O_S(i)=1$ and $\mu$ is chosen such that $0<\mu<R_0/(S_0+R_0)$ then $d_i=1$  in all time slots with $O_R(i)=1$ and $O_S(i)=1$,  and as a result, condition C1 cannot be satisfied. Similarly, if  $\mu$ is chosen such that $R_0/(S_0+R_0)<\mu<1$,
then $d_i=0$ in all time slots with $O_R(i)=1$ and $O_S(i)=1$,  and as a result condition C1 can also not be satisfied. Thus, we conclude that $\mu$ must be set to  $\mu=R_0/(S_0+R_0)$ since only in this case can $d_i$ be chosen to be either $d_i=0$ or $d_i=1$, which is
necessary for satisfying condition C1. Since for $O_R(i)=1$ and $O_S(i)=1$ neither link is in outage, $d_i$  can be chosen to be either zero or one, as long as condition C1 is satisfied. In order to satisfy C1, we propose to flip a coin and the outcome of the coin toss decides
whether $d_i=1$ or $d_i=0$. Let the coin have two outcomes $\mathcal{C}\in\{0,1\}$ with probabilities ${\rm  Pr}\{\mathcal{C}=0\}$ and ${\rm  Pr}\{\mathcal{C}=1\}$. We set  $d_i=0$ if $\mathcal{C}=0$ and $d_i=1$ if $\mathcal{C}=1$. Thus, the probabilities ${\rm  Pr}\{\mathcal{C}=0\}$ 
and ${\rm  Pr}\{\mathcal{C}=1\}$ have to be chosen such that  C1 is satisfied. 
\end{enumerate}
Choosing the link selection variable as in (\ref{sol-d-1}) and exploiting the independence of $s(i)$ and $r(i)$, condition C1 results in
\begin{eqnarray}\label{eqq-app-1}
&&\hspace{-6mm} S_0\left[ (1-P_S) P_R +(1-P_S)(1-P_R) {\rm  Pr}\{\mathcal{C}=0\}\right]\nonumber\\
&&\hspace{-6mm}
  = R_0 \left[(1-P_R) P_S +(1-P_S)(1-P_R) {\rm  Pr}\{\mathcal{C}=1\} \right].\qquad
\end{eqnarray}
From (\ref{eqq-app-1}), we can obtain the probabilities ${\rm  Pr}\{\mathcal{C}=0\}$ and ${\rm  Pr}\{\mathcal{C}=1\}$, which after some basic algebraic manipulations leads to (\ref{find-P_C-1}). The throughput is given by the right (or left) hand side of (\ref{eqq-app-1}),
which leads to (\ref{max-tau-1}).

For (\ref{find-P_C-1}) to be valid, ${\rm  Pr}\{\mathcal{C}=0\}$ and ${\rm  Pr}\{\mathcal{C}=1\}$ have to meet $0 \leq{\rm  Pr}\{\mathcal{C}=0\}\leq 1$ and $0 \leq{\rm  Pr}\{\mathcal{C}=1\}\leq 1$, which leads to the conditions 
\begin{eqnarray}
  S_0 (1-P_S)-(1-P_R)P_S R_0&\geq& 0\label{cond-001}\\
 R_0 (1-P_R)-(1-P_S)P_R S_0&\geq& 0  \label{cond-002}.
\end{eqnarray}
Solving (\ref{cond-001}) and (\ref{cond-002}), we obtain that for the link selection variable $d_i$ given in (\ref{sol-d-1}) to be valid, condition (\ref{cond-PS-PR-1}) has to be fulfilled. 

Next, we consider the case where $\mu=0$. Inserting $\mu = 0$ in (\ref{link-select-1}), we obtain three possible cases:
\begin{enumerate}
\item If $O_R(i)=1$, then $d_i=1$.
\item If $O_R(i)=0$ and $O_S(i)=0$, then $d_i$ can be chosen to be either $d_i=0$ or $d_i=1$ and the choice has no influence on the throughput. 
\item If $O_R(i)=0$ and $O_S(i)=1$, then $d_i$ can be chosen to be either $d_i=0$ or $d_i=1$ as long as condition C1 is satisfied. Similar to before, in order to satisfy C1, we propose to flip a 
coin and the outcome of the coin flip determines whether $d_i=1$ or $d_i=0$. 
\end{enumerate}
Choosing the link selection variable as in (\ref{sol-d-2}) and exploiting the independence of $s(i)$ and $r(i)$, condition C1 can be rewritten as
\begin{eqnarray}\label{eqq-app-2}
S_0 P_R(1-P_S) {\rm Pr}\{\mathcal{C}=0\} =R_0(1-P_R).
\end{eqnarray}
After basic manipulations (\ref{eqq-app-2}) simplifies to (\ref{find-P_C-2}). The throughput is given by the right (or left) hand side of (\ref{eqq-app-2}) and can be simplified to (\ref{max-tau-2}). Imposing again the conditions
$0 \leq{\rm  Pr}\{\mathcal{C}=0\}\leq 1$ and $0 \leq{\rm  Pr}\{\mathcal{C}=1\}\leq 1$, we find that for $\mu=0$, (\ref{cond-001}) still has to hold but (\ref{cond-002}) can be violated, which is equivalent to the new condition
\begin{eqnarray}
   P_R> \frac{R_0}{R_0+S_0(1-P_S)}.
\end{eqnarray}

For the third and final case, letting $\mu=1$ and following a similar path as for $\mu=0$ leads to (\ref{sol-d-3})--(\ref{max-tau-3}) and condition (\ref{cond-PS-PR-3}).

Finally, we have to prove that the three considered cases are mutually exclusive, i.e., for any combination of $P_S$ and $P_R$ only one case applies. Considering (33), (37), and (41) it is obvious that Cases 1 and 2 and Cases 1 and 3
are mutually exclusive, respectively. For Cases 2 and 3, the mutual exclusiveness is less obvious. Thus, we rewrite (37) and (41) as
\begin{equation}
P_R>P_{R,2} 
\label{eq37a}
\end{equation}
and 
\begin{equation}
P_R<P_{R,3},
\label{eq41a}
\end{equation}
respectively, where $P_{R,2}= R_0/(R_0+S_0(1-P_S))$ and $P_{R,3}=1+S_0/R_0-S_0/(R_0P_S)$. It can be shown that $P_{R,2}>P_{R,3}$ for any $0\le P_S<1$. Hence, for $0\le P_S<1$, at most one of (\ref{eq37a}) and (\ref{eq41a}) is satisfied and
Cases 2 and 3 are mutually exclusive. For $P_S=1$ (i.e., the ${\cal S}$-${\cal R}$ link is always in outage), we have $P_{R,2}=P_{R,3}=1$ and Case 1 and Case 3 apply for $P_R=1$ and $P_R<1$, respectively. Therefore, for any combination of 
$P_S$ and $P_R$ only one of the three cases considered in Theorem 2 applies.
This concludes the proof.
%%%%%%%%%%%%%%%%%%%%%%%%%%%%%%%%%%%%%%%%%%%%%%%%%%%%%%%%%%%%%%%%%%%%
\subsection{Proof of Lemma \ref{lemma-OP-non-delay}\label{app_B}}
%%%%%%%%%%%%%%%%%%%%%%%%%%%%%%%%%%%%%%%%%%%%%%%%%%%%%%%%%%%%%%%%%%%%
We provide two different proofs for the outage probability, $F_{\rm out}$, in (\ref{OP-non-delay}). The first proof is more straightforward and based on (\ref{OP-main}). However,  the second proof provides more insight 
into when outages occur. 

\textit{Proof 1:} In the absence of outages, the maximum achievable throughput, denoted by $\tau_0$, is given by (\ref{t_no_outage}). Thus, when (\ref{cond-PS-PR-1}) holds,  $F_{\rm out}$ is obtained by inserting (\ref{max-tau-1}) and (\ref{t_no_outage}) into (\ref{OP-main}).  
Similarly, when (\ref{cond-PS-PR-2}) holds,  $F_{\rm out}$ is obtained by inserting (\ref{max-tau-2}) and (\ref{t_no_outage}) into (\ref{OP-main}). Finally, when (\ref{cond-PS-PR-3}) holds,  $F_{\rm out}$ is obtained by inserting (\ref{max-tau-3}) and (\ref{t_no_outage}) into 
(\ref{OP-main}). After basic simplifications, (\ref{OP-non-delay}) is obtained. This concludes the proof.

\textit{Proof 2:}  The second proof exploits the fact that an outage occurs when both the source and the relay are silent, i.e., when none of the links is used.
When (\ref{cond-PS-PR-1}) holds, from $d_i$ given by (\ref{sol-d-1}), we observe that no transmission occurs only when both links are in outage. This happens 
with probability $F_{\rm out}=P_S P_R$. In contrast, when (\ref{cond-PS-PR-2}) holds,  from $d_i$ given by (\ref{sol-d-2}), we observe that no node transmits
when both links are in outage or when the $\mathcal{S}$-$\mathcal{R}$ link is not in outage, while the $\mathcal{R}$-$\mathcal{D}$ link is in outage and the 
coin flip chooses the relay for transmission. This event happens with probability $F_{\rm out}=P_S P_R+(1-P_S) P_R P_C$, which after inserting $P_C$ given 
by (\ref{find-P_C-2}) leads to (\ref{OP-non-delay}). Finally, when (\ref{cond-PS-PR-3}) holds, from $d_i$, given by (\ref{sol-d-3}), we see that no node transmits 
when both links are in outage or when the $\mathcal{S}$-$\mathcal{R}$  link is in outage, while the   $\mathcal{R}$-$\mathcal{D}$ link is not in outage and the 
coin flip chooses the source for transmission. This happens with probability $F_{\rm out}=P_S P_R+P_S (1-P_R)(1- P_C)$, which after introducing $P_C$ given 
by (\ref{find-P_C-3}) leads to (\ref{OP-non-delay}).
%%%%%%%%%%%%%%%%%%%%%%%%%%%%%%%%%%%%%%%%%%%%%%%%%%%%%%%%%%%%%%%%%%
\subsection{Proof of Lemma \ref{lemma_3}}\label{app_F}
%%%%%%%%%%%%%%%%%%%%%%%%%%%%%%%%%%%%%%%%%%%%%%%%%%%%%%%%%%%%%%%%%%
Computing the link outages in (\ref{PS}) and (\ref{PR}) for Rayleigh fading and exploiting (\ref{prob-eq}), we obtain (\ref{prob-eq-1-high-snr}) by employing $\Omega_S=\gamma\bar\Omega_S$ and $\Omega_R=\gamma\bar\Omega_R$ in the 
resulting expression and using a Taylor series expansion for $\gamma\to\infty$. As can be seen from (\ref{prob-eq-1-high-snr}), the transmit SNR $\gamma$ has an exponent of $-2$. Thus, the diversity order is two.

Moreover, for $\bar\Omega_S=\bar\Omega_R=\bar\Omega$ and $S_0=R_0$, the asymptotic expression for $F_{\rm out}$ in (\ref{prob-eq-1-high-snr}) simplifies to
\begin{eqnarray}\label{pdc}
F_{\rm out}\to\frac{(2^{R_0}-1)^2}{\bar\Omega^2 \gamma^2}, \quad\textrm{as }\gamma\to\infty .
\end{eqnarray}

Furthermore,  for $S_0=R_0$, the asymptotic throughput in (\ref{max-tau-1-high-snr}) simplifies to $\tau=R_0/2$. Thus, letting $\tau=r \log_2(1+\gamma)$ we obtain $R_0=2 r \log_2(1+\gamma)$. Inserting $R_0=2 r \log_2(1+\gamma)$ into (\ref{pdc}), the
diversity-multiplexing trade-off, $DM(r)$, is obtained as
\begin{eqnarray}
  &&\hspace{-6mm}  DM(r)=-\lim_{\gamma\to\infty} \frac{\log_2(F_{\rm out})}{\log_2(\gamma)}\nonumber\\
&&\hspace{-6mm}=-\lim_{\gamma\to\infty} \frac{2\log_2(2^{2r \log_2(1+\gamma)}-1)-2\log_2(\bar\Omega)-2\log_2(\gamma)}{\log_2(\gamma)}\nonumber\\
&&\hspace{-6mm}=2-\lim_{\gamma\to\infty} \frac{2\log_2((1+\gamma)^{2r}-1)}{\log_2(\gamma)}=2-4 r\;.
\end{eqnarray}
This completes the proof.

%%%%%%%%%%%%%%%%%%%%%%%%%%%%%%%%%%%%%%%%%%%%%%%%%%%%%%%%%%%%%%%%%%
\subsection{Proof of Theorem \ref{theorem5}}\label{proof-th-5}
%%%%%%%%%%%%%%%%%%%%%%%%%%%%%%%%%%%%%%%%%%%%%%%%%%%%%%%%%%%%%%%%%%
Let $d_i$ be given by $(\ref{sol-d-delay-1})$.  Then, the following events are possible for the queue in the buffer:
\begin{enumerate}
\item If the buffer is empty, it stays empty with probability $P_S$ and receives one packet with probability $1-P_S$.
\item If the buffer contains one packet, it stays in the same state with probability $P_S P_R$, sends the packet with probability $P_S(1-P_R)$, and receives a new packet with probability $1-P_S$.
\item If the buffer contains more than one  packet but less than $L$ packets, it stays in the same state with probability $P_S P_R$, receives a new packet with probability $(1-P_S)P_R+(1-P_S)(1-P_R)(1-P_C)$, and sends one packet 
with probability $(1-P_R)P_S+(1-P_S)(1-P_R) P_C$.
\item If the buffer contains  $L$ packets, it stays in the same state with probability $P_S P_R+ (1-P_S)P_R+(1-P_S)(1-P_R)(1-P_C)$, and sends one packet with probability $(1-P_R)P_S+(1-P_S)(1-P_R) P_C$. 
\end{enumerate}
\begin{figure}
\includegraphics[width=3.5in]{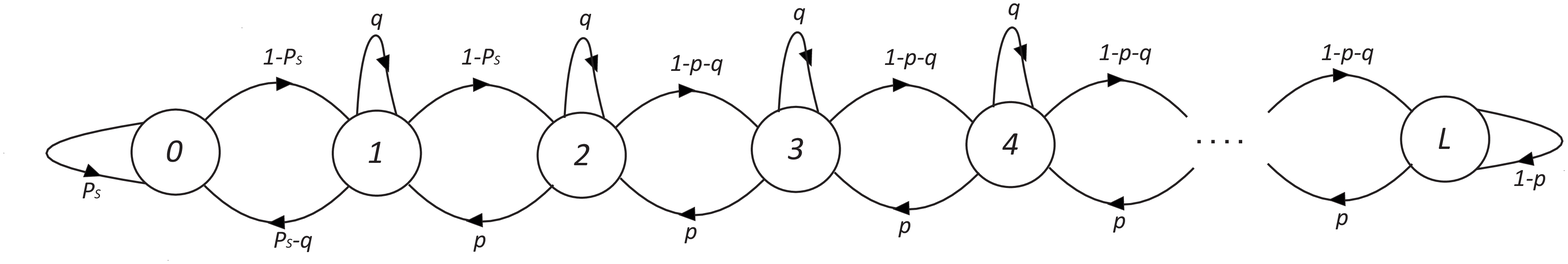}
%\vspace*{-7mm}
\caption{Markov chain for the number of packets in the queue of the buffer if the link selection variable $d_i$ is given by $(\ref{sol-d-delay-1})$.}\label{markov-chain-1}
%\vspace*{-1mm}
\end{figure}
The events for the queue of the buffer detailed above, form a Markov chain whose states are defined by the number of packets in the queue. This Markov chain is shown in Fig.~\ref{markov-chain-1}, where the probabilities $p$ and $q$ are given by 
(\ref{p&q-1}). Let $\mathbf M$ denote the state transition matrix of the Markov chain and let $m_{i,j}$ denote the element in the $i$-th row and $j$-th column of $\mathbf M$. Then, $m_{i,j}$ is the probability that the buffer will transition from having $i-1$ 
packets in its queue in the previous time slot to having $j-1$ packets in its queue in the following time slot. The non-zero elements of matrix $\mathbf M$ are given by
\begin{eqnarray}
m_{1,1}=&&\hspace{-6mm}P_S\;,\quad m_{1,2}=1-P_S\;,\quad m_{2,1}=P_S-q\;,\nonumber\\
 m_{2,3}=&&\hspace{-6mm}1-P_S  \;,\quad m_{L+1,L+1}=1-p \nonumber\\
m_{i,i+1}=&&\hspace{-6mm}1-p-q\;,\;\; m_{i+1,i}=p\;,\;\; m_{i,i}=q \;, \; {\rm for\;} i=1... L .\nonumber\\
\end{eqnarray}
Let $\mathbf{ {\rm Pr}\{Q\}}=[ {\rm Pr}\{Q=0  \},\;  {\rm Pr}\{Q=R_0  \},..., {\rm Pr}\{Q=LR_0  \}]$ denote the steady state probability vector of the considered Markov chain, where $ {\rm Pr}\{Q=kR_0  \}$, $k=0, \ldots, L$, is the probability of having $k$ packets
in the buffer. The steady state probability vector is obtained by solving the following system of equations 
\begin{eqnarray}\label{syst-eqa}
\left\{
\begin{array}{ccc}
\mathbf{ {\rm Pr}\{Q\}}\mathbf{M}&=&\mathbf{ {\rm Pr}\{Q\}}\\
\sum_{k=0}^{L} {\rm Pr}\{Q=kR_0  \}&=&1
\end{array}
    \right.\;,
\end{eqnarray}
which leads to (\ref{eq-Q-L-1}).
Using (\ref{eq-Q-L-1}) the average  queue size $E\{Q\}$ can be obtained from
\begin{eqnarray}
    E\{Q\}=R_0 \sum_{k=0}^L k {\rm Pr}\{Q=kR_0  \},
\end{eqnarray}
which leads to (\ref{mean-Q-1}). Furthermore, the average arrival rate can be found as
\begin{eqnarray}\label{arr-1}
   A=&&\hspace{-6mm}R_0\big[(1-P_S)  \big({\rm Pr}\{Q=0 \} +{\rm Pr}\{Q=R_0  \}\big) \nonumber\\
&&\hspace{-0mm} + (1-p-q) \big(1-{\rm Pr}\{Q=0 \} -{\rm Pr}\{Q=R_0  \}\nonumber\\
&&\hspace{-0mm} - {\rm Pr}\{Q=L R_0  \}\big) \big].
\end{eqnarray}
Inserting the average arrival rate given by (\ref{arr-1}) and the average queue size given by (\ref{mean-Q-1}) into (\ref{delay-main}) yields the average delay in (\ref{delay-1}).

For the case when $d_i$ is given by either (\ref{sol-d-delay-2}) or (\ref{sol-d-delay-3}), the queue in the buffer of the relay can be modeled by the Markov chain shown in Fig.~\ref{markov-chain-2}.  If the link selection 
variable $d_i$ is given by (\ref{sol-d-delay-2}), $p$ and $q$ are given by (\ref{p&q-1}), and if the link selection variable $d_i$ is given by (\ref{sol-d-delay-3}), $p$ and $q$ are given by (\ref{p&q-2}). 
Following the same procedure as before, (\ref{eq-Q-L-2})-(\ref{trup-delay-2}) can be obtained. This completes the proof.
\begin{figure}
\includegraphics[width=3.5in]{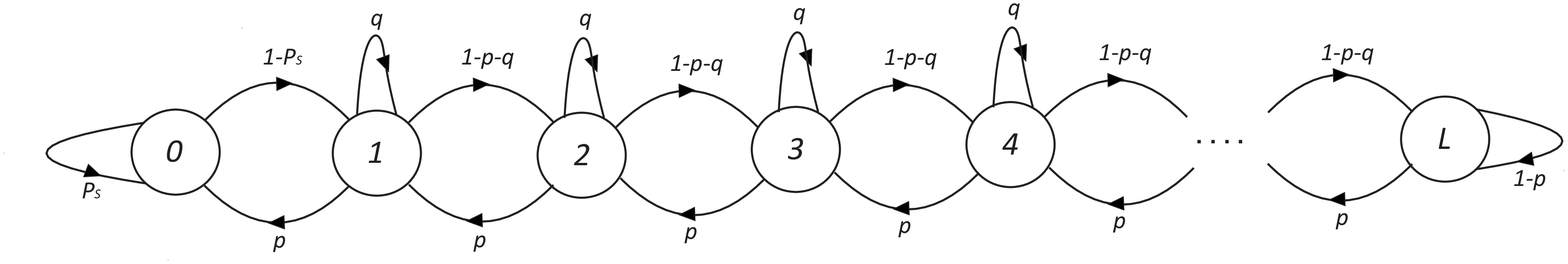}
%\vspace*{-7mm}
\caption{Markov chain for the number of packets in the queue of the buffer if the link selection variable $d_i$ is given by $(\ref{sol-d-delay-2})$ or $(\ref{sol-d-delay-3})$.}\label{markov-chain-2}
%\vspace*{-1mm}
\end{figure}

%%%%%%%%%%%%%%%%%%%%%%%%%%%%%%%%%%%%%%%%%%%%%%%%%%%%%%%%%%%%%%%%%%
\subsection{Proof of Lemma \ref{lemma5}}\label{5a}
%%%%%%%%%%%%%%%%%%%%%%%%%%%%%%%%%%%%%%%%%%%%%%%%%%%%%%%%%%%%%%%%%%
Let us first assume that $2p +q-1<0$, which is equivalent to $p<1-p-q$. Now, since $L\to \infty$,
$p^L$ goes to zero faster than $(1-p-q)^L$. Thus, by using $p^L=0$ as $L\to \infty$ in
 (\ref{delay-1}) and (\ref{delay-2}) , we obtain in both cases
\begin{eqnarray}
    E\{T\}=\frac{L}{p}-\frac{1}{1-2p-q}.
\end{eqnarray}
Thus, we conclude that if $2p +q-1<0$, $E\{T\}$ grows with $L$ and is unlimited as $L\to \infty$. Thus, if $E\{T\}$ is to be limited as $L\to \infty$, $2p +q-1>0$ has to hold. 

If $2p +q-1>0$, as $L\to \infty$, $(1-p-q)^L$ goes  to zero faster than $p^L$. Hence, (\ref{eq-Q_0-app-1})-(\ref{trup-delay-app-3}) are obtained by letting $(1-p-q)^L=0$, as $L\to \infty$, in the relevant equations in Theorem~\ref{theorem5} 
and inserting the corresponding $p$ and $q$ given by (\ref{p&q-1}) and (\ref{p&q-2}) into the resulting expressions. This concludes the proof.
%%%%%%%%%%%%%%%%%%%%%%%%%%%%
\subsection{Proof of Lemma \ref{lemma5a}}\label{proof-lemma5a}
%%%%%%%%%%%%%%%%%%%%%%%%
The minimum and maximum possible delays that the considered buffer-aided relaying system can achieve are obtained for $P_C=1$ and $P_C=0$, respectively. If $d_i$ is given by (\ref{sol-d-delay-1}), the delay is given by (\ref{delay-app-1}). 
By setting $P_C=1$ in (\ref{delay-app-1}) we obtain the minimum possible delay in (\ref{delay-app-min-1}). However, since (\ref{delay-app-1}) is valid only when $2p+q-1>0$, (\ref{delay-app-min-1}) is valid only when $P_R<1/(2-P_S)$. This condition
is obtained by inserting $P_C=1$ into the expressions for $p$ and $q$ given by (\ref{p&q-1}) and exploiting $2p+q-1>0$. On the other hand, in order to get the maximum delay given in  (\ref{delay-app-max-1}), we set $P_C=0$ in (\ref{delay-app-1}).  
The derived maximum delay is valid only when $P_S>1/(2-P_R)$, which is obtained from $2p+q-1>0$ and inserting $P_C=0$ into the expressions for $p$ and $q$ given by (\ref{p&q-1}).

A similar approach can be used to derive the delay limits $T_{\rm min,2}$, $T_{\rm max,2}$, $T_{\rm min,3}$, and $T_{\rm max,3}$ valid for the cases when $d_i$ is given by (\ref{sol-d-delay-2}) and (\ref{sol-d-delay-3}). 
This concludes the proof.
%%%%%%%%%%%%%%%%%%%%%%%%%%%%%%%%%%%%%%%%%%%%%%%%%%%%%%%%%%%%%%%%%%
\subsection{Proof of Theorem \ref{theorem6}\label{app_J}}
%%%%%%%%%%%%%%%%%%%%%%%%%%%%%%%%%%%%%%%%%%%%%%%%%%%%%%%%%%%%%%%%%%
The outage probability, $F_{\rm out}$, can be derived based on two different approaches. The first approach is straightforward and based on (\ref{OP-main}). However, the second approach provides more insight into how and when 
the outages occur and is based on counting the time slots in which no transmissions occur. In the following, we provide a proof based on the latter approach.

If $d_i$ is given by (\ref{sol-d-delay-1}) or (\ref{sol-d-delay-2}), there are four different cases where no node transmits.
\begin{enumerate}
\item The buffer is empty and the $\mathcal{S}$-$\mathcal{R}$ link is in outage.
\item The buffer in not empty nor full and both the $\mathcal{S}$-$\mathcal{R}$ and $\mathcal{R}$-$\mathcal{D}$ links are in outage.
\item The buffer is full and  the $\mathcal{S}$-$\mathcal{R}$  link is not in outage while the $\mathcal{R}$-$\mathcal{D}$ link is in outage. In this case, the source is selected for transmission but since the buffer is full, the packet is dropped.
\item  The buffer is full, both the $\mathcal{S}$-$\mathcal{R}$ and $\mathcal{R}$-$\mathcal{D}$ links are not in outage, and the source is selected for transmission based on the coin flip. In this case, since the buffer is full, the packet is dropped.
\end{enumerate}
Summing up the probabilities for each of the above four cases, we obtain (\ref{OP-delay-1-and-2}). 

If $d_i$ is given by (\ref{sol-d-delay-3}), an outage occurs in three cases: Case 1 and Case 2 as described above, and a new Case 3. In the new Case 3,  the buffer is full, the $\mathcal{S}$-$\mathcal{R}$ link is not in outage while the 
$\mathcal{R}$-$\mathcal{D}$ link is in outage, and the source is selected for transmission based on the coin flip. Summing up the probabilities for each of the three cases, we obtain (\ref{OP-delay-3}).
%%%%%%%%%%%%%%%%%%%%%%%%%%%%%%%%%%%%%%%%%%%%%%%%%%%%%%%%%%%%%%%%%%
\subsection{Proof of Theorem \ref{theo-outage-high-snr}}\label{proof-outage-high-snr}
%%%%%%%%%%%%%%%%%%%%%%%%%%%%%%%%%%%%%%%%%%%%%%%%%%%%%%%%%%%%%%%%%%
For delay constrained transmission with $E\{T\}<L$, the probability of dropped packets ${\rm Pr}\{Q=LR_0\}$ can be made arbitrarily small by increasing the buffer size $L$. Thus, for large enough $L$, we can set ${\rm Pr}\{Q=LR_0\}=0$ 
in (\ref{OP-delay-1-and-2}) and (\ref{OP-delay-3}).

In the high SNR regime, when $P_S\to 0$ and $P_R\to 0$, $P_R<1/(2-P_S)$ and $P_S<1/(2-P_R)$ always hold. Using $P_S\to 0$ and $P_R\to 0$ in the delays specified in Proposition \ref{proposition1}, we obtain the conditions $E\{T\}>3$
and $1<E\{T\}\leq 3$ if link selection variable $d_i$ is given by (\ref{sol-d-delay-1}) and (\ref{sol-d-delay-2}), respectively.

We first consider the case $E\{T\}>3$, where $d_i$ is given by (\ref{sol-d-delay-1}). Thus, the probability of the buffer being empty, ${\rm Pr}\{Q=0  \}$, is given by (\ref{eq-Q_0-app-1}). Using $P_S\to 0$ and $P_R\to 0$ in (\ref{eq-Q_0-app-1}), we obtain 
\begin{eqnarray}\label{eq-app-M-1}
    {\rm Pr}\{Q=0  \}=P_S\left(1-\frac{1}{2 P_C}\right).
\end{eqnarray} 
On the other hand, using $P_S\to 0$ and $P_R\to 0$ in the expression for $E\{T\}$ in (\ref{delay-app-1}), we obtain 
\begin{eqnarray}\label{eq-app-M-2}
  E\{T\}=\frac{1}{2 P_C -1}+2.
\end{eqnarray}
Solving (\ref{eq-app-M-2}) for $P_C$ yields  
\begin{eqnarray}\label{eq-app-M-3}
  P_C=\frac{1}{2}\left(1+\frac{1}{E\{T\}-2}\right).
\end{eqnarray}
Inserting (\ref{eq-app-M-3}) into (\ref{eq-app-M-1}) we obtain  
\begin{eqnarray}\label{eq-app-M-4}
    {\rm Pr}\{Q=0  \}=\frac{P_S}{E\{T\}-1}.
\end{eqnarray} 
Finally,  inserting (\ref{eq-app-M-4}) into (\ref{OP-delay-1-and-2}) and setting ${\rm Pr}\{Q=L R_0\}=0$, we obtain (\ref{OP-h-snr-1}).

Now, we consider the case $1<E\{T\}\leq 3$, where $d_i$ is given by (\ref{sol-d-delay-2}). Here, the probability of the buffer being empty, ${\rm Pr}\{Q=0  \}$, is given by (\ref{eq-Q_0-app-2}). For $P_S\to 0$ and $P_R\to 0$, we obtain from (\ref{eq-Q_0-app-2})
\begin{eqnarray}\label{eq-app-M-5}
    {\rm Pr}\{Q=0  \}=1-\frac{1}{2 {\rm  Pr}\{\mathcal{C}=1\}}.
\end{eqnarray} 
Furthermore, for $P_S\to 0$ and $P_R\to 0$, we obtain from (\ref{delay-app-2}) the asymptotic delay 
\begin{eqnarray}\label{eq-app-M-6}
  E\{T\}=\frac{1}{2 P_C -1}
\end{eqnarray}
or equivalently
\begin{eqnarray}\label{eq-app-M-7}
  P_C=\frac{1}{2}\left(1+\frac{1}{E\{T\}}\right).
\end{eqnarray}
Inserting (\ref{eq-app-M-7}) into (\ref{eq-app-M-5}) we obtain  
\begin{eqnarray}\label{eq-app-M-8}
    {\rm Pr}\{Q=0  \}=\frac{1}{E\{T\}+1}.
\end{eqnarray} 
Finally, inserting (\ref{eq-app-M-8}) into (\ref{OP-delay-3}) and setting ${\rm Pr}\{Q=L R_0  \}=0$, we obtain (\ref{OP-h-snr-2}). This concludes the proof.
%%%%%%%%%%%%%%%%%%%%%%%%%%%%%%%%%%
\subsection{Proof of Theorem \ref{theorem7}\label{app_K}}
%%%%%%%%%%%%%%%%%%%%%%%%%
The Lagrangian of optimization problem (\ref{MPR-mixed-1}) is given by
\vspace{-2mm}
\begin{eqnarray}\label{MPR-mixed-2}
 \mathcal{L}=&&\hspace{-6mm}\frac{1}{N}\sum_{i=1}^N d_i \log_2\big(1+r(i)\big)  -\mu \frac{1}{N}\sum_{i=1}^N  \Big[d_i\log_2\big(1+r(i)\big) \nonumber\\
&&\hspace{-6mm} -(1-d_i) O_S(i) S_0\Big] - \sum_{i=1}^N \tilde{\beta}_i d_i(1-d_i),
\end{eqnarray}
where $\mu$ and $\tilde{\beta}_i$ are the Lagrange multipliers. By differentiating $\mathcal{L}$ with respect to  $d_i$, introducing $\beta_i=N\tilde{\beta}_i$, equating the result to zero, and solving the  equation with respect to   $d_i$,  we obtain 
\begin{eqnarray}\label{link-select-mixed-non-PA}
   d_i=\left\{
\begin{array}{cl}
1 & \textrm{if }  (1-\mu) \log_2\big(1+r(i)\big)\geq \mu O_S(i) S_0\\
0 & \textrm{if }  (1-\mu) \log_2\big(1+r(i)\big) \leq \mu O_S(i) S_0 ,
\end{array} 
\right.
\end{eqnarray}
where we took into account that  $\beta_i<0$. Since for $\mu < 0$ and $\mu > 1$, we have always $d_i=1$ and $d_i=0$, respectively, irrespective of the (non-negative) values of $\log_2\big(1+r(i)\big)$  and $O_S(i) S_0$,
$0\leq \mu \leq 1$ has to hold. 

Let us first consider the case $0 < \mu < 1$ and investigate the boundary values $\mu=0$ and $\mu=1$ later. For $0 < \mu < 1$, (\ref{link-select-mixed-non-PA}) can be written in the form of (\ref{sol-d-mixed-1}) after setting $\rho=\mu/(1-\mu)$, 
where $\rho$ is chosen such that constraint C1 of problem (\ref{MPR-mixed-1}) is met. Denoting the pdfs of $s(i)$ and $r(i)$ by $f_s(s)$ and $f_r(r)$ constraint C1 of problem (\ref{MPR-mixed-1}) can be rewritten as in (\ref{eq11}), which is valid 
for $\rho$ in the range of $\rho=[0,\infty)$. Thus, by setting $\rho=\infty$ in (\ref{eq11}), we obtain the entire domain over which  (\ref{sol-d-mixed-1}) is valid, which leads to condition (\ref{cond-PS-mixed-non-pa-1}). 
 
%When (\ref{cond-PS-mixed-non-pa-1}) holds, the  fraction of throughput lost due to outages, $F_{\rm out}$, can be obtained by inserting $\tau$ given by the right (or left) hand side of (\ref{eq11}) and $\tau_0$ given in (\ref{eq_t_o_1}) into (\ref{OP-main}). 
%However,  $F_{\rm out}$ can be also upper bounded. To this end, we compare the link selection variables $d_i$ in the absence of outages given by (\ref{sol-d-mixed-no_outage}) and $d_i$ in the presence of outages given by (\ref{sol-d-mixed-1}). From
%this comparison we observe that $d_i$ given by (\ref{sol-d-mixed-1}) differs from $d_i$ given in (\ref{sol-d-mixed-no_outage}), since outages give rise to the following event: $d_i=1$ and $r(i)\leq 2^{\rho_0 S_0}-1$. The probability of this event is an 
%upper bound for the  fraction of throughput lost due to outages, $F_{\rm out}$, and leads to (\ref{op_bound}).

Next, we consider the boundary values $\mu=0$ and $\mu=1$. The boundary value $\mu=0$ or equivalently $\rho=0$ is relevant only in the trivial case when the $\mathcal{S}$-$\mathcal{R}$ link is never in outage (i.e. ~$P_S=0$) and $S_0=\infty$, where a trivial  solution is 
given by $d_1=0$ and $d_i=1$ for $i=2,\ldots,N$ and $N\to \infty$. 

The other boundary value,  $\mu=1$, is invoked only when by using $d_i$ as defined in  (\ref{sol-d-mixed-1}), constraint C1 cannot be satisfied even when $\rho\to \infty$, which is the case when condition (\ref{cond-PS-mixed-non-pa-1}) does not hold. 
Therefore, if (\ref{cond-PS-mixed-non-pa-1}) does not hold, we set  $\mu=1$ in (\ref{link-select-mixed-non-PA}) and obtain the following cases:
\begin{enumerate}
\item If $O_S(i)=1$, then $d_i=0$.
\item If $O_S(i)=0$, then $d_i$ can be chosen arbitrarily to be either zero or one as long as constraint C1 holds. 
\end{enumerate}
However, the same throughput as obtained when $O_S(i)=0$ and $d_i$ is chosen such that constraint C1 holds, can also be obtained  by choosing always  $d_i=1$ when $O_S(i)=0$ resulting in (\ref{sol-d-mixed-2}). 
The reason behind this is as follows: Assume there is a policy for which when $O_S(i)=0$, $d_i$ is chosen such that  constraint C1 holds. Now, we change $d_i$ from $0$ to $1$ for  $O_S(i)=0$. However, this change does not affect
the (average) amount of data entering the buffer. Thus, because of the law of conservation of flow, the average amount of data entering the buffer per time slot  is identical to the average amount of data leaving the buffer per time slot (the throughput),  
and the throughput is not affected by the change. 
%
%Finally, inserting $\tau$ given by (\ref{max-tau-mixed-non-pa-2}) and $\tau_0$ given by  (\ref{eq_t_o_1}) into (\ref{OP-main})  we obtain the fraction of throughput lost  due to outages as $F_{\rm out}=1-(1-P_S)/{\rm Pr}(r(i)\leq 2^{\rho_0 S_0}-1)\leq P_S$.

%%%%%%%%%%%%%%%%%%%%%%%%%%%%%%%%%%%%%%%%%%%%%%%%%%%%%%%%%%%%%%%%%%%%
\subsection{Proof of Theorem \ref{theorem8}\label{app_L}}
%%%%%%%%%%%%%%%%%%%%%%%%%%%%%%%%%%%%%%%%%%%%%%%%%%%%%%%%%%%%%%%%%%%%%
The Lagrangian of optimization problem (\ref{MPR-mixed-1-PA}) is given by
\begin{eqnarray}\label{MPR2a}
    \mathcal{L}=&&\hspace{-6mm}\frac{1}{N}\sum_{i=1}^N d_i \log_2(1+\gamma_R(i) h_R(i)) -\sum_{i=1}^N \tilde{\beta}_i d_i(1-d_i)\nonumber\\
-&&\hspace{-6mm}
\mu \frac{1}{N}\sum_{i=1}^N \Big[d_i \log_2(1+\gamma_R(i) h_R(i))-(1-d_i) O_S(i) S_0 \Big]\nonumber\\
&&- \nu \frac{1}{N}\sum_{i=1}^N \Big[ (1-d_i) O_S(i) \gamma_S+d_i \gamma_R(i)\Big],
\end{eqnarray}
where the Lagrange multipliers $\mu$, $\tilde\beta_i$, and $\nu$ are chosen such that C1, C2, and C3 are satisfied, respectively.
We again consider only the interval $0\leq \mu \leq 1$ as for $\mu < 0$ and $\mu > 1$, we have always $d_i=1$ and $d_i=0$, respectively, 
irrespective of the (nonnegative) values of $\log_2\big(1+r(i)\big)$ and $O_S(i) S_0$. 

We concentrate first on the case $0< \mu < 1$ and consider the boundary values later. By differentiating $\mathcal{L}$ with respect to $\gamma_R(i)$ and $d_i$, introducing $\beta_i=N\tilde{\beta}_i$, and setting 
the results to zero, we obtain two equations. Solving the resulting system of equations with respect to $\gamma_R(i)$ and $d_i$, and taking into account that  $\beta_i<0$, 
$0<\mu<1$, and $\nu>0$, we obtain  (\ref{power-eq-2b}) and (\ref{sol-d-mixed-PA}) after letting $\rho=\ln(2) \mu/(1-\mu)$ and $\lambda=\ln(2) \nu/(1-\mu)$, which are chosen 
such that constraints C1 and  C3 are met with equality. 
Given the pdfs $f_{h_S}(h_S)$ and $f_{h_R}(h_R)$, conditions (\ref{cond-1-mixed}) and (\ref{cond-mixed-2-PA}) 
can be directly written as (\ref{nz-1}) and (\ref{nz-1a}), respectively.  Setting $\rho\to \infty$ in (\ref{nz-1}) and (\ref{nz-1a}), 
we obtain condition (\ref{cond-PS-mixed-pa-1}) which is necessary for the validity of (\ref{sol-d-mixed-2}).

Similar to the fixed transmit power case, the boundary value $\mu=0$ is trivial. On the other hand, for $\mu=1$, we obtain that $d_i$ has to be set to $d_i=0$ when $O_S(i)=1$ and 
for $O_S(i)=0$, $d_i$ can be chosen arbitrarily. Similar to the fixed power case, we set $d_i=1$ when $O_S(i)=1$ in order to minimize the delay.
Thus, the optimal power and link selection variables are given by  (\ref{power-eq-2b-2}) and (\ref{sol-d-mixed-PA-2}), respectively, and the throughput is given by (\ref{max-tau-mixed-pa-2}).

%If (\ref{cond-PS-mixed-pa-1}) holds, the fraction of throughput lost due to outages, $F_{\rm out}$, can be obtained by inserting $\tau$ given by the right (or left) hand side of (\ref{nz-1}) and $\tau_0$ 
%given in (\ref{eq_t_o-2}) into (\ref{OP-main}). However,  $F_{\rm out}$  can be also upper bounded. To this end, we compare the link selection variables $d_i$ in the absence of outages given by 
%(\ref{sol-d-mixed-PA-o}) and $d_i$ in the presence of outages given by (\ref{sol-d-mixed-PA}). From this comparison, we observe that $d_i$ given by (\ref{sol-d-mixed-PA}) differs from $d_i$ given 
%by (\ref{sol-d-mixed-PA-o}), since because of the outages the following event occurs:  $d_i=1$ and $r(i)\leq G$. The probability of this event is an upper bound on the fraction of throughput lost 
%due to outages, $F_{\rm out}$, which leads to (\ref{op_bound-2}).

%On the other hand, if (\ref{cond-PS-mixed-pa-1}) does not hold, inserting $\tau$ given by (\ref{max-tau-mixed-pa-2}) and $\tau_0$ given by  (\ref{eq_t_o-2}) into (\ref{OP-main}),  we obtain the fraction of 
%throughput lost  due to outages as $F_{\rm out}=1-(1-P_S)/{\rm Pr}(r(i)\leq G)\leq P_S$.
%%%%%%%%%%%%%%%%%%%%%%%%%

%%%%%%%%%%%%%%%%%%%%%%%%%%%%%%%%%%%%%%%%%%%%%%%%%%%%%%%%%%%%%%%%%%%%%%%%%
\subsection{Proof of Theorem \ref{theorem9}\label{app_theorem9}}
%%%%%%%%%%%%%%%%%%%%%%%%%%%%%%%%%%%%%%%%%%%%%%%%%%%%%%%%%%%%%%%%%%%%%%%
For $\gamma_S=\gamma_R=\gamma\to\infty$, the protocol in Proposition~\ref{pr_3} is optimal in the sense that it maximizes the throughput while satisfying the average delay constraint. In particular, for high SNR in the $\mathcal{S}$-$\mathcal{R}$ link, the probability that the link
is in outage approaches zero and the relay receives $S_0$ bits per source transmission. On the other hand, the number of bits transmitted by the relay in one time slot over the $\mathcal{R}$-$\mathcal{D}$ link increases with the SNR. Thus, for sufficiently high SNR, 
the source transmits $kS_0$ bits in $k$ time slots and the relay needs just $n=1$ time slot to forward the entire information to the destination. Hence, every transmission period comprises $k+n=k+1$ time slots, where the queue length at the relay increases from $S_0$ to $kS_0$ in
the first $k$ time slots and is reduced to zero in the $(k+1)$th time slot. Hence, the average queue length, $E\{Q\}$, can be written as
\begin{eqnarray}\label{e-q-n-1-1}
    E\{Q\}\to&&\hspace{-6mm}\frac{1}{k+1}(1+2+...+k+0) S_0= \frac{1}{k+1}\frac{k(k+1)}{2} S_0\nonumber\\
=&&\hspace{-6mm} \frac{k}{2}S_0 
,\quad \textrm{as }\gamma\to\infty\;.
\end{eqnarray}
On the other hand, the arrival rate is identical to the throughput and given by (\ref{eq_m_t_1}), and for high SNR it converges to
\begin{eqnarray}\label{e-q-n-1-2}
 A=\tau\to S_0 \frac{k}{k+1}\label{r_tt1_2}\;,\quad \textrm{as }\gamma\to\infty\;.
\end{eqnarray}
Combining (\ref{delay-main}), (\ref{e-q-n-1-1}), and (\ref{e-q-n-1-2}) the average delay is found as
\begin{eqnarray}
    E\{T\}&\to&\frac{k+1}{2}\label{r_tt1_1}\;,\quad \textrm{as }\gamma\to\infty\;.
\end{eqnarray}
Finally, combining (\ref{r_tt1_2}) and (\ref{r_tt1_1})  the throughput  can be expressed as (\ref{t_b_N-1_2}), and the multiplexing gain in (\ref{t_b_N-1_22121})
follows directly.
\end{appendix}

%\renewcommand{\baselinestretch}{0.84}\normalsize\small
%%\renewcommand{\baselinestretch}{3}\normalsize\small
%\makeatletter
%\def\@listi{\leftmargin\leftmargini
%            \parsep 1\p@  \@plus2.5\p@ \@minus\p@
%            \topsep 10\p@ \@plus4\p@   \@minus6\p@
%            \itemsep-5\p@  \@plus1\p@ \@minus0\p@}
%\makeatother
%\itemsep -10mm

\bibliography{litdab}
\bibliographystyle{IEEETran}

\begin{IEEEbiographynophoto}{Nikola Zlatanov}
(S'06) was born in Macedonia. He received the Dipl.Ing. and M.S. degrees in electrical
engineering from Sts. Cyril and Methodius University, Skopje, Macedonia, in
2007 and 2010, respectively. Currently, he is working towards the Ph.D. degree
at the University of British Columbia (UBC), Vancouver, BC, Canada.
His current research interests include the general field of wireless communications  with emphasis on   buffer-aided relaying.

Mr. Zlatanov received the Four Year Doctoral Fellowship by UBC in 2010. In
2011, he received the UBC's  Killam  Doctoral Scholarship  and was awarded
Young Scientist of the Year   by the President of the Republic of Macedonia. In 2012, he received the Vanier Canada Graduate Scholarship.
\end{IEEEbiographynophoto}

\begin{IEEEbiographynophoto}{Robert Schober}
 (S'98, M'01, SM'08, F'10) was born in Neuendettelsau, Germany, in 1971. He received the Diplom (Univ.) and the Ph.D. degrees in electrical engineering from the University of Erlangen-Nuermberg in 1997 and 2000, respectively. From May 2001 to April 2002 he was a Postdoctoral Fellow at the University of Toronto, Canada, sponsored by the German Academic Exchange Service (DAAD). Since May 2002 he has been with the University of British Columbia (UBC), Vancouver, Canada, where he is now a Full Professor. Since January 2012 he is an Alexander von Humboldt Professor and the Chair for Digital Communication at the Friedrich Alexander University (FAU), Erlangen, Germany. His research interests fall into the broad areas of Communication Theory, Wireless Communications, and Statistical Signal Processing.

Dr. Schober received several awards for his work including the 2002 Heinz Maier-Leibnitz Award of the German Science Foundation (DFG), the 2004 Innovations Award of the Vodafone Foundation for Research in Mobile Communications, the 2006 UBC Killam Research Prize, the 2007 Wilhelm Friedrich Bessel Research Award of the Alexander von Humboldt Foundation, the 2008 Charles McDowell Award for Excellence in Research from UBC, a 2011 Alexander von Humboldt Professorship, and a 2012 NSERC E.W.R. Steacie Fellowship. In addition, he received best paper awards from the German Information Technology Society (ITG), the European Association for Signal, Speech and Image Processing (EURASIP), IEEE WCNC 2012, IEEE Globecom 2011, IEEE ICUWB 2006, the International Zurich Seminar on Broadband Communications, and European Wireless 2000. Dr. Schober is a Fellow of the Canadian Academy of Engineering and a Fellow of the Engineering Institute of Canada. He is currently the Editor-in-Chief of the IEEE Transactions on Communications.
\end{IEEEbiographynophoto}

% that's all folks
\end{document}